\documentclass[11pt]{article}
\usepackage[utf8]{inputenc} 
\usepackage[T1]{fontenc}   
\usepackage{textcomp}    
\usepackage[final]{graphicx}

\usepackage{tabularx} 
\usepackage{booktabs}

\usepackage{amsmath,amssymb,amscd}
\usepackage{amsthm,bbm}
\usepackage{mathtools}
\usepackage{float}
\usepackage[mathscr]{eucal}
\usepackage{color}
\usepackage{booktabs}
\usepackage{relsize}
\usepackage{subcaption} 
\usepackage{array} 
\usepackage[square,sort&compress,numbers]{natbib}
\usepackage[table]{xcolor}

\large\normalsize

\newtheorem{lemma}{Lemma}
\newtheorem{proposition}{Proposition}

\newtheorem{remark}{Remark}
\usepackage{comment}
\makeatletter
\@ifundefined{widetext}{}{}
\makeatother

\newcolumntype{L}[1]{>{\raggedright\arraybackslash}p{#1}}
\newcolumntype{C}[1]{>{\centering\arraybackslash}p{#1}}

\numberwithin{equation}{section}

\title{Multiplayer Games on Regular Graphs under Aspiration-Driven Imitation: Fixation Probabilities and the Dilution Principle
}
\author{Dhaker Kroumi$^1$\footnote{e-mail: dhaker.kroumi@kfupm.edu.sa} 
\\$^1$Department of Mathematics\\King Fahd University of Petroleum and Minerals\\Dhahran 31261, Saudi Arabia\\
 }
\date{}
\begin{document}
\maketitle


\begin{abstract}
We investigate how cooperation evolves in multiplayer games on regular graphs of degree \(k\) when individuals base their decisions on both their own aspiration levels and the success of their neighbors. We calculate explicit first-order formulas for fixation probabilities using pair approximation and diffusion methods under weak selection. These formulas separate aspiration effects from payoff effects filtered by the graph. We show that increasing the graph degree attenuates the selection effects driven by payoffs. In contrast, the contribution of the aspiration gap \(\Delta\alpha=\alpha_D-\alpha_C\) is independent of graph degree at first order. We apply the framework to public goods, stag-hunt, and snowdrift games and obtain corresponding conditions for the evolutionary success of cooperation. In the public goods and stag-hunt games, the conditions at large degree retain cost terms, so cooperation requires a sufficiently positive aspiration gap to compensate for the cost of contributing. In the snowdrift game, both benefit and cost contributions vanish in the locally tree-like regime at large degree, leaving the aspiration gap as the leading first-order contribution. Monte Carlo simulations on connected random regular graphs of degree \(k\) are consistent with the first-order predictions within their formal regime of validity, while simulations outside this regime are interpreted only as qualitative checks of the predicted trends.
\end{abstract}


\noindent \textbf{Keywords}: Fixation probability; Regular graphs; Aspiration dynamics; Diffusion approximation; Pair approximation; Evolutionary games

\noindent \textbf{Mathematics Subject Classification (2020)}: Primary 91A25; Secondary 60J70



\section{Introduction}
Understanding how cooperation persists when evolution favors individual selfishness is a fundamental challenge in evolutionary game theory. In traditional, well-mixed populations, individuals interact randomly. In these settings, cooperators pay a personal cost to generate benefits that are equally accessible to defectors. Because defectors enjoy the benefits without paying the costs, natural selection almost inevitably drives cooperation to extinction \cite{MS1982,A1984,HS1998,N2006,Nowak2006FiveRules}. However, real populations are rarely well mixed. Spatial structure and social networks fundamentally alter evolutionary outcomes because interactions are restricted to local neighborhoods. This local structure allows cooperators to form clusters, effectively insulating themselves from defectors and allowing mutual benefits to accumulate, a phenomenon known as network reciprocity. A large body of literature has explored how different network topologies and update rules either promote or inhibit this clustering, often relying on weak-selection theory and extensive simulation data \cite{NowakMay1992SpatialChaos,ST1998,HauertDoebeli2004SnowdriftInhibits,SantosPacheco2005ScaleFreeUnifying,LHN2005,OhtsukiHauertLiebermanNowak2006SimpleRule,SantosPachecoLenaerts2006HeterogeneousPNAS,SF2007,AllenNowak2014GamesOnGraphs,PercEtAl2017PhysRep,ALN2017}.

While network reciprocity is well understood for pairwise interactions, the theoretical landscape becomes much more complex when we consider multiplayer games on graphs. Many real social dilemmas involve groups rather than pairs. When placed on a network, these multiplayer interactions generate overlapping groups. An individual does not just participate in one game, but simultaneously influences and is influenced by multiple interacting neighborhoods. This overlap creates nonlinear payoff aggregation that is absent in models with two players \cite{GokhaleTraulsen2010PNAS,PenaWuArranzTraulsen2016,LiBroomDuWang2016PRE,WangPercSzolnoki2024NatCom}. Analyzing these dynamics requires moving beyond simple deterministic equations. In finite populations, evolutionary success is inherently stochastic. Therefore, success is best quantified using fixation probabilities, the likelihood that a single mutant strategy will eventually take over the entire population \cite{Nowak2004,Taylor2004finite,TNP2006}. These fixation probabilities are highly sensitive not only to the game's payoff structure but also to the microscopic update rules that dictate how individuals revise their strategies under the influence of local competition \cite{TraulsenPachecoNowak2007JTB,Tarnita2009,Sandholm2010Book,Nowak2010structured}.

More broadly, recent studies have shown that cooperation outcomes in structured populations are sensitive to the specific behavioral mechanism assumed: varying decision rules, dilemma strength, and ecological feedback can substantially reorganize or even reverse standard evolutionary predictions \cite{BegumIslamKabir2026CSFQlearning,BegumIslamKabir2026AMC,SultanaIslamKabir2026CSF,ShuvoKabir2024RSOS}.

Many models in evolutionary game theory assume that strategy updating is driven by relative payoff differences, often through imitation or selection biased toward more successful strategies \cite{HS1998,W1995,N2006,SF2007,TraulsenPachecoNowak2007JTB}. However, a complementary behavioral approach focuses on ``satisficing,'' a concept originating from Simon's bounded rationality \cite{S1955}. Under a rule driven by aspiration, or satisficing, individuals do not constantly scan the population for optimal strategies. Instead, they evaluate their current payoff against an internal benchmark, known as their aspiration level. As long as their payoff meets or exceeds this expectation, they maintain their current behavior. They only tend to change their strategy when they are dissatisfied. In standard models based on aspiration, an individual who fails to meet their aspiration level does not systematically seek out better alternatives; rather, they tend to adopt a new strategy selected at random \cite{KMRV1998,PPS1999,DWAW2014,DuWuWang2015SciRep,ZWDW2021}.

Social learning theory provides a behavioral motivation for the two-step update rule used here. Individuals who perform poorly or face uncertainty often rely more strongly on social information, a pattern described by state-contingent tendencies, such as copying when uncertain or copying when dissatisfied \cite{Kendal2018TiCS}. This motivates the first step of our rule: payoff below aspiration increases the probability of revision. However, dissatisfaction alone does not specify which alternative is chosen. Empirical and theoretical work on social learning also emphasizes the imitation of successful individuals, where individuals preferentially imitate demonstrators with higher payoff, status, or apparent competence \cite{Henrich_GilWhite_2001,Chudek_etal_2012,Kendal2018TiCS}. This motivates the second step of our rule: conditional on revision, the individual imitates a neighbor with probability proportional to that neighbor's payoff. Thus, the model translates a simple behavioral mechanism into a tractable evolutionary update rule: dissatisfaction makes revision more likely, and successful neighbors are more likely to be copied.

Neither standard imitation models nor traditional aspiration models capture this two-step logic. Motivated by these empirical observations, we propose a hybrid evolutionary rule: dissatisfaction determines whether an individual revises their strategy, while imitation biased by payoff of successful neighbors determines whose strategy they adopt. In our previous work \cite{K2021}, we introduced this joint dynamic of aspiration and imitation for games with two players, showing that an asymmetry in aspiration levels can reverse the evolutionary outcomes predicted by standard imitation models.

In this paper, we extend our framework to multiplayer games on regular graphs of degree \(k\), where individuals participate in overlapping groups of \(k+1\) players and payoffs are highly sensitive to local network composition. To handle these complex dynamics, we combine pair approximation \cite{MatsudaOgitaSasakiSato1992,vB2000} with a diffusion reduction under weak selection \cite{OhtsukiHauertLiebermanNowak2006SimpleRule} to derive explicit fixation probabilities and invasion criteria.

The present work lies at the intersection of two established lines of research. Dynamics of updating driven by payoffs have been studied in multiplayer games on regular graphs using powerful tools, including structure coefficients, pair approximation, diffusion arguments, and replicator reductions \cite{GokhaleTraulsen2010PNAS,LiBroomDuWang2016PRE,PenaWuArranzTraulsen2016,WangPercSzolnoki2024NatCom}. These approaches show how graph structure modifies selection based on payoffs, but they do not include revision gated by aspiration. Conversely, rules driven by aspiration model the decision to revise via self-comparison, where players compare their realized payoff against an internal benchmark and switch when dissatisfied \cite{KMRV1998,PPS1999,DWAW2014,DuWuWang2015SciRep,ZWDW2021}. Such rules can remove the usual graph effect on regular networks, but they do not include imitation of successful neighbors after dissatisfaction.

Our model combines these mechanisms. The resulting first-order fixation formula gives a decomposition specific to this hybrid rule in multiplayer graph interactions. The selection drift separates into an aspiration term independent of degree, proportional to \(\Delta\alpha=\alpha_D-\alpha_C\), and a game-specific payoff term filtered by structural coefficients depending on \(k\). We call the attenuation of this payoff term with increasing degree the dilution principle. Thus, the novelty is not simply that payoff effects decay with degree, but that this decay appears together with an aspiration component that is not diluted by connectivity at first order.

The remainder of the paper is organized as follows. Section~\ref{sec:model} introduces the mathematical model and the hybrid update rule. Sections~\ref{sec:public}, \ref{sec:staghunt}, and~\ref{sec:snowdrift} present the explicit analytical results for the public goods, stag-hunt, and snowdrift games, respectively. Section~\ref{sec:simulations} compares the first-order theory with Monte Carlo simulations in a representative setting under weak selection. Finally, Section~\ref{sec:discussion} concludes with a broader discussion of the theoretical implications.


\section{Model and analytical methodology}\label{sec:model}

Consider a finite population of $N$ individuals situated on a connected $k$-regular graph ($k \geq 3$), where each vertex represents an individual. Since the graph is regular, every individual $i$ is connected to exactly $k$ neighbors; we denote this set of neighbors as $\mathcal{N}(i)$, such that $|\mathcal{N}(i)| = k$. Individuals adopt one of two strategies: cooperation ($C$) or defection ($D$). Social interactions occur in groups of size $k+1$, where each group consists of one focal individual acting as the organizer and its $k$ direct neighbors as participants. Under this structure, every individual engages in exactly $k+1$ games per time step: once as the organizer of its own neighborhood, and $k$ times as a participant in the games organized by each of its $k$ neighbors.

In any given group of size $k+1$, payoffs are determined by the number of cooperators present. If a group contains $n$ cooperators and $k+1-n$ defectors, a cooperator receives a payoff $a_{n-1}$, while a defector receives $b_n$. The payoff structure is summarized by the matrix 
\[
\begin{array}{c|cccc}
   & 0 & 1 & \cdots & k \\ \hline
C & a_0 & a_1 & \cdots & a_k \\
D & b_0 & b_1 & \cdots & b_k \\
\end{array}
\]
After all games in a given time step are played, each individual $i$ accumulates an average realized payoff, defined as the average payoff earned from all $k+1$ interactions in which it participated.

 At each time step, after payoffs are realized, a randomly selected individual $i$ is given the opportunity to revise its strategy. It evaluates its current performance by comparing its average payoff $\pi_i$ to an intrinsic aspiration level $\alpha_i$, where $\alpha_i = \alpha_C$ if $i$ is a cooperator, and $\alpha_i = \alpha_D$ if $i$ is a defector. Following the principle of satisficing \cite{S1955}, individuals who meet their aspiration levels are reluctant to change, while those who are dissatisfied are prone to revision. The probability that individual $i$ decides to \emph{revise} its strategy is given by the Fermi function
\begin{equation}\label{eq:update_prob}
\Pr(i\; \text{revises})=\frac{1}{1+\exp\{\delta(\pi_i-\alpha_i)\}},
\end{equation}
where $\delta \ge 0$ represents the intensity of selection. This formulation is widely used in evolutionary game theory to model stochastic decision-making \cite{ST1998,TNP2006}. 

Once individual $i$ decides to revise its strategy (according to Eq.~\eqref{eq:update_prob}), the focal individual samples a neighbor $j\in \mathcal{N}(i)$ according to a multinomial logit (softmax) rule
\begin{equation}\label{eq:imitation_prob}
\Pr(i\ \text{imitates}\ j \mid i\ \text{revises})
=\frac{\exp(\delta \pi_j)}{\sum_{\ell\in \mathcal{N}(i)}\exp(\delta \pi_\ell)},
\end{equation}
and copies $j$'s strategy. This update rule drives a frequency-dependent evolutionary process, allowing highly successful strategies to spread more easily across the network. Altogether, these dynamics represent a two-step behavioral process. First, internal dissatisfaction increases the probability of revision. Second, imitation biased by payoffs makes successful neighbors more likely to be copied.

The updating mechanism in Eqs.~\eqref{eq:update_prob}--\eqref{eq:imitation_prob} is intended to represent a process of satisficing followed by search rather than unconditional payoff maximization. Following Simon's notion of \emph{satisficing} \cite{S1955}, an individual persists with its current strategy as long as realized outcomes meet an internal benchmark (aspiration level). When outcomes fall below this benchmark, the individual becomes more likely to revise, which we interpret as initiating an active search for alternatives. Conditional on revising, the individual then relies on \emph{social information} and preferentially copies successful neighbors. This structure, where dissatisfaction leads to copying biased by social information, is consistent with a large literature in social learning and cultural evolution showing that individuals tend to consult and use social information particularly when uncertain or performing poorly, and that copying is often biased toward individuals with high payoffs \cite{Rendell2010Science,Kendal2018TiCS}.

In this work, the small parameter $\delta$ serves a dual purpose: it governs
the sensitivity of the satisfaction gate in Eq.~\eqref{eq:update_prob} and
modulates the payoff bias during social imitation in Eq.~\eqref{eq:imitation_prob}.
Thus, the main model assumes a symmetric intensity, in which
aspiration responsiveness and imitation biased by payoffs operate on the same
weak-selection scale. This coupling is a modeling choice rather than a
mathematical necessity. A more general formulation with separate intensities $\delta_A>0$ and $\delta_I>0$ is discussed in the sections dedicated to each game and derived in Appendix~\ref{app:two_intensity}.

To study the evolutionary dynamics in the population, we employ the method of pair approximation. The graph is assumed to be locally tree-like with $k\ll N$, ensuring that short loops and clustering exert a negligible effect on the conditional neighborhood distributions. Within this sparse regular regime, adjacent vertices rarely share additional common neighbors, allowing local neighborhoods to be treated as tree-like to leading order. This constitutes the standard closure for random regular graphs. Let $p_C$ denote the global frequency of cooperators, and let $q_{C\mid C}$ be the conditional probability that the neighbor of a cooperator is also a cooperator; all other pair and conditional frequencies can be expressed in terms of $(p_C,q_{C\mid C})$. In the weak-selection regime $\delta\ll1$, the system operates on two distinct time scales: local correlations, captured by $q_{C|C}$, equilibrate rapidly on a fast $\mathcal{O}(1)$ time scale, whereas the global frequency $p_C$ evolves slowly on an $\mathcal{O}(\delta)$ time scale. Because $q_{C|C}$ quickly settles into a quasi-steady state determined entirely by $p_C$, we can approximate the system using a one-dimensional diffusion process given by
$$dp_C = \delta \mu(p_C) dt + \sqrt{\sigma^2(p_C)} dW_t,$$
with absorbing boundaries at $p_C=0$ (extinction of cooperation) and $p_C=1$ (fixation of cooperation). The infinitesimal mean $\mu$ combines two main effects, a term driven by the difference in aspiration levels $\Delta\alpha = \alpha_D - \alpha_C$, and a structural term $G$ capturing payoff differences under the pair-approximation closure. The infinitesimal variance $\sigma^2$ is driven only by random drift. See Appendix~\ref{AppendixA} for the mathematical details.

Let $\rho_C(1/N,\delta)$ denote the fixation probability of a single cooperator. This is the probability that one invading cooperator successfully generates a lineage that takes over a population of defectors. Similarly, let $\rho_D(1/N,\delta)$ be the probability that a single invading defector fixates within a population of cooperators. In the weak-selection regime with $N\delta\ll 1$, we have the following approximations to first order
\begin{subequations}\label{eq:rho_general}
\begin{align}
\rho_C\!\left(\frac{1}{N},\delta\right)&\approx\frac{1}{N}+\delta\left[\frac{\Delta\alpha}{4}+2\int_0^1 (1-y)\,G(y)\,dy\right], \label{eq:rhoC_general}\\
\rho_D\!\left(\frac{1}{N},\delta\right)&\approx\frac{1}{N}-\delta\left[\frac{\Delta\alpha}{4}+2\int_0^1 y\,G(y)\,dy\right]. \label{eq:rhoD_general}
\end{align}
\end{subequations}
Furthermore, the ratio of these fixation probabilities can be approximated as
\begin{equation}\label{eq:ratio_general}
\frac{\rho_C\!\left(\frac{1}{N},\delta\right)}{\rho_D\!\left(\frac{1}{N},\delta\right)}\approx1+N\delta\left(\frac{\Delta\alpha}{2}+2\int_0^1 G(y)\,dy\right).
\end{equation}
These formulas offer clear conditions for determining when network structure and aspiration differences shift evolutionary outcomes. All derivations leading from the microscopic update rule to the reduced diffusion are provided in Appendix~\ref{AppendixA}.

The weak-selection approximations in Eqs.~\eqref{eq:rho_general}--\eqref{eq:ratio_general} are obtained under the scaling assumptions $k\ll N$ and $N\delta\ll1$. The condition $k\ll N$ supports the pair-approximation description in terms of $p_C$ and $q_{C|C}$ on locally tree-like regular graphs. The condition $N\delta\ll1$ ensures that the first-order selection correction remains small relative to the neutral single-mutant fixation probability $1/N$. This point is especially important for the ratio in Eq.~\eqref{eq:ratio_general}, whose leading correction contains the product $N\delta$. This scaling issue is well known in weak-selection fixation calculations, where the order of the weak-selection and large-population limits can affect the resulting asymptotics \cite{SampleAllen2017Limits}.

Following standard practice in stochastic evolutionary dynamics in finite populations, we use three complementary criteria to quantify the evolutionary success of cooperation \cite{Nowak2004,Taylor2004finite,Tarnita2009,Nowak2010structured}. First, \emph{selection is said to favor the fixation of cooperation} if a single cooperator reaches fixation with a probability larger than that of neutral drift:
\[
\rho_C\!\left(\frac{1}{N},\delta\right)>\rho_C\!\left(\frac{1}{N},0\right)=\frac{1}{N}.
\]
This criterion measures whether selection increases the chance that cooperation can emerge from rarity.
Second, \emph{selection is said to disfavor the fixation of defection} if a single defector is less likely than neutral drift to take over a cooperative population:
\[
\rho_D\!\left(\frac{1}{N},\delta\right)<\rho_D\!\left(\frac{1}{N},0\right)=\frac{1}{N}.
\]
This criterion quantifies the resistance of cooperation against rare defector invasions. Because these first two conditions are independent in general stochastic dynamics, selection can theoretically favor or disfavor the fixation of both strategies simultaneously.
Thus, a third criterion is introduced by comparing the fixation probabilities directly. We say that \emph{selection favors cooperation over defection} if a cooperator is more likely to invade a population of defectors than a defector is to invade a population of cooperators:
\[
\rho_C\!\left(\frac{1}{N},\delta\right)>\rho_D\!\left(\frac{1}{N},\delta\right).
\]
This ``favored over'' criterion is widely used and connects naturally to long-run abundance in the presence of rare mutations \cite{Fudenbergimhof2006}. 
\begin{table}[H]
\centering
\caption{Summary of the three complementary criteria for the evolutionary success of cooperation.}
\label{tab:criteria}
\begin{tabular}{>{\raggedright\arraybackslash}p{0.29\textwidth}>{\centering\arraybackslash}p{0.16\textwidth}>{\raggedright\arraybackslash}p{0.43\textwidth}}
\toprule
Criterion & Condition & Interpretation \\
\midrule
Favors fixation of cooperation & $\rho_C > 1/N$ & Selection helps cooperation emerge from rarity \\
Disfavors fixation of defection & $\rho_D < 1/N$ & Cooperation resists rare defector invasion \\
Favors cooperation over defection & $\rho_C > \rho_D$ & Cooperation is favored relative to defection \\
\bottomrule
\end{tabular}
\end{table}

For the remainder of this paper, we simplify our notation by letting $\rho_C$ and $\rho_D$ denote the fixation probabilities of a single mutant, $\rho_C(1/N, \delta)$ and $\rho_D(1/N, \delta)$, respectively.


\section{Public goods game}\label{sec:public}
In this section, we apply our general analytical framework to the linear public goods game (PGG) \cite{hamburger1973nperson,fox1978public}. In this classic social dilemma, each cooperator incurs a personal cost $c$ to provide a benefit $b$ to the group. The total accumulated benefit is then distributed equally among all $k+1$ group members, regardless of whether they contributed. Consequently, a focal cooperator interacting in a group with $n$ other cooperators (yielding a total of $n+1$ contributors) receives a payoff
\begin{equation}
a_{n} = \frac{n+1}{k+1}b - c.
\end{equation}
Conversely, a focal defector in a group with $n$ cooperators contributes nothing, thereby avoiding the cost $c$, but still receives an equal share of the generated public good. Its payoff is
\begin{equation}
b_{n} = \frac{n}{k+1}b.
\end{equation}

By substituting these specific PGG payoffs into our diffusion approximations under weak selection, we can explicitly evaluate the evolutionary dynamics. The respective fixation probabilities for a single cooperator ($\rho_C$) and a single defector ($\rho_D$), along with their ratio, can be approximated as
\begin{subequations}\label{eq:fix_pgg_asp}
\begin{align}
\rho_C&\approx\frac{1}{N} + \frac{\delta}{4}\left( \Delta\alpha + \frac{3(k+3)}{(k+1)^2}b - 3c \right), \\
\rho_D&\approx\frac{1}{N} - \frac{\delta}{4}\left( \Delta\alpha + \frac{3(k+3)}{(k+1)^2}b - 3c \right), \\
\frac{\rho_C}{\rho_D}&\approx 1 + \frac{\delta N}{2} \left( \Delta\alpha + \frac{3(k+3)}{(k+1)^2}b - 3c \right).
\end{align}
\end{subequations}
The mathematical details of these derivations are provided in Appendix~\ref{AppendixB}.

A direct consequence of Eq.~\eqref{eq:fix_pgg_asp} is that the influence of the network structure is completely captured by the factor $\frac{k+3}{(k+1)^2}$. 
Because this factor strictly decreases and approaches zero as the degree \(k\) grows large, it illustrates the dilution effect.
As interaction groups become larger, the marginal benefit that a single cooperator provides to any given neighbor shrinks. Meanwhile, the personal cost of contributing, $c$, remains constant. Therefore, for a fixed set of parameters $(b, c, \Delta\alpha)$, increasing graph degree tends to reduce \(\rho_C\) and increase \(\rho_D\), driving $\rho_C$ down and $\rho_D$ up.

In contrast, the aspiration gap $\Delta\alpha = \alpha_D - \alpha_C$ acts as an additive shift that is entirely independent of the degree $k$. A larger $\Delta\alpha$ simultaneously boosts the fixation probability of cooperators and suppresses that of defectors. Behaviorally, a positive $\Delta\alpha$ means that defectors require higher payoffs to feel satisfied. Because they are harder to satisfy, defectors are more likely to revise their strategies and, through imitation biased by payoffs, adopt cooperation if their cooperative neighbors are successful. Since this behavioral channel does not rely on local network structures to amplify payoffs, it can effectively offset the erosion of benefits caused by highly connected networks.

Our three standard criteria for the evolutionary success of cooperation, $\rho_C > 1/N$, $\rho_D < 1/N$, and $\rho_C > \rho_D$, are equivalent and reduce to the single condition
\begin{equation}\label{eq:cond_pgg}
\Delta \alpha
+3\left[\frac{k+3}{(k+1)^2}b-c\right]>0.
\end{equation}
More precisely, we have \(
\rho_D<\frac{1}{N}<\rho_C
\) if \eqref{eq:cond_pgg} is satisfied and 
\(
\rho_C<\frac{1}{N}<\rho_D
\)
if \eqref{eq:cond_pgg} is reversed. Consequently, scenarios where selection simultaneously favors or disfavors the fixation of both strategies are impossible.

If both types aspire to the same level ($\Delta\alpha = 0$), this condition simplifies to a strict payoff requirement
\begin{equation}\label{eq:bc_threshold}
\frac{b}{c}>\frac{(k+1)^2}{k+3}.
\end{equation}
While this inequality can be met in networks with a small, finite degree, it becomes increasingly difficult to satisfy as $k$ increases. In the limit of large connectivity ($1\ll k\ll N$), condition \eqref{eq:bc_threshold} can never be satisfied, making cooperation impossible to sustain under symmetric aspirations (resulting in $\rho_C < 1/N < \rho_D$).

However, if aspirations are asymmetric, the condition in \eqref{eq:cond_pgg} shows that cooperation can survive even in highly connected graphs. As $k$ becomes large, the invasion condition approaches a simple threshold $\Delta\alpha > 3c$. This highlights a clear separation of mechanisms. The payoff component of the equation dilutes as the degree increases, but the behavioral aspiration component remains intact. If interactions take place in large groups that neutralize the structural benefit of cooperative behavior, only a sufficiently large, positive aspiration gap can compensate for the costs associated with contributing.

To visualize how graph degree and aspiration asymmetry jointly affect cooperator fixation, the left panel in Fig.~\ref{fig:heatmap_selection_advantage} plots the first-order selection advantage \(\rho_C-1/N\) as a function of \(k\) and \(\Delta\alpha\). The neutrality boundary rises with \(k\), illustrating that increasing connectivity progressively raises the aspiration gap required to offset the diluted marginal benefit of cooperation.

These conclusions remain valid when aspiration responsiveness and imitation driven by payoffs have different intensities. Let $\delta_A$ measure the sensitivity of the aspiration gate and let $\delta_I$ measure the strength of payoff-biased imitation. The three criteria $\rho_C>1/N$, $\rho_D<1/N$, and $\rho_C>\rho_D$ remain equivalent and reduce to
\begin{equation}
\Delta\alpha+\frac{k+1+2\lambda(k+4)}{(k+1)^2}b>(1+2\lambda)c,
\end{equation}
where \(\lambda=\delta_I/\delta_A\) for \(\delta_A>0\). Thus, the single-intensity condition \eqref{eq:cond_pgg} is recovered by setting \(\lambda=1\). In the regime \(1\ll k\ll N\), the benefit term vanishes, and the condition reduces to
\begin{equation}
\Delta\alpha>(1+2\lambda)c.
\end{equation}
Hence, separating the two intensities changes the cost threshold but not the dilution mechanism: the benefit of cooperating is still suppressed by increasing degree, whereas the aspiration gap remains independent of degree. The detailed derivation is given in Appendix~\ref{appendixE-part1}.

\begin{figure}[H]
\centering
\includegraphics[width=\textwidth,height=6cm,]{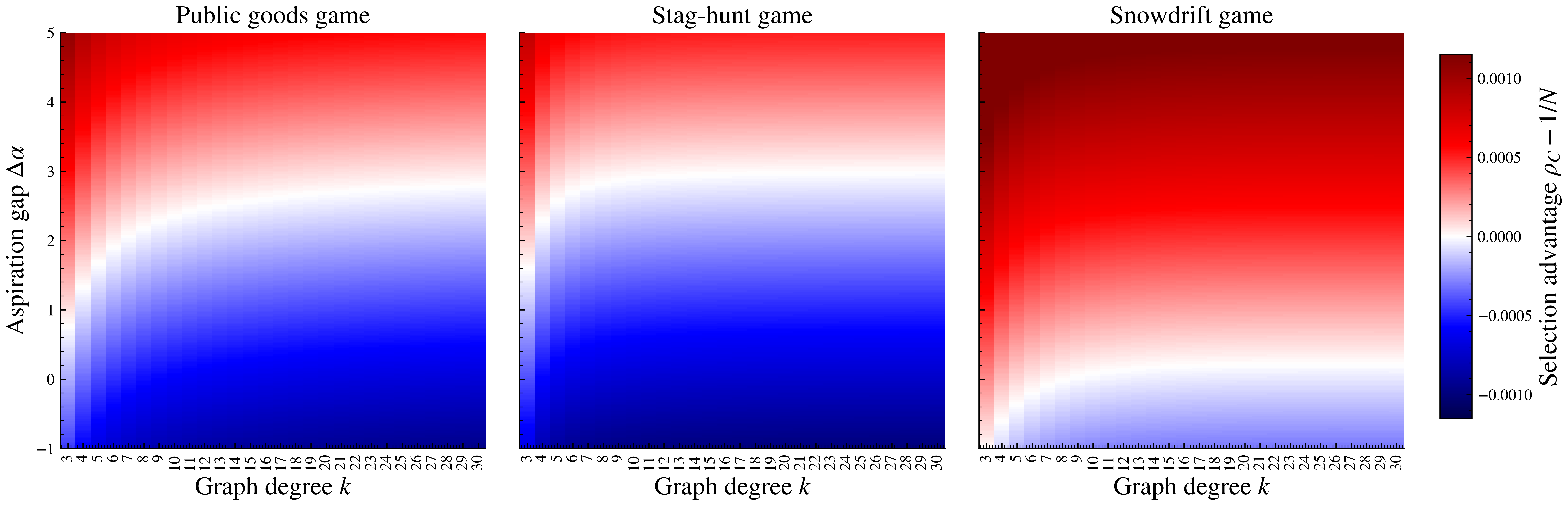}
\caption{
The heatmap represents the first-order selection advantage \(\rho_C-1/N\) 
as a function of graph degree \(k\) and aspiration gap
\(\Delta\alpha=\alpha_D-\alpha_C\), with \(b=2\), \(c=1\), and
\(\delta=10^{-3}\). For $\rho_C$, we have used the first-order approximations in Eqs. 
\eqref{eq:fix_pgg_asp}, \eqref{eq:fix_sh_explicit_final} and \eqref{eq:fix_SD}.
Since \(k\) is discrete, each vertical column
corresponds to one \(k\)-regular graph degree. Red regions indicate
\(\rho_C>1/N\), blue regions indicate \(\rho_C<1/N\), and the white
transition region corresponds to approximate neutrality. Across all three
games, increasing \(\Delta\alpha\) shifts the system toward cooperation,
whereas increasing \(k\) weakens the payoff contribution. The
three panels show that the same dilution mechanism produces different
parameter regions in the public goods, stag-hunt, and snowdrift games.}
\label{fig:heatmap_selection_advantage}
\end{figure}
\section{Stag-hunt game}\label{sec:staghunt}
In this section, we consider the stag-hunt game \cite{skyrms2004stag,pacheco2009evolutionary} on a finite $k$-regular graph. In this game, each cooperator pays a cost $c$ in every interaction. However, the benefit $b$ is generated only under strict full coordination among all group members. Consequently, a cooperator receives $b-c$ if and only if all other players in its interaction group also cooperate; otherwise, it receives only $-c$. Defectors pay no cost and receive $0$ regardless of the group composition. Using the notation from Section~\ref{sec:model}, the payoffs are defined as
\[
a_{n}=
\begin{cases}
-c,& n=0,1,\ldots,k-1,\\
b-c,& n=k,
\end{cases}
\]
and 
$b_{n}=0$ for any $n=0,1,\ldots,k$. 

Under weak selection, applying the diffusion approximation detailed in Appendix~\ref{appendixC-part1} yields 
\begin{subequations}\label{eq:fix_sh_explicit_final}
\begin{align}
\rho_C&\approx\frac1N+\frac{\delta}{4}\Bigg[\Delta\alpha+\;b\,\Phi_C^{\mathrm{SH}}(k)-3c\Bigg],\\
\rho_D&\approx\frac1N-\frac{\delta}{4}\Bigg[\Delta\alpha+\;b\,\Phi_D^{\mathrm{SH}}(k)-3c\Bigg],\\
\frac{\rho_C}{\rho_D}&\approx1+\frac{N\delta}{2}\Bigg[\Delta\alpha+\;b\,\Phi_R^{\mathrm{SH}}(k)-3c\Bigg],
\end{align}
\end{subequations}
where 
\begin{subequations}\label{eq:form_of_phi_SH}
\begin{align}
\Phi_C^{\mathrm{SH}}(k) &=\frac{2(3k+4)((k-1)^{k+2}+3-k^2)}{k\,(k+1)^2\,(k+2)\,(k-2)^2(k-1)^{\,k-1}},\\
\Phi_D^{\mathrm{SH}}(k)&=\frac{2(3k+4)((k-1)^{k+1}(k^2-k-3)+1)}{k\,(k+1)^2\,(k+2)\,(k-2)^2(k-1)^{\,k-1}},\\
\Phi_R^{\mathrm{SH}}(k)&=\frac{(3k+4)((k-1)^{k+1}-1)}{k\,(k+1)^2\,(k-2)(k-1)^{\,k-1}}.
\end{align}
\end{subequations}

As proven in Appendix~\ref{appendixC-part2} and illustrated in Fig.~\ref{fig:staghunt}(a), the coefficients $\Phi_C^{\mathrm{SH}}(k)$, $\Phi_D^{\mathrm{SH}}(k)$, and $\Phi_R^{\mathrm{SH}}(k)$ all decrease as the degree $k$ increases. Their decay formalizes a strong dilution of benefits. Increasing connectivity weakens the payoff advantage that cooperation can generate, while the cost term remains the fixed quantity $3c$. In the limit of large $k$, we have
\[
\Phi_C^{\mathrm{SH}}(k)\sim \frac{6}{k^2},\quad
\Phi_D^{\mathrm{SH}}(k)\sim \frac{6}{k},\quad
\Phi_R^{\mathrm{SH}}(k)\sim \frac{3}{k}.
\]
Hence, the benefit contribution to cooperator invasion decays one power of
$k$ faster than the benefit contributions relevant for defector invasion and
for the comparison $\rho_C>\rho_D$. This asymmetry is the signature of the
coordination barrier in the stag-hunt game. A rare cooperator can benefit only
when its local neighborhood is fully cooperative, and the probability of such
an event becomes extremely small as $k$ grows. By contrast, a rare defector
introduced into an otherwise cooperative population can still exploit the
cooperative environment while simultaneously breaking local unanimity. Thus,
the structure penalizes cooperator invasion much more strongly than it
penalizes defector invasion.

Unlike the linear public goods game, the three standard success criteria are not equivalent because
$\Phi_C^{\mathrm{SH}}(k)\neq\Phi_D^{\mathrm{SH}}(k)\neq\Phi_R^{\mathrm{SH}}(k)$.
They are given by
\begin{subequations}\label{eq:cond_sh_promote_final}
\begin{align}
\Delta\alpha+b\,\Phi_C^{\mathrm{SH}}(k) &> 3c,\\
\Delta\alpha+b\,\Phi_D^{\mathrm{SH}}(k) &> 3c,\\
\Delta\alpha+b\,\Phi_R^{\mathrm{SH}}(k) &> 3c.
\end{align}
\end{subequations}
Since all three $\Phi$-coefficients vanish as $k\to\infty$, these conditions
collapse to the same asymptotic requirement,
\[
\Delta\alpha>3c.
\]
This threshold for a large degree $k$ matches the one found in the public goods game, though it arises from a different mechanism. In the PGG, the benefit is diluted by averaging over many neighbors, whereas in the stag-hunt game, it disappears because local unanimity becomes unattainable for a rare cooperator.

To determine when cooperation is favored, consider first the case $\Delta\alpha>3c$. Then all three inequalities in Eq.~\eqref{eq:cond_sh_promote_final} hold for any $b>0$, resulting in $\rho_D<1/N<\rho_C$. Therefore, the interesting regime is $\Delta\alpha<3c$, where aspirations alone are insufficient. In this case, the critical benefit-to-cost thresholds are
\begin{subequations}\label{eq:bc_star_promote}
\begin{align}
\left(\frac{b}{c}\right)^{*}_{\rho_C>1/N}&=\frac{3-\Delta\alpha/c}{\Phi_C^{\mathrm{SH}}(k)},\\
\left(\frac{b}{c}\right)^{*}_{\rho_D<1/N}&=\frac{3-\Delta\alpha/c}{\Phi_D^{\mathrm{SH}}(k)},\\
\left(\frac{b}{c}\right)^{*}_{\rho_C>\rho_D}&=\frac{3-\Delta\alpha/c}{\Phi_R^{\mathrm{SH}}(k)}.
\end{align}
\end{subequations}
These thresholds are illustrated in Fig.~\ref{fig:staghunt}(b).
A simple comparison of the quantities in Eq.~\eqref{eq:form_of_phi_SH} yields
\[
\Phi_D^{\mathrm{SH}}(k)-\Phi_C^{\mathrm{SH}}(k)
=
\frac{
2(3k+4)\left[(k-1)^{k+1}(k^2-2k-2)+k^2-2\right]
}{
k(k+1)^2(k+2)(k-2)^2(k-1)^{k-1}
}>0,
\]
for $k\geq3$. In addition, we have $\Phi_R^{\mathrm{SH}}(k)=\left(\Phi_D^{\mathrm{SH}}(k)+\Phi_C^{\mathrm{SH}}(k)\right)/2$. This shows that 
\[
\Phi_C^{\mathrm{SH}}(k)<\Phi_R^{\mathrm{SH}}(k)<\Phi_D^{\mathrm{SH}}(k),
\]
so the thresholds satisfy the ordering
\[
\left(\frac{b}{c}\right)^{*}_{\rho_D<1/N}<\left(\frac{b}{c}\right)^{*}_{\rho_C>\rho_D}<\left(\frac{b}{c}\right)^{*}_{\rho_C>1/N}.
\]
This produces a clear sequence of evolutionary regimes as $b/c$ increases (Fig.~\ref{fig:staghunt}(b)). In particular, the threshold for favoring cooperator fixation is always the largest, while that for suppressing defector fixation is always the smallest. In addition, we have
$\Phi_C^{\mathrm{SH}}(k)/\Phi_D^{\mathrm{SH}}(k)\sim 1/k$, showing that for large $k$ the benefit-to-cost
ratio needed to favor a rare cooperator is about $k$ times larger than the ratio needed to suppress a rare defector.

Since the $\Phi$-coefficients decrease to zero, all three thresholds diverge with $k$ according to the following asymptotic behaviors
\begin{align*}
&\left(\frac{b}{c}\right)^{*}_{\rho_C>1/N}\sim \left(3-\Delta\alpha/c\right)\frac{k^2}{6},\\
&\left(\frac{b}{c}\right)^{*}_{\rho_D<1/N}\sim \left(3-\Delta\alpha/c\right)\frac{k}{6},\\
&\left(\frac{b}{c}\right)^{*}_{\rho_C>\rho_D}\sim \left(3-\Delta\alpha/c\right)\frac{k}{3}.
\end{align*}
Hence, when $\Delta\alpha<3c$, making cooperation favored from rarity becomes
quadratically harder as connectivity increases, whereas suppressing rare
defectors requires only a linearly increasing benefit.

The middle panel of Fig.~\ref{fig:heatmap_selection_advantage} complements this result by showing how the parameters interact: as \(k\) increases, the neutrality boundary moves upward, showing that a larger aspiration gap is needed to compensate for the increasingly severe coordination barrier.

It is useful to note that the above conclusions do not require the sensitivity to aspirations and the imitation based on payoffs to have the same intensity. In the case with two different intensities $\delta_A$ and $\delta_I$, the three success conditions become
\begin{equation}\label{eq:sh_two_intensity_condition_main}
\Delta\alpha+b\,\frac{k+2\lambda(k+2)}{3k+4}\,\Phi_S^{\mathrm{SH}}(k)>c(1+2\lambda),
\end{equation}
where $S\in\{C,D,R\}$ refers respectively to the criteria $\rho_C>1/N$, $\rho_D<1/N$, and $\rho_C>\rho_D$. Hence, if $\Delta\alpha/c<1+2\lambda$, the corresponding critical benefit-to-cost ratios are
\begin{equation}\label{eq:sh_two_intensity_threshold_main}
\left(\frac{b}{c}\right)^*_S=\frac{(3k+4)(1+2\lambda-\Delta\alpha/c)}{\bigl(k+2\lambda(k+2)\bigr)\Phi_S^{\mathrm{SH}}(k)},
\qquad
S\in\{C,D,R\}.
\end{equation}
When $\lambda=1$, these expressions simplify and we fall back onto the original baseline thresholds from Eq.~\eqref{eq:bc_star_promote}. Since $\Phi_C^{\mathrm{SH}}(k)\sim 6/k^2$, $\Phi_D^{\mathrm{SH}}(k)\sim 6/k$, and $\Phi_R^{\mathrm{SH}}(k)\sim 3/k$, the thresholds with two intensities scale as
\begin{align}
\left(\frac{b}{c}\right)_{\rho_C>1/N}^{*}
&\sim
\frac{1+2\lambda-\Delta\alpha/c}{2(1+2\lambda)}\,k^2, \label{eq:sh_two_intensity_asymp_C_main}\\
\left(\frac{b}{c}\right)_{\rho_D<1/N}^{*}
&\sim
\frac{1+2\lambda-\Delta\alpha/c}{2(1+2\lambda)}\,k, \label{eq:sh_two_intensity_asymp_D_main}\\
\left(\frac{b}{c}\right)_{\rho_C>\rho_D}^{*}
&\sim
\frac{1+2\lambda-\Delta\alpha/c}{1+2\lambda}\,k. \label{eq:sh_two_intensity_asymp_R_main}
\end{align}
Thus, the model with two intensities preserves the same hierarchy as in the single-intensity case: favoring a rare cooperator remains quadratically harder in $k$, while suppressing rare defectors remains linearly harder. The proof is provided in Appendix~\ref{appendixE-part2}.

 \begin{figure}
\centering
\includegraphics[ width=\textwidth, keepaspectratio]{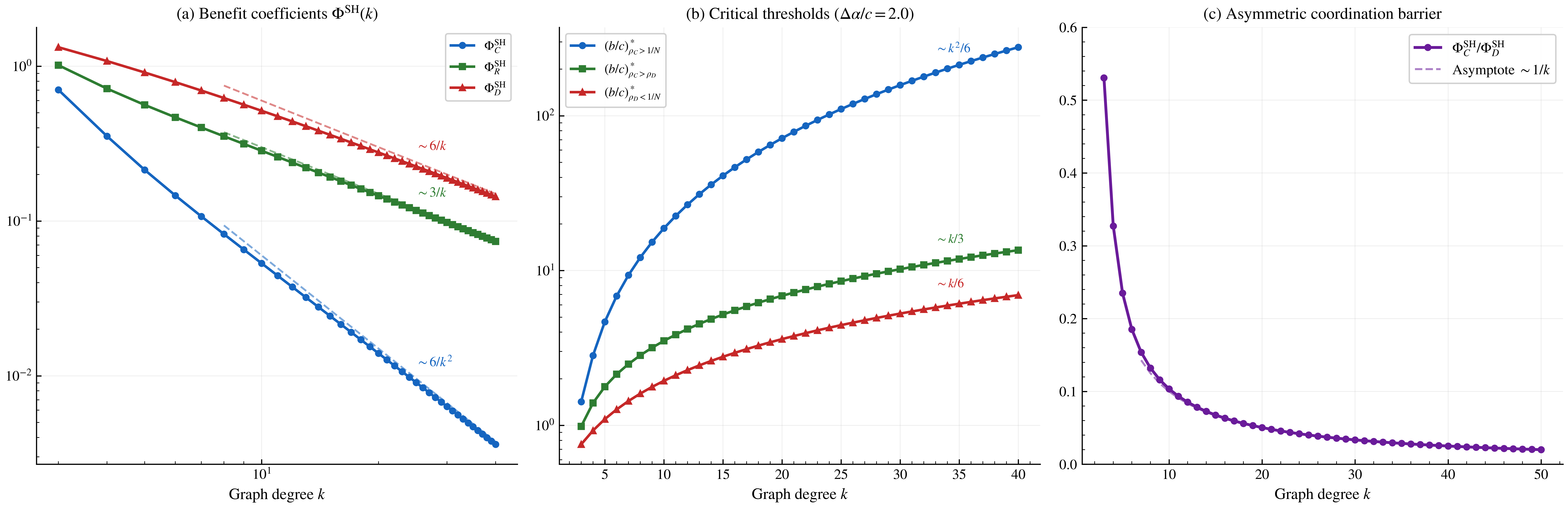}
\caption{Stag-hunt game.
Increasing graph degree strengthens the asymmetric coordination barrier against rare cooperators.
(a)~The benefit coefficients $\Phi_C^{\mathrm{SH}}(k)$, $\Phi_R^{\mathrm{SH}}(k)$, and $\Phi_D^{\mathrm{SH}}(k)$ all decrease with graph degree, but $\Phi_C^{\mathrm{SH}}(k)$, which controls cooperator invasion from rarity, decays much faster than the coefficients associated with defector invasion and the ratio of the fixation probabilities. This reflects the fact that rare cooperators require full local coordination to benefit.
(b) The critical benefit-to-cost thresholds increase with \(k\). The threshold for selection to favor cooperator fixation is the largest, while the threshold for selection to disfavor defector fixation is the smallest. Increasing graph degree makes the emergence of cooperation from rarity much harder than the suppression of rare defectors.
(c)~The decreasing ratio $\Phi_C^{\mathrm{SH}}(k)/\Phi_D^{\mathrm{SH}}(k)$ directly quantifies this asymmetry: as $k$ grows, the payoff leverage of rare cooperators becomes negligible relative to that of rare defectors.}
\label{fig:staghunt}
\end{figure}

\section{Snowdrift game}\label{sec:snowdrift}
In this section, we consider the snowdrift game \cite{zheng2007cooperative,souza2009evolution,santos2011risk} on a $k$-regular graph under weak selection. In this game, a group can clear a ``snowdrift'' at a total cost $c$, generating a benefit $b$ for everyone provided at least one individual cooperates. If the group contains $n+1$ cooperators (including the focal individual), the cost is shared equally among them. Consequently, a cooperator pays $c/(n+1)$ while all group members receive the benefit $b$. A defector incurs no cost and receives the benefit $b$ provided there is at least one cooperator in the interaction group. Thus the payoffs are
\[
a_{n}=b-\dfrac{c}{n+1}\quad\text{for }n=0,1,\ldots,k,
\]
and
\[
b_{n}=\begin{cases}
0,& n=0,\\
b,& n=1,\ldots,k.
\end{cases}
\]

Using the diffusion approximation (see Appendix~\ref{AppendixD}), the first-order expansions of the fixation probabilities are
\begin{subequations}\label{eq:fix_SD}
\begin{align}
\rho_C&\approx\frac1N+\frac{\delta}{4}\Bigg[\Delta\alpha+\;b\,\Phi_C^{\mathrm{SD}}(k)\;-\;c\,\Gamma_{C}^{\mathrm{SD}}(k)\Bigg],\\
\rho_D&\approx\frac1N-\frac{\delta}{4}\Bigg[\Delta\alpha+\;b\,\Phi_D^{\mathrm{SD}}(k)\;-\;c\,\Gamma_{D}^{\mathrm{SD}}(k)\Bigg],\\
\frac{\rho_C}{\rho_D}&\approx1+\frac{N\delta}{2}\Bigg[\Delta\alpha+\;b\,\Phi_R^{\mathrm{SD}}(k)\;-\;c\,\Gamma_{R}^{\mathrm{SD}}(k)\Bigg],
\end{align}
\end{subequations}
where the closed forms for the benefit weights are
\begin{subequations}\label{snowdrift_phi}
\begin{align}
\Phi_C^{\mathrm{SD}}(k)
=&\frac{2(3k+4)((k-1)^{k+1}(k^2-k-3)+1)}{k(k-2)^2(k+1)^{2}(k+2)(k-1)^{k-1}},\\
\Phi_D^{\mathrm{SD}}(k)
=&\frac{2(3k+4)((k-1)^{k+2}+3-k^2)}{k(k-2)^2(k+1)^2(k+2)(k-1)^{k-1}},\\
\Phi_R^{\mathrm{SD}}(k)
=&\frac{(3k+4)((k-1)^{k+1}-1)}{k(k-2)(k+1)^2(k-1)^{k-1}}.
\end{align}
\end{subequations}
The cost weights $\Gamma_{C}^{\mathrm{SD}}(k)$, $\Gamma_{D}^{\mathrm{SD}}(k)$, and $\Gamma_{R}^{\mathrm{SD}}(k)$ are given explicitly in Eqs.~\eqref{eq:I1I2_final} and \eqref{eq:I3_final} in Appendix~\ref{AppendixD}.

Note that $\Phi_C^{\mathrm{SD}}(k)$, $\Phi_R^{\mathrm{SD}}(k)$, and $\Phi_D^{\mathrm{SD}}(k)$ are all decreasing functions of $k$. See Remark \ref{snowdrift_coefficient} in Appendix \ref{appendixD-part2} for the analytical proof, and Fig.~\ref{fig:snowdrift}(a) for numerical illustrations. Moreover, the closed forms in Eq.~\eqref{snowdrift_phi} imply the asymptotics
\begin{subequations}\label{eq:snowdrift_approx1}
\begin{align}
&\Phi_C^{\mathrm{SD}}(k)\sim 2\Phi_R^{\mathrm{SD}}(k)\sim \frac{6}{k},
\\
&\Phi_D^{\mathrm{SD}}(k)\sim \frac{6}{k^2},
\end{align}
\end{subequations}
for $k$ large.
Thus, the benefit contribution becomes negligible in all three invasion
statistics as connectivity increases, and it decays fastest from
$\rho_D$. This monotone decay formalizes the dilution of the benefit. In the snowdrift game, the benefit is produced whenever at least one cooperator is present. As the degree increases, the focal player becomes less important for determining whether this condition is satisfied, so the benefit contribution to selection is diluted. This dilution is strongest for a rare defector in a mostly cooperative population, because most groups already contain other cooperators; by contrast, a rare cooperator in a mostly defecting population can still create benefit in groups that would otherwise receive none.

Fig.~\ref{fig:snowdrift}(b) suggests that the cost weights $\Gamma_{C}^{\mathrm{SD}}(k)$, $\Gamma_{D}^{\mathrm{SD}}(k)$ and $\Gamma_{R}^{\mathrm{SD}}(k)$ also decrease with $k$. Furthermore, in Appendix~\ref{appendixD-part3}, we have shown that
\begin{subequations}\label{eq:snowdrift_approx2}
\begin{align}
&\Gamma_C^{\mathrm{SD}}(k)\sim 2\Gamma_R^{\mathrm{SD}}(k)\sim\frac{6\ln k}{k},\\
&\Gamma_D^{\mathrm{SD}}(k)\sim \frac{6}{k},
\end{align}
\end{subequations}
for $k$ large. Hence, the cost contribution in \eqref{eq:fix_SD} vanishes in the limit of large degree, but more slowly for $\rho_C$ and $\rho_C/\rho_D$ because of the logarithmic factor. The mechanism is the cost-sharing structure of the snowdrift game: when several cooperators are present, the cost is split among them, so the per-capita burden decreases with the number of contributors. Averaging terms of the form $1/j$ over possible numbers of cooperating group members generates harmonic sums, which explains the appearance of $\ln k$. The $\Gamma$-coefficients differ because they correspond to different invasion contexts. In particular, $\Gamma_C^{\mathrm{SD}}(k)$ measures how strongly shared costs constrain a cooperator lineage across frequencies, whereas $\Gamma_D^{\mathrm{SD}}(k)$ measures the advantage enjoyed by a defector that avoids those costs in a mostly cooperative environment. Consistent with this interpretation, $\Gamma_D^{\mathrm{SD}}(k)$ is smaller and decays faster than $\Gamma_C^{\mathrm{SD}}(k)$ and $\Gamma_R^{\mathrm{SD}}(k)$.

For \(\Delta\alpha\neq 0\), the limit \(1\ll k\ll N\) eliminates the terms dependent on payoffs in \eqref{eq:fix_SD}. Hence, we have
\[
\begin{aligned}
\rho_C &\approx \frac1N+\delta\,\frac{\Delta\alpha}{4},\\
\rho_D &\approx \frac1N-\delta\,\frac{\Delta\alpha}{4}.
\end{aligned}
\]
Thus, in highly connected graphs, fixation is governed primarily by the aspiration gap $\Delta\alpha$. This contrasts sharply with the stag-hunt game, where a nonzero cost threshold persists as $k\to\infty$ because costs are incurred per interaction and are not shared. In the snowdrift game, sharing the cost dilutes the effective burden and entirely removes the impact of payoffs when the network degree becomes large.

Next, consider the symmetric case \(\Delta\alpha=0\). The critical benefit-to-cost thresholds are
\begin{subequations}\label{eq:snowdrift_approx3}
\begin{align}
\left(\frac{b}{c}\right)^{*}_{\rho_C>1/N}
&=
\frac{\Gamma_C^{\mathrm{SD}}(k)}{\Phi_C^{\mathrm{SD}}(k)},\\
\left(\frac{b}{c}\right)^{*}_{\rho_C>\rho_D}
&=
\frac{\Gamma_R^{\mathrm{SD}}(k)}{\Phi_R^{\mathrm{SD}}(k)},\\
\left(\frac{b}{c}\right)^{*}_{\rho_D<1/N}
&=
\frac{\Gamma_D^{\mathrm{SD}}(k)}{\Phi_D^{\mathrm{SD}}(k)}.
\end{align}
\end{subequations}
Numerical evaluation of the exact coefficient formulas in Appendix~\ref{AppendixD} shows that, over the range of \(k\) displayed in Fig.~\ref{fig:snowdrift}(c), these thresholds satisfy
\[
\left(\frac{b}{c}\right)^{*}_{\rho_C>1/N}
<
\left(\frac{b}{c}\right)^{*}_{\rho_C>\rho_D}
<
\left(\frac{b}{c}\right)^{*}_{\rho_D<1/N}.
\]
This ordering implies a clear progression of evolutionary regimes as \(b/c\) increases over the plotted range. For small \(b/c\), cooperation is disfavored from rarity, and rare defectors are favored. Once
\(\left(b/c\right)^{*}_{\rho_C>1/N}\) is crossed, cooperation becomes favored against neutrality. Crossing
\(\left(b/c\right)^{*}_{\rho_C>\rho_D}\) then makes cooperator fixation more likely than defector fixation. Only beyond the largest threshold,
\(\left(b/c\right)^{*}_{\rho_D<1/N}\), does rare defection itself become disfavored.

By applying the asymptotic approximations in Eqs.~\eqref{eq:snowdrift_approx1} and~\eqref{eq:snowdrift_approx2}, the asymptotic conclusion is
\[
\begin{aligned}
&\left(\frac{b}{c}\right)^{*}_{\rho_C>1/N}\sim \left(\frac{b}{c}\right)^{*}_{\rho_C>\rho_D}\sim\ln k,\\
&\left(\frac{b}{c}\right)^{*}_{\rho_D<1/N}\sim k.
\end{aligned}
\]
Therefore, in the large-degree, locally tree-like regime, the two thresholds involving cooperator success scale logarithmically with \(k\), whereas the threshold required for selection to disfavor defector fixation scales linearly. Thus, the asymptotic separation follows from the large-degree analysis, while the finite-\(k\) ordering shown in Fig.~\ref{fig:snowdrift}(c) should be interpreted as numerical evidence obtained from the exact formulas.

The right panel of Fig.~\ref{fig:heatmap_selection_advantage} shows that for the snowdrift game, the cooperation favored region is larger than in the public goods and stag-hunt cases. This reflects the fact that both benefit and cost contributions dilute as $k$ increases.

The same large-degree conclusion holds for the two-intensity extension $\delta_A$ and $\delta_I$. In this case, the fixation probabilities take the form
\begin{subequations}\label{eq:sd_two_intensity_fix_main}
\begin{align}
\rho_C&\approx\frac1N+\frac{\delta_A}{4}\left[\Delta\alpha+b\Phi_{C,A}^{\mathrm{SD}}(k)-c\Gamma_{C,A}^{\mathrm{SD}}(k)\right]+\frac{\delta_I}{4}\left[b\Phi_{C,I}^{\mathrm{SD}}(k)-c\Gamma_{C,I}^{\mathrm{SD}}(k)\right],\\
\rho_D&\approx\frac1N-\frac{\delta_A}{4}\left[\Delta\alpha+b\Phi_{D,A}^{\mathrm{SD}}(k)-c\Gamma_{D,A}^{\mathrm{SD}}(k)\right]-\frac{\delta_I}{4}\left[b\Phi_{D,I}^{\mathrm{SD}}(k)-c\Gamma_{D,I}^{\mathrm{SD}}(k)\right].
\end{align}
\end{subequations}
Here the benefit coefficients split in the same way as in the stag-hunt game,
\begin{subequations}\label{eq:sd_two_intensity_phi_split_main}
\begin{align}
&\Phi_{S,A}^{\mathrm{SD}}(k)=\frac{k}{3k+4}\Phi_S^{\mathrm{SD}}(k),\\
&\Phi_{S,I}^{\mathrm{SD}}(k)=\frac{2(k+2)}{3k+4}\Phi_S^{\mathrm{SD}}(k),
\end{align}
\end{subequations}
for $S\in\{C,D,R\}$, while the cost coefficients separate into an aspiration-gated part and an imitation-bias part as
\[
\Gamma_S^{\mathrm{SD}}(k)=\Gamma_{S,A}^{\mathrm{SD}}(k)+\Gamma_{S,I}^{\mathrm{SD}}(k).
\]

The important point is that all benefit and cost coefficients still vanish in the regime $1\ll k\ll N$, giving
\begin{subequations}\label{eq:sd_two_intensity_dense_main}
\begin{align}
&\rho_C\approx\frac1N+\frac{\delta_A\Delta\alpha}{4},\\
&\rho_D\approx\frac1N-\frac{\delta_A\Delta\alpha}{4}.
\end{align}
\end{subequations}
Therefore, the conditions for $\rho_C > 1/N$, $\rho_D < 1/N$, and $\rho_C > \rho_D$ are equivalent and reduce to
\begin{equation}\label{eq:sd_two_intensity_dense_condition_main}
\Delta\alpha>0.
\end{equation}
Although separating aspiration responsiveness from imitation bias changes the weights of the payoff terms for finite \(k\), it does not alter the main snowdrift conclusion. On regular graphs of large degree, payoff effects dilute away at first order, leaving aspiration asymmetry as the leading contribution to the direction of selection. The full derivation is given in Appendix~\ref{appendixE-part3}.

\begin{figure}[H]
\centering
\includegraphics[height=6cm, width=\textwidth, keepaspectratio]{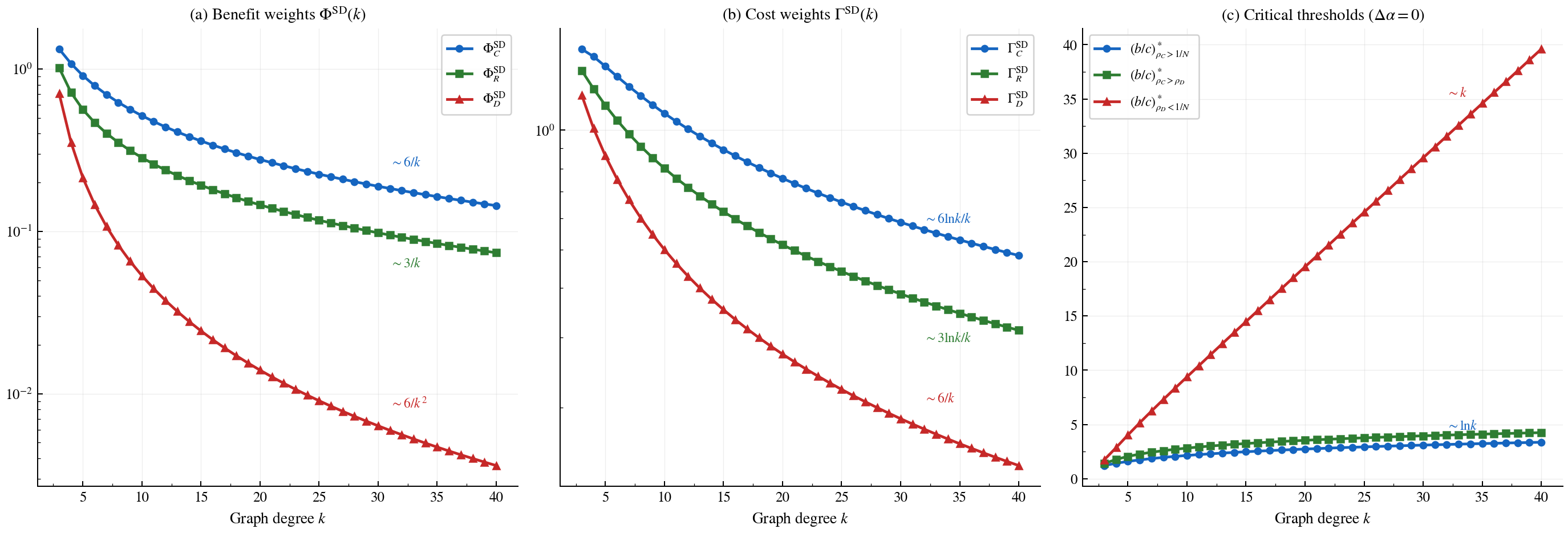}
\caption{Snowdrift game.
The figure shows that increasing graph degree dilutes both benefit weights $\Phi^{\mathrm{SD}}$ and cost weights $\Gamma^{\mathrm{SD}}$, but with different consequences for the three success criteria.
(a)~The benefit weights $\Phi_C^{\mathrm{SD}}(k)$, $\Phi_R^{\mathrm{SD}}(k)$, and $\Phi_D^{\mathrm{SD}}(k)$ decrease with $k$, because the focal individual becomes less decisive for producing the group benefit in larger neighborhoods. The coefficient $\Phi_D^{\mathrm{SD}}(k)$ decays fastest as a mostly cooperative population typically already contains other individuals producing benefits.
(b)~The cost weights \(\Gamma_C^{\mathrm{SD}}(k)\), \(\Gamma_R^{\mathrm{SD}}(k)\), and \(\Gamma_D^{\mathrm{SD}}(k)\) suggest the same decreasing trend in the plotted range, reflecting cost sharing among cooperators. The slower decay of $\Gamma_C^{\mathrm{SD}}(k)$ and $\Gamma_R^{\mathrm{SD}}(k)$ comes from the structure generated by averaging over possible numbers of contributors.
(c)~Under symmetric aspirations ($\Delta\alpha=0$), the thresholds $(b/c)^*$ calculated from the exact coefficient formulas diverge over the plotted range: $(b/c)^*_{\rho_C>1/N}$ and $(b/c)^*_{\rho_C>\rho_D}$ grow slowly, whereas $(b/c)^*_{\rho_D<1/N}$ increases much faster.  In the large-degree limit, the corresponding asymptotics are $\ln k$, $\ln k$, and $k$, respectively. Thus, favoring the fixation of cooperation or favoring cooperation over defection remains possible at moderate benefit-to-cost ratios, but disfavoring the fixation of defection becomes increasingly difficult as $k$ increases.}
\label{fig:snowdrift}
\end{figure}

\section{\label{sec:simulations}Monte Carlo Simulations}
To examine the first-order predictions under weak selection, we performed Monte Carlo simulations on connected simple random \(k\)-regular graphs across a broad range of connectivity values. We distinguish two purposes of the numerical analysis. Simulations satisfying \(N\delta\ll1\) and \(k\ll N\) are used to assess the quantitative accuracy of the first-order approximation, whereas simulations with larger \(N\delta\) are used only to examine whether the predicted qualitative dependence on graph degree remains visible outside the formal approximation regime.

Unless stated otherwise, all simulations used the parameters \(b=2\), \(c=1\), \(\alpha_C=0\), and \(\delta=10^{-3}\), with \(10^6\) stochastic update runs per parameter combination, organized in graph batches of \(10^3\) runs per random graph. At the start of each batch, a new simple random \(k\)-regular graph on \(N\) vertices was generated using the configuration model. For each node, \(k\) half-edges were created, randomly shuffled, and paired sequentially to form edges. Configurations producing self-loops, multiple edges, or disconnected graphs were discarded, and the procedure was repeated until a connected simple \(k\)-regular graph was obtained. This connectedness requirement ensures that a single mutant can reach the entire population. Without this condition, a mutant initially placed in one connected component could not spread to other components, which would artificially reduce global fixation probabilities.

\subsection{The fixation of cooperation}
Within each run, a single cooperator was placed uniformly at random on one vertex of an otherwise all-defector population, and the dynamics then evolved under the update rule of Section~\ref{sec:model} until absorption. The cooperator fixation probability \(\rho_C\) was estimated as the fraction of runs that ended in the all-cooperator state. We ran the simulations for \(k\in\{3,4,5,6\}\), \(N\in\{100,500\}\), and \(\alpha_D\in\{1.25,1.50,\ldots,3.00\}\) for the three games. The simulated results and the first-order approximations are plotted in Fig.~\ref{fig:sim_k6}.

The simulations for \(\rho_C\) test the first-order approximations in the regime where they are expected to be most accurate, namely \(N\delta\ll 1\) together with \(k\ll N\). The first condition ensures that selection remains perturbative over the fixation process, while the second condition is required by the pair approximation and the locally tree-like assumption used to reduce the process to a one-dimensional diffusion. For \(N=100\) and \(\delta=10^{-3}\), we have \(N\delta=0.1\), and for all simulated degrees \(k=3,4,5,6\), the condition \(k\ll N\) is also well satisfied. In this regime, the Monte Carlo estimates are broadly consistent with the first-order curves for the public goods, stag-hunt, and snowdrift games. The simulations also reproduce the predicted increase of \(\rho_C\) with \(\alpha_D\): larger defector aspirations make defectors less satisfied and therefore more likely to revise, increasing the opportunities for cooperative mutants to spread.

For \(N=500\) and \(\delta=10^{-3}\), the graph condition \(k\ll N\) remains well satisfied, but the weak-selection condition is less stringent because \(N\delta=0.5\). Therefore, these simulations should not be interpreted as a strict quantitative test of the first-order expansion. They are better viewed as a check of the qualitative trends beyond the formal regime \(N\delta\ll1\). For \(k=3\) and \(k=4\), visible deviations appear in all three games. These deviations become considerably smaller for \(k=5\) and \(k=6\). This pattern is consistent with the role of graph degree in the approximation. At low degree, local neighborhoods are small, local correlations are stronger, and the payoff effect of a single mutant is less diluted. Higher-order terms can therefore have a more visible effect. As \(k\) increases, payoff effects are averaged over larger neighborhoods and the effective selection contribution is weakened, so the first-order curves tend to provide a better approximation. Despite the quantitative deviations at low degree, the simulations retain the main qualitative pattern: increasing \(\alpha_D\) increases \(\rho_C\), and the role of graph degree remains consistent with the dilution mechanism.

\begin{figure}[H]
  \centering
  \includegraphics[width=\textwidth, keepaspectratio]{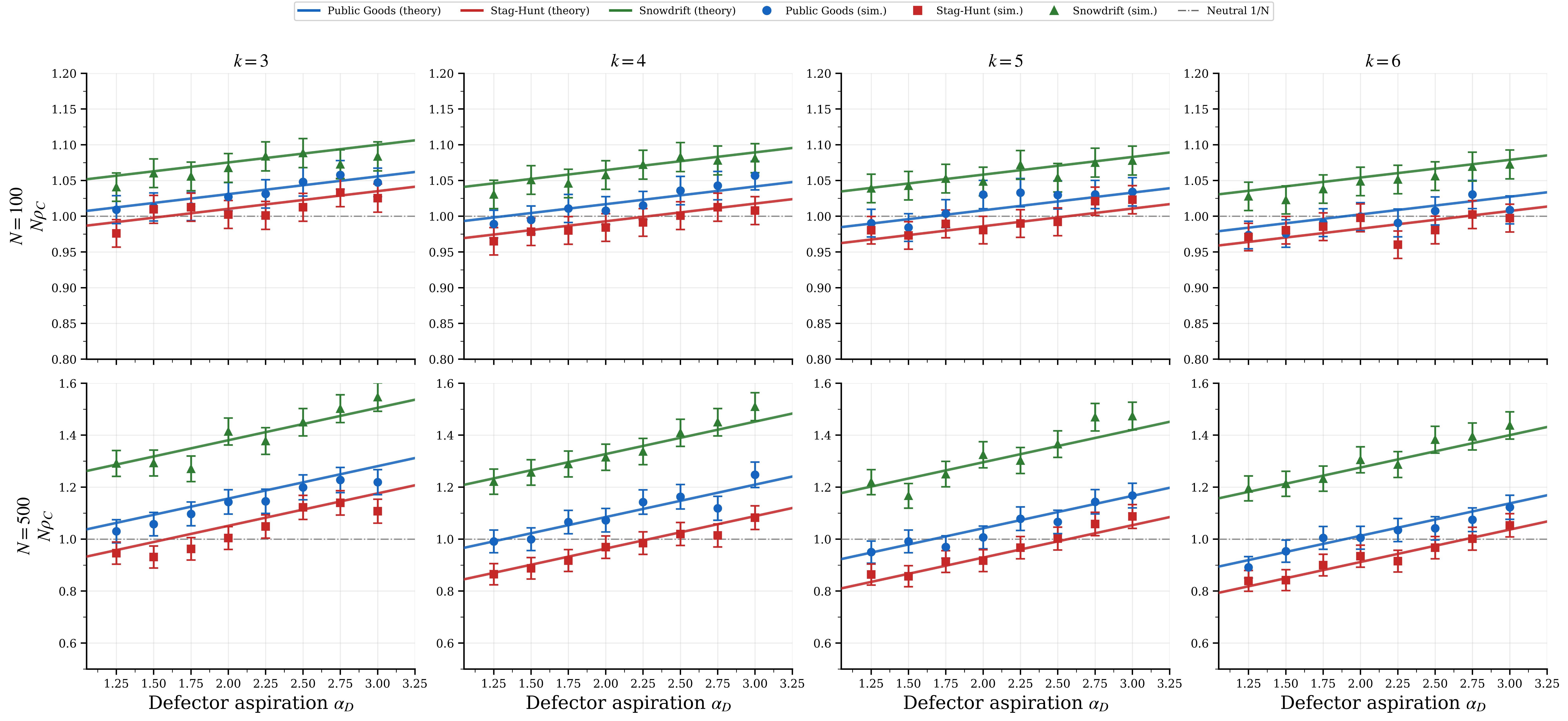}
\caption{
Monte Carlo estimates of the cooperator fixation probability \(\rho_C\).
A single cooperator is placed uniformly at random in an otherwise all-defector population on a connected simple random \(k\)-regular graph.
The simulations are shown for the public goods, stag-hunt, and snowdrift games, with \(k=3,4,5,6\), \(N\in\{100,500\}\), and \(\alpha_D\in\{1.25,1.50,\ldots,3.00\}\).
The remaining parameters are \(b=2\), \(c=1\), \(\alpha_C=0\), and \(\delta=10^{-3}\).
Points show Monte Carlo estimates with 95\% confidence intervals computed over graph batches.
Solid lines show the first-order predictions under weak selection from Eqs.~\eqref{eq:fix_pgg_asp}, \eqref{eq:fix_sh_explicit_final}, and \eqref{eq:fix_SD}.
The dashed horizontal line marks the neutral fixation probability \(1/N\).
Agreement is better for \(N=100\), where \(N\delta=0.1\), while clear deviations appear for \(N=500\), where \(N\delta=0.5\).
}
  \label{fig:sim_k6}
\end{figure}

\subsection{The fixation of defection}
To complement the cooperator invasion simulations, we also simulated the defector fixation probability \(\rho_D\). In each Monte Carlo run, one vertex was chosen uniformly at random and initialized as a defector, while all remaining vertices were initialized as cooperators. The stochastic update process was then iterated until absorption, and \(\rho_D\) was estimated as the fraction of runs that ended in the all-defector state. The results for \(N=100\), all three games, and \(k=3,4,5,6\) are shown in Fig.~\ref{fig:sim_rhoD}. For \(N=100\) and \(\delta=10^{-3}\), these simulations provide a quantitative comparison with the first-order approximation within the small-\(N\delta\) regime, since \(N\delta=0.1\) and \(k\ll N\).

The Monte Carlo estimates are broadly consistent with the first-order predictions. In all three games, \(\rho_D\) decreases as \(\alpha_D\) increases. This is expected because larger defector aspirations make defectors less satisfied and therefore more likely to revise. The simulations also reproduce the predicted role of graph degree. For the public goods and stag-hunt games, increasing \(k\) weakens the benefit term while the cost term remains present. In the snowdrift game, increasing \(k\) weakens both the benefit and cost terms. Thus, the defector invasion simulations show the same dilution mechanism as the cooperator invasion simulations, but from the opposite resident state.

Since the ratio \(\rho_C/\rho_D\) is determined by the two invasion probabilities, the agreement observed for \(\rho_C\) and \(\rho_D\) provides indirect numerical support for the first-order prediction of the ratio within the tested weak-selection regime.

\begin{figure}[H]
  \centering
  \includegraphics[width=\textwidth, height=10cm]{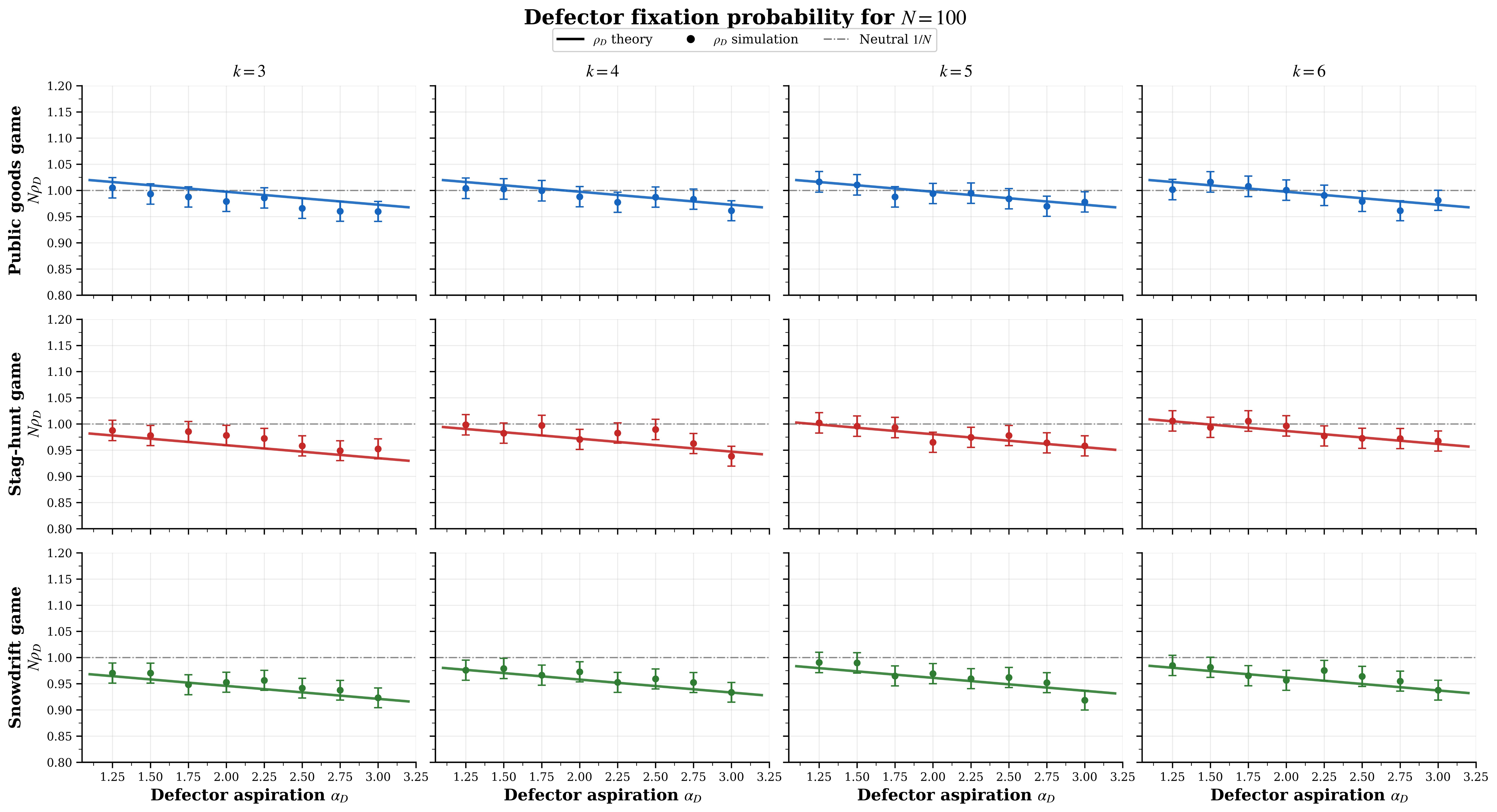}
\caption{
Monte Carlo estimates of the defector fixation probability \(\rho_D\) for \(N=100\).
Rows correspond to the public goods, stag-hunt, and snowdrift games, and columns correspond to \(k=3,4,5,6\).
A single defector is placed uniformly at random in an otherwise all-cooperator population on a connected simple random \(k\)-regular graph.
The remaining parameters are \(b=2\), \(c=1\), \(\alpha_C=0\), \(\delta=10^{-3}\), and \(\alpha_D\in\{1.25,1.50,\ldots,3.00\}\).
Points show Monte Carlo estimates with 95\% confidence intervals computed over graph batches.
Solid lines show the first-order predictions under weak selection from Eqs.~\eqref{eq:fix_pgg_asp}, \eqref{eq:fix_sh_explicit_final}, and~\eqref{eq:fix_SD}.
The dashed horizontal line gives the neutral fixation probability \(1/N\). The simulated values are in agreement with the first-order predictions for \(N=100\), where \(N\delta=0.1\).
}
\label{fig:sim_rhoD}
\end{figure}


\subsection{Sensitivity to selection intensity}

To examine the sensitivity of the analytical approximation to the strength of selection and the degree $k$, we also performed simulations over a small range of selection intensities. The population size and aspiration parameters were fixed at $N=100$, $\alpha_C=0$, and $\alpha_D=2.5$, while the graph degree was varied over $k\in\{3,4,5,6,7,8,9,10,12,15,20\}$ and the selection intensity was set to $\delta\in\{10^{-3},10^{-2},10^{-1}\}$. For each combination of $k$ and $\delta$, we estimated $\rho_C$ for the public goods, stag-hunt, and snowdrift games from Monte Carlo simulations starting with one cooperator in an otherwise all-defector population.  

The purpose of this test is to distinguish the quantitative
accuracy of the first-order expansion from the qualitative robustness
of the predicted degree effect. When \(\delta=10^{-3}\), we have
\(N\delta=0.1\), so these simulations provide the closest quantitative
comparison with the first-order approximation, particularly at the
lower degrees for which \(k/N\) is also small. For
\(\delta=10^{-2}\) and \(\delta=10^{-1}\), however,
\(N\delta=1\) and \(N\delta=10\), respectively. These cases lie outside
the formal first-order regime and are therefore included only as
qualitative trend checks, not as quantitative validations of the
analytical approximation. Across this wider parameter range, the
simulations continue to reproduce the central qualitative prediction
that increasing \(k\) weakens payoff-driven selection in a
game-dependent manner. Thus, the simulations are broadly consistent
with the analytical predictions where \(N\delta\) and \(k/N\) are
small, while the persistence of the degree effect outside this range
provides qualitative evidence only.

\begin{figure}[H]
  \centering
 \includegraphics[height=6cm,width=\textwidth, keepaspectratio]{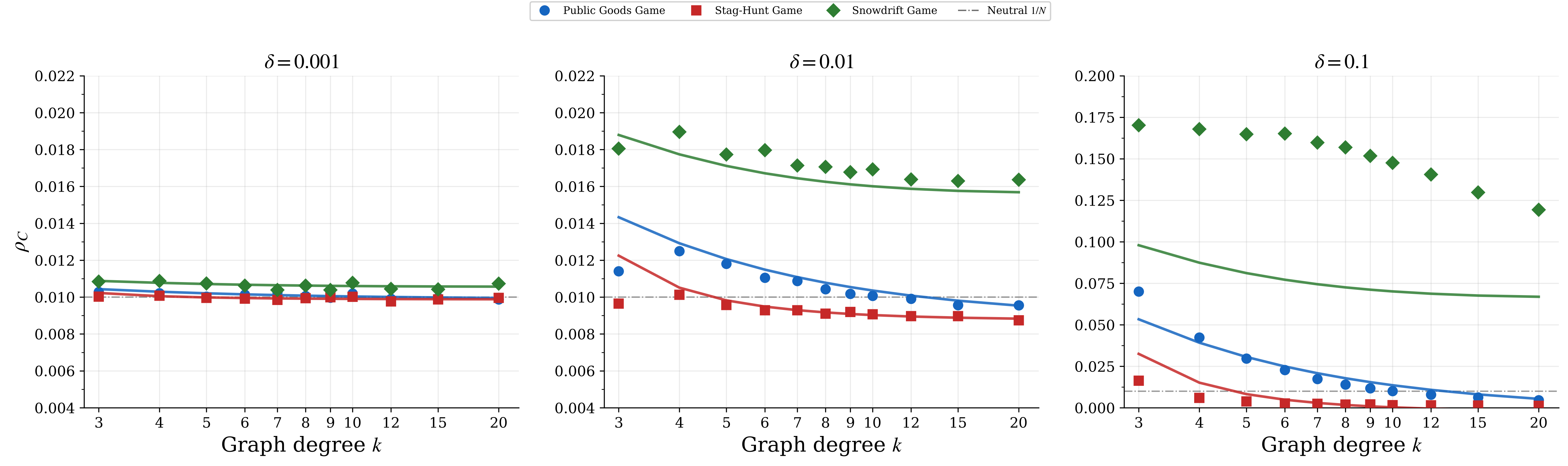}
\caption{
Sensitivity of cooperator fixation to the selection intensity \(\delta\).
The cooperator fixation probability \(\rho_C\) is estimated on connected simple random \(k\)-regular graphs for
\(k\in\{3,4,5,6,7,8,9,10,12,15,20\}\) and
\(\delta\in\{10^{-3},10^{-2},10^{-1}\}\).
Each simulation starts from one cooperator in an otherwise all-defector population, with
\(N=100\), \(b=2\), \(c=1\), \(\alpha_C=0\), and \(\alpha_D=2.5\).
Monte Carlo estimates (dots) are compared with first-order analytical predictions (solid lines).
For \(\delta=10^{-3}\), where \(N\delta=0.1\), the approximation is expected to be most accurate.
For larger \(\delta\), terms beyond first order become visible, but the decrease with graph degree remains apparent.
The cases \(\delta=10^{-2}\) and \(\delta=10^{-1}\) are included as qualitative trend checks rather than as quantitative validations of the first-order approximation.
}
  \label{fig:k_sweep_n100}
\end{figure}

\subsection{\label{subsec:sim_scope}Scope and limitations of the numerical comparisons}

Taken together, the simulations in this section provide a broader
numerical assessment. They cover graph degrees from \(k=3\) to \(k=20\),
the cooperator invasion setting for \(N\in\{100,500\}\), the defector
invasion setting for \(N=100\), all three games, and three values of
the selection intensity in the \(N=100\) sensitivity analysis.

Among these simulations, those with \(N=100\) and
\(\delta=10^{-3}\), for which \(N\delta=0.1\), provide the principal
quantitative comparisons with the first-order formulas, particularly
at the lower degrees for which \(k/N\) is also small. The simulations
with \(N=500\) and \(\delta=10^{-3}\), for which \(N\delta=0.5\), and
the sensitivity runs with \(N=100\) and
\(\delta\in\{10^{-2},10^{-1}\}\), for which
\(N\delta\in\{1,10\}\), lie outside the formal first-order scaling
\(N\delta\ll1\) and are therefore interpreted only as qualitative
trend checks. Similarly, the simulations at the upper end of the
degree range are used primarily to examine the predicted dependence
on \(k\), since \(k/N\) is then no longer very small.

All simulations are performed on connected simple random
\(k\)-regular graphs. The locally tree-like approximation is expected
to be most reliable when \(k/N\) is small and short cycles have a
negligible effect. Clustered networks, heterogeneous-degree networks,
and small-world networks are not considered here. Consequently, the
numerical results should not be interpreted as establishing
quantitative accuracy beyond the tested weak-selection and
sparse-graph settings.
\section{\label{sec:discussion}Discussion}

We studied multiplayer evolutionary games on regular graphs of degree $k\geq3$ under a hybrid aspiration and imitation rule. In this rule, dissatisfaction relative to an aspiration level affects whether an individual revises, while imitation based on payoffs affects which neighbor is copied. Under weak selection and the pair approximation on locally tree-like graphs, the first-order expansion of the fixation probabilities can be written as the sum of two terms. The first term depends on payoffs and varies with the graph degree. The second term is proportional to the aspiration gap \(\Delta\alpha=\alpha_D-\alpha_C\) and does not depend on \(k\).

This decomposition motivates what we call the dilution principle. In the present approximation, increasing the graph degree changes how local payoff events enter selection. In the public goods game, the relevant payoff effect is the marginal contribution of one cooperator to a larger group. In the stag-hunt game, it is the probability of full local coordination. In the snowdrift game, it is the production of a benefit by at least one cooperator together with the sharing of the cost among contributors. These mechanisms differ across games. However, in all three cases, the payoff term in the first-order drift decreases as \(k\) increases, whereas the aspiration term remains unchanged.

The comparison in Fig.~\ref{fig:cross_game} illustrates this decomposition. Under symmetric aspirations, \(\Delta\alpha=0\), the three games have different shapes and orderings because different payoff events enter the drift. In the snowdrift game, cooperation can be favored at low degree because one cooperator may generate the shared benefit. In the stag-hunt game, rare cooperators are disadvantaged because the benefit requires full local coordination. The public goods game gives an intermediate case, where the marginal contribution of a cooperator is diluted as the interaction group grows. Thus, within the present approximation, graph degree filters payoff effects in a way that depends on the game.

Aspiration asymmetry enters in a different way. When \(\Delta\alpha>0\), defectors have a higher aspiration level and are therefore more likely to revise. This creates a bias that can favor cooperation. In Fig.~\ref{fig:cross_game}, this appears as an upward shift of the three curves. Since the aspiration term in the first-order expansion does not depend on \(k\), this shift is not diluted by graph degree. Therefore, aspiration asymmetry can partly compensate for the weakening of payoff effects, especially at low and intermediate degrees.

The three games also differ in their behavior at large degree. In the public goods and stag-hunt games, the benefit terms vanish as \(k\) increases, but the cost term remains in the limiting condition. Cooperation therefore requires a sufficiently positive aspiration gap to offset this remaining cost. In the snowdrift game, both the benefit and cost terms vanish because the cost is shared among cooperators. In this case, the large-degree condition is determined by the sign of \(\Delta\alpha\). This contrast shows that the dilution principle does not take the same form in every game. Its expression depends on the payoff rule and on how benefits and costs enter local group interactions.

The hybrid update rule also places the model between two familiar classes of evolutionary dynamics. Compared with imitation models driven only by payoffs on graphs \cite{OhtsukiHauertLiebermanNowak2006SimpleRule,Chen2013SharpBenefitCost,AllenNowak2014GamesOnGraphs,GokhaleTraulsen2010PNAS,PenaWuArranzTraulsen2016,LiBroomDuWang2016PRE,WangPercSzolnoki2024NatCom}, the present dynamics include an aspiration term that remains visible when payoff effects are weakened by increasing degree. Compared with switching models based on aspiration \cite{KMRV1998,PPS1999,DWAW2014,DuWuWang2015SciRep}, the present dynamics retain an imitation step that depends on payoffs and is shaped by the graph. Thus, the first-order expansion shows how aspiration effects and payoff effects filtered by the network enter together.

The dilution principle is also not tied to the assumption that the two steps of the update rule share the same intensity. In the main model, the single weak-selection parameter \(\delta\) controls both aspiration responsiveness in the revision step and payoff bias in the imitation step.  This is the case of symmetric intensity. If the two channels are separated, with \(\delta_A\) controlling the aspiration gate and \(\delta_I\) controlling payoff-biased imitation, the relative weights of the two mechanisms, and hence the finite-degree success thresholds, depend on the ratio \(\delta_I/\delta_A\). This reweighting does not alter their different responses to graph degree: payoff-dependent contributions retain their game-specific dependence on \(k\), whereas the direct aspiration-gap contribution remains independent of \(k\). Thus, the persistence of the aspiration contribution is not a consequence of imposing the same intensity on the two behavioral steps.

\begin{figure}[H]
\centering
\includegraphics[height=6cm, width=\textwidth, keepaspectratio]{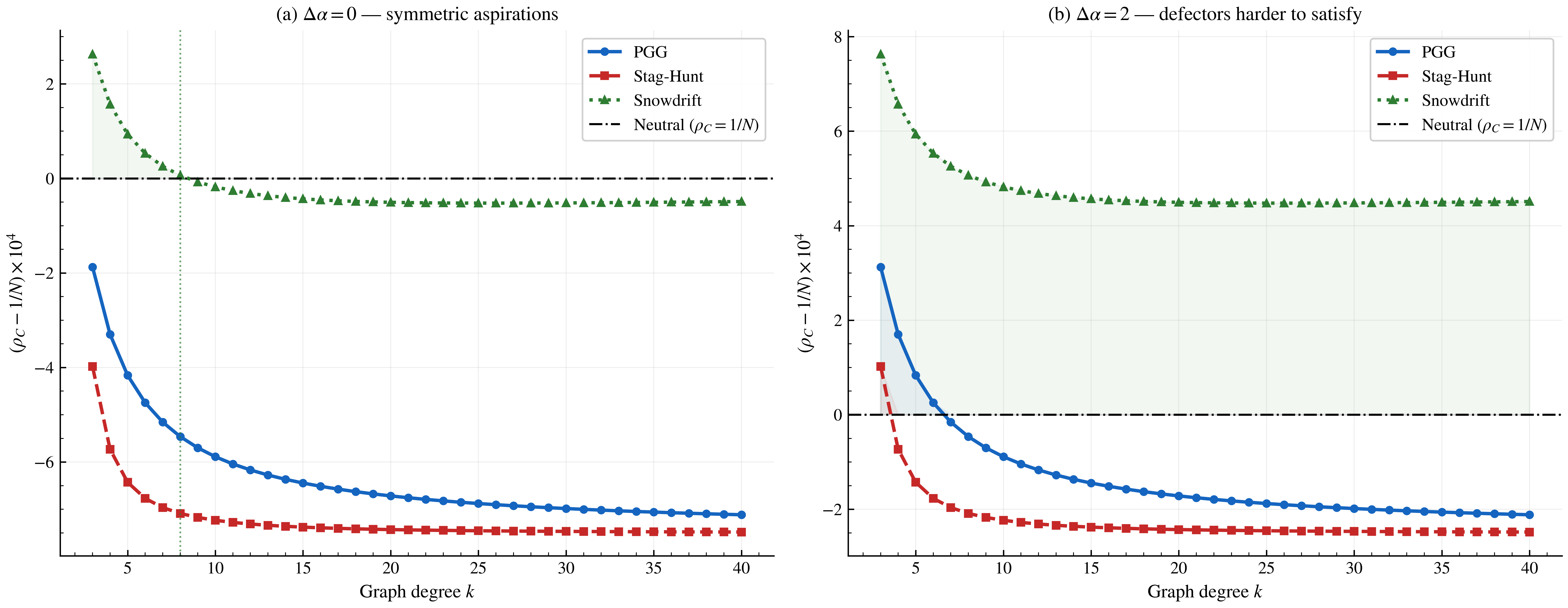}
\caption{Comparison across games of the selection advantage
\(\rho_C - 1/N\) as a function of graph degree \(k\)
(parameters: \(b=2\), \(c=1\), \(N=100\), \(\delta=10^{-3}\)).
The figure illustrates the dilution principle across three payoff mechanisms:
marginal contribution in the public goods game, full coordination in the stag-hunt game, and benefit production with cost sharing in the snowdrift game.
(a)~Under symmetric aspirations, \(\Delta\alpha=0\), the ordering and shape of the curves are determined by the payoff terms of each game. Snowdrift cooperation is favored at low degree but loses its advantage as \(k\) increases. The public goods and stag-hunt games remain below neutrality over the displayed range.
(b)~Under asymmetric aspirations, \(\Delta\alpha=2\), all curves shift upward by the same aspiration term, which does not depend on degree. The payoff structure still determines the relative ordering, but aspiration asymmetry can move some cases from disfavoring to favoring cooperator fixation. Shaded regions indicate where selection favors cooperator fixation, \(\rho_C>1/N\).}
\label{fig:cross_game}
\end{figure}

The framework also allows aspirations to vary within each strategic class. Suppose that individuals using strategy \(S\in\{C,D\}\) draw their aspiration level from a distribution \(f_S(\alpha)\) with mean \(m_S\) and variance \(\sigma_S^2\). For a realized payoff \(\pi\), the revision probability is
\[
F_S(\pi)=\int\frac{1}{1+e^{\delta(\pi-\alpha)}}f_S(\alpha)\,d\alpha .
\]
Under weak selection,
\[
F_S(\pi)=\frac12-\frac{\delta}{4}(\pi-m_S)+O(\delta^2).
\]
Thus, to first order, aspiration heterogeneity enters only through the mean aspiration level of each strategy. The previous results carry over after replacing \(\Delta\alpha\) by \(m_D-m_C\). Variances and higher moments enter only at higher order in \(\delta\). In this sense, the first-order conclusions are unchanged by variation in aspiration levels within each strategy.

Our analytical framework relies on the standard assumptions underlying the pair approximation and diffusion reduction: weak selection, invasion by a single mutant, population size much larger than graph degree, and locally tree-like \(k\)-regular graphs. These assumptions delimit both the validity of the analytical results and the interpretation of the numerical comparisons. Simulations for which \(N\delta\) and \(k/N\) are sufficiently small are used for quantitative comparisons with the first-order formulas, whereas simulations outside this range are interpreted only as qualitative checks of the predicted trends. Strong selection, small populations, clustered or heterogeneous-degree networks, and initial conditions containing many mutants may generate dynamics and correlations that are not captured by the present approximation. Whether the same dilution mechanism persists under such conditions remains a natural question for future work.

\appendix

\section{\label{AppendixA}Pair Approximation and Diffusion Dynamics}
This appendix derives the one-dimensional diffusion approximation used in the main text. We begin with the pair identities that express all pair and conditional frequencies in terms of the global cooperator frequency \(p_C\) and the local correlation \(q_{C|C}\). Then, the derivation proceeds in five steps. First, we compute the focal and conditional payoffs under the locally tree-like pair approximation. Second, we expand the transition probabilities for \(p_C\) over one time step to first order in \(\delta\). Third, we derive the neutral quasi-steady relation for \(q_{C|C}\). Fourth, we substitute this relation into the slow equation for \(p_C\) to obtain a one-dimensional diffusion approximation. Finally, we use this diffusion process to derive the fixation probabilities.

Throughout this appendix, the pair approximation is used in the standard locally tree-like regime for regular graphs. More precisely, we assume \(3\le k\ll N\). In this regime, adjacent vertices are unlikely to share additional common neighbors, so local neighborhoods can be described by conditional pair frequencies to leading order. We also assume weak selection, \(\delta\ll1\), with the finite population scaling \(N\delta\ll1\) used when expanding the Kimura fixation formula. Thus, the results below should be interpreted as first-order approximations for large, sparse, locally tree-like \(k\)-regular graphs under weak selection with a single-mutant initial condition.

Let $p_C$ and $p_D = 1 - p_C$ be the global frequencies of cooperators and defectors in the population, respectively. For any two strategies $X, Y \in \{C, D\}$, we define $p_{XY}$ as the frequency of $X-Y$ neighbor pairs, and $q_{X|Y}$ as the conditional probability of finding an $X$-type neighbor given a $Y$-type focal individual. These quantities are linked by the balance equation 
$$p_{XY} = p_X q_{Y|X} = p_Y q_{X|Y},$$ 
which yields $q_{Y|X} = p_Y q_{X|Y} / p_X$. Furthermore, the conditional probabilities must normalize to unity, meaning $q_{C|X} + q_{D|X} = 1$. Consequently, the entire state of the system can be parameterized using the global cooperator frequency $p_C$ and the local cooperator-cooperator correlation $q_{C|C}$. The remaining probabilities are completely determined as follows
\begin{subequations}\label{all-probs}
\begin{align}
p_D &= 1 - p_C,\\
q_{D|C} &= 1 - q_{C|C},\\
q_{C|D} &= \frac{p_C q_{D|C}}{p_D} = \frac{p_C(1 - q_{C|C})}{1 - p_C},\\
q_{D|D} &= 1 - q_{C|D} = 1 - \frac{p_C(1 - q_{C|C})}{1 - p_C},\\
p_{CC} &= p_C q_{C|C},\\
p_{CD} &= p_C q_{D|C} = p_C(1 - q_{C|C}),\\
p_{DD} &= p_D q_{D|D} = (1 - p_C) - p_C(1 - q_{C|C}).
\end{align}
\end{subequations}
\subsection{\label{AppendixA-part1}Expected and Conditional Payoffs}
We first derive focal expected payoffs and then conditional payoffs in pairwise configurations. Throughout, we use the locally tree-like approximation with \(N\gg k\). Thus, a focal individual and one of its neighbors are unlikely to share additional common neighbors, and the remaining neighbors can be treated as conditionally independent to leading order.

Consider a focal cooperator with exactly \(n\) cooperative neighbors. The local configuration probability is
\[
\binom{k}{n}q_{C|C}^n(1-q_{C|C})^{k-n}.
\]
The focal participates in \(k+1\) games: one game organized by itself and \(k\) games organized by its neighbors. To average over the neighborhoods of these hosts, define, for \(X\in\{C,D\}\),
\[
\langle g(\ell)\rangle_{X,m}=\sum_{\ell=0}^m\binom{m}{\ell}q_{C|X}^{\ell}(1-q_{C|X})^{m-\ell}g(\ell).
\]
Thus,
\[
\pi_C^C=\langle a_{\ell+1}\rangle_{C,k-1},\qquad
\pi_C^D=\langle a_{\ell}\rangle_{D,k-1},
\]
where the shift $\ell+1$ accounts for the focal cooperator. Averaging over all $k+1$ games gives
\begin{equation}\label{payoffC}
\pi_C(n) = \frac{a_n + n\pi_C^C + (k-n)\pi_C^D}{k+1}.
\end{equation}
For a focal defector with $n$ cooperative neighbors,
\[
\binom{k}{n}q_{C\mid D}^n(1-q_{C\mid D})^{k-n}.
\]
The analogous neighborhood payoffs are
\[
\pi_D^C=\langle b_{\ell+1}\rangle_{C,k-1},\qquad
\pi_D^D=\langle b_{\ell}\rangle_{D,k-1},
\]
and the total expected payoff is
\begin{equation}\label{payoffD}
\pi_D(n)=\frac{b_n+n\pi_D^C+(k-n)\pi_D^D}{k+1}.
\end{equation}

We now compute conditional payoffs $\pi_{X|Y}$, i.e., the expected payoff of a focal type-$X$ individual given one fixed neighbor of type $Y$. Applying pair approximation to the remaining $k-1$ neighbors yields four cases:
\begin{subequations}
\begin{align}
\pi_{C|C}
&=\frac{\left(2+(k-1)q_{C|C}\right)\pi_C^C+(k-1)q_{D|C}\pi_C^D}{k+1},\\
\pi_{C|D}
&=\frac{\langle a_{\ell}\rangle_{C,k-1}+(k-1)q_{C|C}\pi_C^C+\left(1+(k-1)q_{D|C}\right)\pi_C^D}{k+1},\\
\pi_{D|D}
&=\frac{\left(2+(k-1)q_{D|D}\right)\pi_D^D+(k-1)q_{C|D}\pi_D^C}{k+1},\\
\pi_{D|C}
&=\frac{\langle b_{\ell+1}\rangle_{D,k-1}+(k-1)q_{D|D}\pi_D^D+\left(1+(k-1)q_{C|D}\right)\pi_D^C}{k+1}.
\end{align}
\end{subequations}
\subsection{\label{AppendixA-part3}$p_C$ in one time step}
In a single time step, the global frequency of cooperators, $p_C$, decreases by $1/N$ if a randomly selected cooperator updates its strategy to defection. This occurs when a focal cooperator decides to revise its strategy and successfully imitates one of its defector neighbors. The probability of this event is given by
\begin{align}
&\Pr\Bigl(\Delta p_C=-\frac{1}{N}\Bigr)=p_C \left\langle
\frac{1}{1+e^{\delta(\pi_C(n)-\alpha_C)}}
\frac{(k-n)e^{\delta\pi_{D|C}}}{
 ne^{\delta\pi_{C|C}}+(k-n)e^{\delta\pi_{D|C}}}\right\rangle_{C,k},\label{PrPC-}
\end{align}
where \(\langle \cdot \rangle_{C,k}\) denotes expectation with respect to \(n\), the number of cooperative neighbors of the focal cooperator.

Under weak selection, \(\delta\ll1\), we expand the two factors to first order
\begin{equation*}
\begin{split}
\frac{1}{1+e^{\delta(\pi_C(n)-\alpha_C)}}&=\frac{1}{2}-\frac{\delta}{4}\Bigl(\pi_C(n)-\alpha_C\Bigr)+\mathcal{O}(\delta^2),\\
\frac{(k-n)e^{\delta\pi_{D|C}}}{ne^{\delta\pi_{C|C}}+(k-n)e^{\delta\pi_{D|C}}}&=\frac{k-n}{k}+\delta\frac{(k-n)n}{k^2}\left(\pi_{D|C}-\pi_{C|C}\right)+\mathcal{O}(\delta^2).
\end{split}
\end{equation*}
Multiplying these expansions, using the payoff expressions from Subsection~\ref{AppendixA-part1}, and substituting into Eq.~\eqref{PrPC-}, we obtain
\begin{equation}\label{PrPC--}
\begin{split}
\Pr\Bigl(\Delta p_C=-\frac{1}{N}\Bigr)
&= \frac{p_C}{2k} \left\langle k-n \right\rangle_{C,k}+\frac{\delta p_C}{2k^2}\left(\pi_{D|C}-\pi_{C|C}\right)\left\langle(k-n)n \right\rangle_{C,k}+\frac{\delta p_C}{4k}\Biggl[\alpha_C\left\langle k-n \right\rangle_{C,k}\\
&\quad-\frac{\pi_C^C\left\langle n(k-n)\right\rangle_{C,k}}{k+1}
-\frac{\pi_C^D\left\langle (k-n)^2\right\rangle_{C,k}}{k+1}-\frac{\left\langle (k-n)a_n\right\rangle_{C,k}}{k+1}
\Biggr]+\mathcal{O}(\delta^2).
\end{split}
\end{equation}
Because $n$ follows a binomial distribution with parameters $k$ and $q_{C|C}$, we have the following moments
\begin{align*}
\langle k-n\rangle_{C,k} &= kq_{D|C},\\
\langle (k-n)^2\rangle_{C,k} &= kq_{D|C}\left(q_{C\mid C} + kq_{D\mid C}\right),\\
\langle n(k-n)\rangle_{C,k} &= k(k-1)q_{C\mid C}q_{D\mid C}.
\end{align*}
By substituting these moments into Eq.~\eqref{PrPC--} and using the relationship $p_Cq_{D|C}=p_{CD}$, we obtain 
\begin{align}
\Pr\Bigl(\Delta p_C=-\frac{1}{N}\Bigr)
&= \frac{p_{CD}}{2}
+\frac{\delta(k-1)p_{CD}q_{C|C}}{2k}\left(\pi_{D|C}-\pi_{C|C}\right)+\frac{\delta p_{CD}}{4}\Biggl[\alpha_C
-\frac{1}{k+1}\Bigl((k-1)q_{C|C}\pi_C^C\nonumber\\
&\qquad\qquad\qquad+(q_{C|C}+kq_{D|C})\pi_C^D\Bigr)-\frac{\left\langle (k-n)a_n\right\rangle_{C,k}}{k(k+1)q_{D|C}}\Biggr]+\mathcal{O}(\delta^2).\label{PrPC---}
\end{align}

Conversely, the frequency of cooperators increases by $1/N$ if a focal defector updates its strategy and imitates a cooperative neighbor. This occurs with probability
\begin{equation}\label{PrPC+}
\Pr\left(\Delta p_C=\frac{1}{N}\right)
= p_D \left\langle
\frac{1}{1+e^{\delta(\pi_D(n)-\alpha_D)}}
\frac{ne^{\delta\pi_{C|D}}}{
 ne^{\delta\pi_{C|D}}+(k-n)e^{\delta\pi_{D|D}}}\right\rangle_{D,k},
\end{equation}
where $n$ is now binomially distributed with parameters $k$ and $q_{C|D}$. Similarly, we have
\begin{align}
\Pr\Bigl(\Delta p_C = \frac{1}{N}\Bigr)
&= \frac{p_{CD}}{2}
 + \frac{\delta(k-1)p_{CD}q_{D|D}}{2k}\left(\pi_{C|D}-\pi_{D|D}\right)+ \frac{\delta p_{CD}}{4}\Biggl[\alpha_D
- \frac{1}{k+1}\Bigl((k-1)q_{D|D}\pi_D^D\nonumber\\
&\quad\quad\qquad+(q_{D|D}+kq_{C|D})\pi_D^C\Bigr) - \frac{\langle nb_n\rangle_{D,k}}{k(k+1)q_{C|D}}\Biggr] + \mathcal{O}(\delta^2).\label{PrPC+++}
\end{align}

Combining Eqs.~\eqref{PrPC---} and \eqref{PrPC+++}, the expected rate of change for $p_C$ can be written as
\begin{equation}\label{sec1_eq1}
\dot{p}_C = \frac{1}{N}\left[\Pr\Bigl(\Delta p_C = \frac{1}{N}\Bigr) - \Pr\Bigl(\Delta p_C = -\frac{1}{N}\Bigr)\right]= \frac{\delta p_{CD}}{N}\left(\frac{\Delta\alpha}{4} + F(p_C,q_{C|C})\right) + \mathcal{O}(\delta^2).
\end{equation}
Here, $F(p_C, q_{C|C})$ is a function capturing the local payoff differentials given by
\begin{equation}\label{sec1_eq2}
\begin{split}
&F(p_C,q_{C|C})\\
=&\frac{(k-1)}{2k}\left[q_{D|D}\left(\pi_{C|D}-\pi_{D|D}\right)+q_{C|C}\left(\pi_{C|C}-\pi_{D|C}\right)\right]+\frac{1}{4}\Biggl[
+\frac{\langle(k-n)a_n\rangle_{C,k}}{k(k+1)q_{D|C}}-\frac{\langle nb_n\rangle_{D,k}}{k(k+1)q_{C|D}}\\
&+\frac{(k-1)q_{C|C}\pi_C^C+(q_{C|C}+kq_{D|C})\pi_C^D-(k-1)q_{D|D}\pi_D^D-(q_{D|D}+kq_{C|D})\pi_D^C}{k+1}\Biggr]\\
=&\frac{(k-1)q_{C|C}\Bigl(3k+2+2(k-1)q_{C|C}-5kp_C\Bigr)}{4k(k+1)(1-p_C)}\pi_C^C+\frac{(k-1)q_{D|D}\langle a_{l}\rangle_{C,k-1}-(k-1)q_{C|C}\langle b_{l+1}\rangle_{D,k-1}}{2k(k+1)}\\
&+\frac{k(3k-2)(1-p_C) - 2k(k-1)p_C + (k-1)\big[ p_C(5k-2) - k \big]q_{C|C} - 2(k-1)^2 q_{C|C}^2}{4k(k+1)(1-p_C)}\pi_C^D\\
&-\frac{(k-1)\bigl(1-2p_C+p_Cq_{C|C}\bigr)\Bigl(3k+2+2(k-1)q_{C|C}-5kp_C\Bigr)}{4k(k+1)(1-p_C)^2}\pi_D^D+\frac{\langle(k-n)a_n\rangle_{C,k}}{4k(k+1)q_{D|C}}-\frac{\langle nb_n\rangle_{D,k}}{4k(k+1)q_{C|D}}\\
&-\frac{(3k-2)(1-p_C)^2 + (k-1)(1-p_C)\big[(5k-2)p_C-2\big](1-q_{C|C}) -2p_C(k-1)^2 (1-q_{C|C})^2 }{4k(k+1)(1-p_C)^2}\pi_D^C.
\end{split}
\end{equation}

\subsection{\label{AppendixA-part4}Frequency $q_{C|C}$}
Now, we are interested in the rate of change of $q_{C|C}$ in one time step. When an update results in a focal cooperator switching its strategy to defection, the total number of $C$-$C$ edges in the network decreases by $2n$, where $n$ is the number of cooperative neighbors connected to the focal cooperator. The probability of this specific event occurring in a given time step is
\begin{equation}\label{PrPCC-}
\begin{split}
    \Pr\left(\Delta p_{CC}=-\frac{2n}{kN}\right)
    &=p_C\frac{1}{1+e^{\delta(\pi_C(n)-\alpha_C)}}
    \frac{(k-n)e^{\delta\pi_{D|C}}}{
 ne^{\delta\pi_{C|C}}+(k-n)e^{\delta\pi_{D|C}}}=\frac{p_C(k-n)}{2k}+\mathcal{O}(\delta).
\end{split}
\end{equation}
Conversely, if the update results in a focal defector switching to cooperation, the number of $C$-$C$ edges increases by $2n$. The probability of this event is
\begin{align}
    &\Pr\left(\Delta p_{CC}=\frac{2n}{kN}\right)=p_D\frac{1}{1+e^{\delta(\pi_D(n)-\alpha_D)}}
\frac{ne^{\delta\pi_{C|D}}}{
 ne^{\delta\pi_{C|D}}+(k-n)e^{\delta\pi_{D|D}}}=\frac{np_D}{2k}+\mathcal{O}(\delta).\label{PrPCC+}
\end{align}

By combining Eqs.~\eqref{PrPCC-} and \eqref{PrPCC+} and taking the expectation over all possible neighborhood configurations, the expected rate of change of $p_{CC}$ in a single time step becomes
\begin{equation}\label{changePCC}
\begin{split}
\dot{p}_{CC}&=\left\langle\frac{2n}{kN} \Pr\left(\Delta p_{CC}=\frac{2n}{kN}\right)\right\rangle_{D,k}-\left\langle\frac{2n}{kN} \Pr\left(\Delta p_{CC}=-\frac{2n}{kN}\right)\right\rangle_{C,k}\\
&=\frac{p_D}{k^2N}\left\langle n^2\right\rangle_{D,k}-\frac{p_C}{k^2N}\left\langle n(k-n)\right\rangle_{C,k}+\mathcal{O}(\delta)\\
&=\frac{p_{CD}}{kN}\left(1+(k-1)(q_{C|D}-q_{C|C})\right)+\mathcal{O}(\delta).
\end{split}
\end{equation}
To isolate the dynamics of $q_{C|C}$, we use the relation $p_{CC} = q_{C|C}p_C$, which leads to
$$\dot{p}_{CC} = p_C\dot{q}_{C|C} + q_{C|C}\dot{p}_C.$$
Rearranging this expression allows us to solve for the rate of change of $q_{C|C}$ over one time step. Because $\dot{p}_C$ is of order $\mathcal{O}(\delta)$, the term $q_{C|C}\dot{p}_C/p_C$ gets absorbed into the $\mathcal{O}(\delta)$ remainder when evaluating the zeroth-order dynamics
\begin{equation}
\begin{split}
\dot{q}_{C|C}
&=\frac{\dot{p}_{CC}}{p_C}-\frac{q_{C|C}\dot{p}_C}{p_C}=\frac{q_{D|C}}{kN}\left(1+(k-1)(q_{C|D}-q_{C|C})\right)+\mathcal{O}(\delta).
\end{split}
\end{equation}

\subsection{\label{AppendixA-part5}Global Dynamics}
The evolutionary dynamics of the population unfold across two distinct time scales. On the fast time scale $\mathcal{O}(1)$, the local pair correlations, captured by $q_{C|C}$, relax rapidly to a quasi-steady state. In contrast, the global frequency of cooperators, $p_C$, evolves much more gradually on the slow time scale $\mathcal{O}(\delta)$. 
In this quasi-steady state, the expected change in local correlations vanishes ($\dot{q}_{C|C} = 0$), yielding the condition
\[
q_{C|C}=q_{C|D}+\frac{1}{k-1}.
\]

This condition is the neutral pair closure for locally tree-like regular graphs. It is obtained from the zeroth-order dynamics of local pair correlations, before payoff terms enter. This use is justified under weak selection because \(p_C\) changes on the slow \(O(\delta)\) time scale, while \(q_{C|C}\) relaxes on the fast neutral time scale. We therefore use this relation as the leading-order closure for the slow diffusion of \(p_C\). It is expected to be accurate for large, sparse, locally tree-like regular graphs, but it may fail when clustering, short loops, degree heterogeneity, strong selection, or many initial mutants create correlations not captured by pair frequencies.

By substituting this constraint back into Eq.~\eqref{all-probs}, we obtain
\begin{subequations}\label{identities}
\begin{align}
q_{C|C} &= \frac{1}{k-1}+\frac{k-2}{k-1}p_C,\\
p_D &= 1-p_C,\\
q_{D|C} &= \frac{k-2}{k-1}(1-p_C),\\
q_{C|D} &= \frac{k-2}{k-1}p_C,\\
q_{D|D} &= 1-\frac{k-2}{k-1}p_C,\\
p_{CD} &= \frac{k-2}{k-1}(1-p_C)p_C.
\end{align}
\end{subequations}
Thus, the entire dynamics of the population can be characterized by \(p_C\).

To model this evolution continuously, let us define a short time interval $\Delta t$ such that one unit of continuous time corresponds to $1/(\Delta t)$ discrete microscopic update steps. Within this diffusion framework, the infinitesimal mean of the change in $p_C$ per unit time is given by
\begin{equation}\label{mean}
\mathbb{E}[\Delta p_C|p_C(t)=p_C]=\dot{p}_C\Delta t=\delta \mu(p_C)\Delta t,
\end{equation}
where 
\begin{equation}
\mu(p_C)=\frac{p_{CD}}{N}\left(\frac{\Delta\alpha}{4}+G(p_C)\right),
\end{equation}
Here, we have
\begin{align}
  G(p_C)=& F\left(p_C, \frac{1}{k-1} + \frac{k-2}{k-1}p_C\right)\nonumber\\
  =&\frac{(3k + 4) + (3k + 4)(k - 2)p_C}{4k(k + 1)}\langle a_{n+1} \rangle_{C,k-1}+ \frac{(3k^2 - 3k - 2) - (k-2)(3k+2)p_C}{4k(k+1)}\langle a_{n} \rangle_{D,k-1}\nonumber\\
  &+ \frac{(k - 1) - (k - 2)p_C}{2k(k + 1)}\langle a_{n}\rangle_{C,k-1}+ \frac{(k-1)\langle(k-n)a_n\rangle_{C,k}}{4k(k+1)(k-2)(1-p_C)}
  -\frac{1 + (k-2)p_C}{2k(k+1)}\langle b_{n+1}\rangle_{D,k-1}\nonumber\\
  &-   \frac{(k-1)\langle nb_n\rangle_{D,k}}{4k(k+1)(k-2)p_C}-\frac{(3k + 4)(k - 1) - (3k + 4)(k - 2)p_C}{4k(k+1)}\langle b_{n}\rangle_{D,k-1}\nonumber\\
  & -\frac{(k+2) + (k-2)(3k+2)p_C}{4k(k+1)}\langle b_{n+1}\rangle_{C,k-1}.\label{expression_G}
\end{align}
The displayed expression is interpreted by continuous extension at $p_C=0$ and $p_C=1$; the apparent singularities are removable because the corresponding binomial moments vanish to the same order.
For infinitesimal variance of the change in $p_C$ in the quasi-steady state, a straightforward calculation shows that
\begin{equation}\label{variance}
\begin{split}
\mathrm{var}[\Delta p_C|p_C(t)=p_C]&=\frac{\Delta t}{N^2}\Bigl(\Pr\Bigl(\Delta p_C=\frac{1}{N}\Bigr)+\Pr\Bigl(\Delta p_C=-\frac{1}{N}\Bigr)\Bigr)+o(\Delta t)\\
&=\sigma^2(p_C)\Delta t+o(\Delta t),
\end{split}
\end{equation}
where
\begin{equation}
\sigma^2(p_C)=\frac{p_{CD}}{N^2}.
\end{equation}
As $\Delta t \ll 1$, Eqs.~\eqref{mean} and~\eqref{variance} show that the global frequency of cooperators can be approximated by a one-dimensional diffusion process with drift $\mu(p_C)$, variance $\sigma^2(p_C)$, and two absorbing boundaries at $p_C=0$ and $p_C=1$.

Now, suppose a proportion $p$ of cooperators is introduced into a resident population of defectors. The probability that this cooperator lineage eventually takes over the entire population is given by the classical Kimura diffusion formula \cite{crowkimura1970,ewens2004}
\begin{equation}\label{fixation}
\rho_C(p,\delta)\approx\frac{\int_{0}^p\exp\left\{-\int_0^x \frac{2\delta\mu(y)}{\sigma^2(y)}\,dy\right\}dx}{\int_{0}^1\exp\left\{-\int_0^x \frac{2\delta\mu(y)}{\sigma^2(y)}\,dy\right\}dx}.
\end{equation}
Dividing \eqref{mean} by \eqref{variance} leads to
\begin{equation}
\frac{\delta\mu(p_C)}{\sigma^2(p_C)} = N\delta\left(\frac{\Delta\alpha}{4}+G(p_C)\right).
\end{equation}
Hence
\[
\int_0^x\frac{2\delta\mu(y)}{\sigma^2(y)}\,dy=N\delta\left(\frac{x\Delta\alpha}{2}+2\int_0^xG(y)\,dy\right).
\]
Thus, the exponent in the fixation formula is controlled by \(N\delta\), not by \(\delta\) alone. If the payoff and aspiration parameters are of order one and \(G\) is bounded on \([0,1]\), then the exponent is uniformly of order \(N\delta\). Therefore, under the scaling \(N\delta\ll1\), the expansion
\[
e^{-u}=1-u+o(u)
\]
gives
\begin{equation}
\begin{split}
\int_{0}^{p}\exp\left\{-\int_0^x \frac{2\delta\mu(y)}{\sigma^2(y)}dy\right\}dx&=p-\delta\int_{0}^{p}\int_0^x \frac{2\mu(y)}{\sigma^2(y)}dydx+o(N\delta)\\
&=p-N\delta\left(\frac{p^2\Delta \alpha}{4}+2\int_{0}^{p}\int_0^xG(y)dydx\right)+o(N\delta).
\end{split}
\end{equation}
Evaluating the same expression at \(p=1\) gives the denominator in \eqref{fixation} as 
\begin{equation}
\int_{0}^{1}\exp\left\{-\int_0^x \frac{2\delta\mu(y)}{\sigma^2(y)}dy\right\}dx=1-N\delta\left(\frac{\Delta \alpha}{4}+2\int_{0}^{1}\int_0^xG(y)dydx\right)+o(N\delta).
\end{equation}
Then, the expansion of its reciprocal is also controlled by \(N\delta\).
Using the expansion 
\begin{equation}\label{eq_star}
\frac{1}{1-u}=1+u+o(u),
\end{equation}
under the assumption \(N\delta\ll1\), Eq.~\eqref{fixation} becomes
\begin{equation}\label{prob_C_p}
\begin{split}
\rho_C(p,\delta)
&=p+N\delta\left(\frac{p(1-p)\Delta \alpha}{4}+2p\int_{0}^{1}\int_0^xG(y)dydx-2\int_{0}^{p}\int_0^xG(y)dydx\right)+o(N\delta)\\
&=p+N\delta\left(\frac{p(1-p)\Delta \alpha}{4}+2(1-p)\int_{0}^{p} y\,G(y)dy + 2p\int_{p}^{1}(1-y)G(y)dy\right)+o(N\delta).
\end{split}
\end{equation}

By symmetry, if we introduce a proportion $p$ of defectors into a resident population of cooperators, the probability that the defectors fixate is simply
\begin{equation}\label{prob_D_p}
\begin{split}
\rho_D(p,\delta)
&=1-\rho_C(1-p,\delta)\\
&=p-N\delta\left(\frac{p(1-p)\Delta \alpha}{4}+2p\int_{0}^{1-p} y\,G(y)dy+ 2(1-p)\int_{1-p}^{1}(1-y)G(y)dy\right)+o(N\delta)
\end{split}
\end{equation}

Finally, taking the ratio of the two expressions in \eqref{prob_C_p} and \eqref{prob_D_p}, and using \eqref{eq_star},
we obtain
\begin{equation}\label{ratio_p}
\begin{split}
\frac{\rho_C(p,\delta)}{\rho_D(p,\delta)}
&=
1+N\delta \left(
\frac{(1-p)\Delta \alpha}{2}
+\frac{2}{p}\int_{0}^{1} \min\{y,1-y,p,1-p\} G(y) \, dy 
\right)
+o(N\delta).
\end{split}
\end{equation}
The reciprocal expansion is justified because the correction in the denominator is of order \(N\delta\). Thus, for fixed \(p>0\), the ratio expansion is valid when \(N\delta\ll1\), provided the payoff and aspiration parameters are of order one.

\subsection{\label{AppendixA-part6}Fixation Probability}

In this section, we will derive $\rho_C\!\left(\frac{1}{N},\delta\right)$, the fixation probability of cooperation introduced as a single individual in a population of defectors, and $\rho_D\!\left(\frac{1}{N},\delta\right)$, the fixation probability of defection introduced as a single individual in a population of cooperators. We begin by establishing the asymptotic behavior of the integral expressions in Eqs.~\eqref{prob_C_p}, \eqref{prob_D_p} and \eqref{ratio_p} as $p\ll1$.

\begin{lemma}\label{lem:rhoC_smallp}
Assume that $G$ is bounded. Then, for sufficiently small $p\ll 1$, we have
\begin{equation}\label{eq:rhoC_smallp}
(1-p)\int_{0}^{p} y\,G(y)\,dy+p\int_{p}^{1}(1-y)\,G(y)\,dy=p\int_{0}^{1}(1-y)\,G(y)\,dy+\mathcal{O}(p^{2}).
\end{equation}
\end{lemma}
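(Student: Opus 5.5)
Let $M:=\sup_{y\in[0,1]}|G(y)|<\infty$, which exists by the boundedness assumption. The plan is to write the left-hand side of \eqref{eq:rhoC_smallp} as the target term $p\int_0^1(1-y)G(y)\,dy$ plus remainders, and to bound each remainder by a constant multiple of $p^2$ using only $M$. No regularity of $G$ beyond measurability and boundedness is needed. In particular, the removable singularities of \eqref{expression_G} at $p_C=0,1$ play no role once we know $G$ is bounded.

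\textbf{Step 1: the first term.} For $0<p<1$ we have
\[
\Bigl|(1-p)\int_0^p yG(y)\,dy\Bigr|\le M\int_0^p y\,dy=\frac{Mp^2}{2}.
\]
This term is therefore $\mathcal{O}(p^2)$ by itself.

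\textbf{Step 2: the second term.} We split the integral over $[0,1]$ as
\[
p\int_p^1(1-y)G(y)\,dy=p\int_0^1(1-y)G(y)\,dy-p\int_0^p(1-y)G(y)\,dy.
\]
The subtracted piece is bounded in absolute value by $p\cdot Mp=Mp^2$, since $0\le 1-y\le 1$ on $[0,p]$.

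\textbf{Step 3: combine.} Adding the two estimates gives
\[
(1-p)\int_{0}^{p} yG(y)\,dy+p\int_{p}^{1}(1-y)G(y)\,dy=p\int_{0}^{1}(1-y)G(y)\,dy+R(p),\qquad |R(p)|\le \tfrac32 Mp^2 .
\]
This is exactly \eqref{eq:rhoC_smallp}, and the $\mathcal{O}(p^2)$ constant is uniform, depending only on $M$.

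There is no genuine obstacle in this argument. The only point needing care is that the error must be $\mathcal{O}(p^2)$ and not merely $o(p)$. Boundedness of $G$ gives this directly, since each discarded piece is an integral over an interval of length $p$ multiplied by a factor that is itself $\mathcal{O}(p)$.
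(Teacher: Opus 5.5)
Your proof is correct and follows essentially the same route as the paper: you split $\int_p^1=\int_0^1-\int_0^p$, collect the remainder $R(p)=(1-p)\int_0^p yG(y)\,dy-p\int_0^p(1-y)G(y)\,dy$, and bound both pieces using $M=\sup_{[0,1]}|G|$. Your constant $\tfrac32 Mp^2$ agrees with the paper's bound $(1-p)\tfrac{M}{2}p^2+Mp^2$, except that you use $1-p\le 1$ to drop the factor $1-p$.
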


\begin{proof}
First, note that
\[
\int_{p}^{1}(1-y)G(y)\,dy
=\int_{0}^{1}(1-y)G(y)\,dy-\int_{0}^{p}(1-y)G(y)\,dy.
\]
Substituting this expression into the left-hand side of Eq.~\eqref{eq:rhoC_smallp} yields
\[(1-p)\int_{0}^{p} yG(y)\,dy+p\int_{p}^{1}(1-y)G(y)\,dy=p\int_{0}^{1}(1-y)G(y)\,dy+R(p),\]
where 
\[
R(p)=(1-p)\int_{0}^{p} yG(y)\,dy-p\int_{0}^{p}(1-y)G(y)\,dy.
\]
Since $G$ is bounded, we let $M = \sup_{y \in [0,1]} |G(y)| < \infty$. It follows that
\begin{align*}
&\left|\int_{0}^{p} yG(y)\,dy\right|\le M\int_{0}^{p} y\,dy=\frac{M}{2}p^{2},\\
&\left|\int_{0}^{p} (1-y)G(y)\,dy\right|\le M\int_{0}^{p} 1\,dy=Mp.
\end{align*}
Consequently, the remainder satisfies
\[
|R(p)|
\le (1-p)\cdot\frac{M}{2}p^{2}+p\cdot Mp
=\mathcal{O}(p^{2}),
\]
which completes the proof. 
\end{proof}

Substituting the expansion from Eq.~\eqref{eq:rhoC_smallp} into Eq.~\eqref{prob_C_p} with $p=1/N$ yields the following result.
\begin{proposition}\label{prop:rhoC_asymp}
The fixation probability of a single cooperator mutant in a resident population of defectors is given by the approximation
\begin{equation}\label{eq:rhoC_coeff}
\rho_C
\approx\frac{1}{N}
+\delta\left[
\frac{\Delta\alpha}{4}
+2\int_0^1 (1-y)G(y)\,dy
\right].
\end{equation}
\end{proposition}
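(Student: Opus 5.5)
The plan is to specialize the general finite-$p$ expansion \eqref{prob_C_p} to the single-mutant initial condition $p=1/N$ and then simplify the resulting bracket using Lemma~\ref{lem:rhoC_smallp}. Concretely, I start from
\[
\rho_C(p,\delta)=p+N\delta\left(\frac{p(1-p)\Delta\alpha}{4}+2\Bigl[(1-p)\int_0^p yG(y)\,dy+p\int_p^1(1-y)G(y)\,dy\Bigr]\right)+o(N\delta),
\]
and, since $G$ given by \eqref{expression_G} is continuous on $[0,1]$ after the removable singularities are filled in (hence bounded), I invoke Lemma~\ref{lem:rhoC_smallp}. This replaces the bracket by $p\int_0^1(1-y)G(y)\,dy+\mathcal{O}(p^2)$.

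Substituting $p=1/N$ then gives
\[
\rho_C\Bigl(\tfrac1N,\delta\Bigr)=\frac1N+N\delta\left(\frac{1}{N}\Bigl(1-\frac1N\Bigr)\frac{\Delta\alpha}{4}+\frac{2}{N}\int_0^1(1-y)G(y)\,dy+\mathcal{O}(N^{-2})\right)+o(N\delta).
\]
The factor $N$ cancels against the $1/N$ in the leading terms. This leaves $\delta\bigl[\Delta\alpha/4+2\int_0^1(1-y)G\,dy\bigr]$, together with corrections of order $\delta/N$ (from $1-1/N$ and the $\mathcal{O}(p^2)$ remainder) and $o(N\delta)$. Dropping the $\delta/N$ terms as lower order for large $N$ yields \eqref{eq:rhoC_coeff}.

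The main obstacle is the bookkeeping of error terms. The retained correction is of size $\delta$, while the diffusion remainder is only $o(N\delta)$, which is not formally smaller than $\delta$. I will therefore interpret the statement as an approximation, as the "$\approx$" indicates, in the regime $N\delta\ll1$ with $N$ large. I will also note that the $o(N\delta)$ remainder actually arises from the second-order terms $\mathcal{O}((N\delta)^2)$ of the expansions $e^{-u}$ and $1/(1-u)$, evaluated at $p=1/N$. Because the numerator integral is over $[0,p]$, these terms carry an extra factor $p$, so they are $\mathcal{O}(N\delta^2)$, which is negligible relative to $\delta$ when $N\delta\ll1$. Stating this refinement, namely that the remainder is uniformly proportional to $p$, is the one point that needs a brief check beyond citing the lemma.
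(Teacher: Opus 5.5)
Your proposal is correct and follows the paper's own argument: you substitute $p=1/N$ into \eqref{prob_C_p} and use Lemma~\ref{lem:rhoC_smallp} to replace the bracket by $p\int_0^1(1-y)G(y)\,dy+\mathcal{O}(p^2)$. Your further observation is also correct: the diffusion remainder is actually $\mathcal{O}\bigl(p(N\delta)^2\bigr)$, hence $\mathcal{O}(N\delta^2)$ at $p=1/N$. This sharpens the paper's stated $o(N\delta)$, which on its own would not be dominated by the retained $\mathcal{O}(\delta)$ term.
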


Using arguments similar to those used in the proof of Lemma~\ref{lem:rhoC_smallp}, we have the following asymptotic expansions as $p\ll1$:
\begin{subequations}
\begin{align}
p\int_{0}^{1-p} y\,G(y)\,dy+(1-p)\int_{1-p}^{1}(1-y)\,G(y)\,dy&=p\int_{0}^{1} y\,G(y)\,dy+\mathcal{O}(p^{2}),\label{eq:rhoD_smallp}\\
\int_0^1 \min\{y,1-y,p,1-p\}\,G(y)\,dy&=p\int_0^1 G(y)\,dy+\mathcal{O}(p^2).\label{eq:min_expansion}
\end{align}
\end{subequations} 
Inserting these expansions into Eqs.~\eqref{prob_D_p} and \eqref{ratio_p} with $p = 1/N$ leads to the following proposition.
\begin{proposition}
The fixation probability of a single defector mutant in a population of cooperators, and the corresponding ratio of the two fixation probabilities, are approximated as
\begin{subequations}
\begin{align}
\rho_D&\approx\frac1N-\delta\left[\frac{\Delta\alpha}{4}+2\int_0^1 y\,G(y)\,dy\right],\label{eq:rhoD_coeff}\\
\frac{\rho_C}{\rho_D}&\approx1+N\delta\left(\frac{\Delta\alpha}{2}+2\int_0^1 \,G(y)\,dy\right).\label{eq:rhoCrhoD_coeff}
\end{align}
\end{subequations}
\end{proposition}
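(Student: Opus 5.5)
The plan is to substitute the small-$p$ expansions \eqref{eq:rhoD_smallp} and \eqref{eq:min_expansion} into the finite-$p$ formulas \eqref{prob_D_p} and \eqref{ratio_p}, and then set $p=1/N$. This mirrors how Proposition~\ref{prop:rhoC_asymp} follows from Lemma~\ref{lem:rhoC_smallp} and \eqref{prob_C_p}. Before doing so, I will prove the two expansions, since they are stated only "by similar arguments." Throughout, let $M=\sup_{[0,1]}|G|$.

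For \eqref{eq:rhoD_smallp}, write
\[
p\int_0^{1-p}yG\,dy=p\int_0^1 yG\,dy-p\int_{1-p}^1 yG\,dy .
\]
The subtracted term is bounded by $Mp\cdot p$. For the remaining piece,
\[
\Bigl|(1-p)\int_{1-p}^1(1-y)G\,dy\Bigr|\le (1-p)\,\frac{Mp^2}{2}.
\]
Both remainders are therefore $\mathcal O(p^2)$.

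For \eqref{eq:min_expansion}, take $p<1/2$. Then $\min\{y,1-y,p,1-p\}$ equals $y$ on $[0,p]$, equals $p$ on $[p,1-p]$, and equals $1-y$ on $[1-p,1]$. Its difference from the constant $p$ is supported on $[0,p]\cup[1-p,1]$ and has absolute value at most $p$ there. The error is thus at most $2Mp^2=\mathcal O(p^2)$.

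Next, insert these into \eqref{prob_D_p}:
\[
\rho_D(p,\delta)=p-N\delta\Bigl(\tfrac{p(1-p)\Delta\alpha}{4}+2p\int_0^1 yG\,dy+\mathcal O(p^2)\Bigr)+o(N\delta).
\]
Setting $p=1/N$ gives $N\delta\cdot p(1-p)/4=\delta(1-1/N)/4$, which equals $\delta/4$ up to $\mathcal O(\delta/N)$. Likewise $N\delta\cdot 2p\int yG=2\delta\int yG$, and $N\delta\,\mathcal O(p^2)=\mathcal O(\delta/N)$. This yields \eqref{eq:rhoD_coeff}.

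For the ratio, insert \eqref{eq:min_expansion} into \eqref{ratio_p}. The $(2/p)$ prefactor then gives $2\int_0^1G\,dy+\mathcal O(p)$, and $(1-p)\Delta\alpha/2$ becomes $\Delta\alpha/2$ up to $\mathcal O(1/N)$. At $p=1/N$ this produces \eqref{eq:rhoCrhoD_coeff}.

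The main obstacle is controlling the error terms uniformly when $p=1/N$ is tied to $N$. The $o(N\delta)$ remainders in \eqref{prob_D_p} and \eqref{ratio_p} come from the Kimura expansion and are uniform in $p$, because the exponent is bounded by $CN\delta$ uniformly in $x$. For $\rho_D$, however, the $\delta$ correction must be compared with $o(N\delta)$, which may be larger than $\delta$. I will handle this as the paper does for $\rho_C$: the statement is the formal first-order approximation ("$\approx$"), and I will note that a second-order term of size $\mathcal O((N\delta)^2)\cdot p$ arises when the expansion is carried one step further, which is $o(\delta)$ provided $N^2\delta\ll1$. I will also need $G$ bounded on $[0,1]$, which relies on the removable singularities noted after \eqref{expression_G}.
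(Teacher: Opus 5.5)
Your proposal is correct and follows the paper's own route. The paper likewise obtains the proposition by inserting the small-$p$ expansions \eqref{eq:rhoD_smallp} and \eqref{eq:min_expansion} into \eqref{prob_D_p} and \eqref{ratio_p} at $p=1/N$, and your explicit bounds for those two expansions supply what the paper only asserts by analogy with Lemma~\ref{lem:rhoC_smallp}.

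One arithmetic correction to your closing remark: a second-order remainder of size $\mathcal{O}((N\delta)^2)\,p$ is $\mathcal{O}(N\delta^2)$ at $p=1/N$. It is therefore already $o(\delta)$ under the paper's standing assumption $N\delta\ll1$, so the stronger condition $N^2\delta\ll1$ is not needed.
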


\section{\label{AppendixB} Analysis for public goods game}
To analyze the evolutionary dynamics, we first derive the expected quantities in Eq.~\eqref{expression_G}. Using the moments of a binomial random variable with parameters 
$k-1$ or $k$ and $q_{C|C}=\frac{1}{k-1}+\frac{k-2}{k-1}p_C$, we obtain

\begin{equation}\label{public_a}
\langle a_{n+i} \rangle_{C,k-1}=\left\langle\frac{n+i+1}{k+1}b - c\right\rangle_{C,k-1}=\frac{(k-1)q_{C|C}+i+1}{k+1}b - c= \frac{i + 2 + (k-2)p_C}{k+1}b-c,
\end{equation}
for $i=0,1$.
By a similar argument, we have
\begin{subequations}\label{public_aa}
\begin{align}
\langle b_{n+1}\rangle_{C,k-1}&=\frac{2 + (k-2)p_C}{k+1}b,\\
\langle(k-n)a_n\rangle_{C,k}&=  \frac{k(k-2)(1-p_C)}{k-1}\Big[\frac{2 + (k-2)p_C}{k+1}b-c\Big],\\
\langle a_{n} \rangle_{D,k-1}&=
\frac{ 1 + p_C(k - 2)}{k + 1}b-c,\\
\langle b_{n+i}\rangle_{D,k-1}&=
 \frac{i + p_C(k - 2)}{k + 1}b,\\
\langle nb_n\rangle_{D,k}&=
\frac{k(k-2)p_C [1 + (k-2)p_C]}{(k-1)(k+1)}b,
\end{align}
\end{subequations}
for $i=0,1$. 
Inserting these expressions in Eq.~\eqref{expression_G}, we have 
\begin{equation}
G(p_C)=\frac{3(k+3)}{4(k+1)^2}b - \frac{3}{4}c.
\end{equation}
Inserting this expression in Eq.~\eqref{eq:rhoC_coeff}, and performing the required integrations, we obtain
\begin{subequations}
\begin{align}
\rho_C&\approx\frac{1}{N} + \delta \left( \frac{\Delta\alpha}{4} + \frac{3(k+3)}{4(k+1)^2}b - \frac{3}{4}c \right),\\
\rho_D&\approx\frac{1}{N} - \delta \left( \frac{\Delta\alpha}{4} + \frac{3(k+3)}{4(k+1)^2}b - \frac{3}{4}c \right),\\
\frac{\rho_C}{\rho_D}&\approx1 + N\delta \left( \frac{\Delta\alpha}{2} + \frac{3(k+3)}{2(k+1)^2}b - \frac{3}{2}c \right).
\end{align}
\end{subequations}
\section{\label{AppendixC} Analysis for the stag-hunt game}
In this section, we consider the stag-hunt game.
\subsection{\label{appendixC-part1}Fixation probability}
First, we need to evaluate the different payoffs in Eq.~\eqref{expression_G}. We have
\begin{subequations}\label{stag_hunt_a}
\begin{align}
\langle a_{n+1} \rangle_{C,k-1}&= q_{C\mid C}^{\,k-1} b - c,\\
\langle (k-n)a_n\rangle_{C,k}&= -k(1-q_{C\mid C})c,\\
\langle a_{n} \rangle_{D,k-1}&=\langle a_{n} \rangle_{C,k-1}=-c,\\
\langle b_{n+i}\rangle_{D,k-1}&=\langle b_{n+1}\rangle_{C,k-1}=\langle nb_n\rangle_{D,k}=0.
\end{align}
\end{subequations}
By substituting these expected payoffs in \eqref{stag_hunt_a} into \eqref{expression_G} and using \eqref{identities}, we have
\begin{equation}\label{eq:G_staghunt}
G(p_C)=\frac{(3k+4)\left(1+(k-2)p_C\right)^k}{4k(k+1)(k-1)^{k-1}}b-\frac{3}{4}c.
\end{equation}
Inserting this identity in \eqref{eq:rhoC_coeff} and integrating leads to
\begin{subequations}\label{eq:fix_sh_final}
\begin{align}
\rho_C
&\approx\frac1N+\frac{\delta}{4}\Bigg[\Delta\alpha+\;b\,\Phi_C^{\mathrm{SH}}(k)-3c
\Bigg],\\
\rho_D
&\approx\frac1N-\frac{\delta}{4}\Bigg[\Delta\alpha+\;b\,\Phi_D^{\mathrm{SH}}(k)-3c
\Bigg],\\
\frac{\rho_C}{\rho_D}
&\approx1+\frac{N\delta}{2}\Bigg[\Delta\alpha+\;b\,\Phi_R^{\mathrm{SH}}(k)-3c
\Bigg],
\end{align}
\end{subequations}
where 
\begin{align*}
\Phi_C^{\mathrm{SH}}(k)
    &=\frac{2(3k+4)((k-1)^{k+2}+3-k^2)}{k\,(k+1)^2\,(k+2)\,(k-2)^2(k-1)^{\,k-1}},\\
\Phi_D^{\mathrm{SH}}(k)
&=\frac{2(3k+4)((k-1)^{k+1}(k^2-k-3)+1)}{k\,(k+1)^2\,(k+2)\,(k-2)^2(k-1)^{\,k-1}},\\
\Phi_R^{\mathrm{SH}}(k)
&=\frac{(3k+4)((k-1)^{k+1}-1)}{k\,(k+1)^2\,(k-2)(k-1)^{\,k-1}}.
\end{align*}

\subsection{\label{appendixC-part2} $(\Phi_S^{\mathrm{SH}}(k))_{k\geq3}$ is decreasing for $S=C,D,R$}

First, we will show that $(\Phi_C^{\mathrm{SH}}(k))_k$ is decreasing. A direct algebraic rearrangement gives
\begin{equation}\label{eq:Tk_factor}
\Phi_C^{\mathrm{SH}}(k)=2\,U_1(k)\bigl(1-\varepsilon_1(k)\bigr),
\end{equation}
where
\[
U_1(k)=\frac{(3k+4)(k-1)^3}{k(k+1)^2(k+2)(k-2)^2},
\qquad
\varepsilon_1(k)=\frac{k^2-3}{(k-1)^{k+2}}.
\]

\begin{lemma}\label{lem:U_decreasing}
For all integers $k \ge 3$, we have
\begin{equation}\label{eq:U_ratio_bound}
\frac{U_1(k+1)}{U_1(k)}< \frac{k+1}{k+3}
=1-\frac{2}{k+3}.
\end{equation}
\end{lemma}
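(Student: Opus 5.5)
The plan is to reduce the ratio bound to a polynomial inequality by clearing all denominators, which are positive for $k\ge3$, and then check that the resulting integer polynomial is positive.

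First, from the definition of $U_1$,
\[
\frac{U_1(k+1)}{U_1(k)}
=\frac{(3k+7)\,k^3}{k(k+2)(k+3)(k-1)^2}\cdot\frac{k(k+1)^2(k+2)(k-2)^2}{(3k+4)(k-1)^3}
=\frac{(3k+7)\,k^4\,(k+1)(k-2)^2}{(3k+4)(k-1)^5(k+2)(k+3)}.
\]
Multiplying \eqref{eq:U_ratio_bound} by the positive quantity $(3k+4)(k-1)^5(k+2)(k+3)/(k+1)$, the claim becomes equivalent to
\[
(3k+7)\,k^4\,(k-2)^2<(3k+4)(k+2)(k-1)^5 ,\qquad k\ge3 .
\]

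Second, I expand both sides. The left side equals $k^4(3k^3-5k^2-16k+28)$. The right side uses $(3k+4)(k+2)=3k^2+10k+8$ and the binomial expansion of $(k-1)^5$. This step is the main obstacle. The terms $3k^7$ and $-5k^6$ coincide on both sides, so the inequality is not visible from leading-order asymptotics and an exact expansion is unavoidable. Carrying it out, I expect the difference to be
\[
P(k)=\text{RHS}-\text{LHS}=4k^5+2k^4-5k^3-33k^2+30k-8 .
\]
As a sanity check, at $k=3$ the right side is $13\cdot32\cdot5=2080$ and the left side is $16\cdot81=1296$. Their difference, $784$, agrees with $P(3)$.

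Finally, I show $P(k)>0$ for $k\ge3$ by grouping terms. Since $k^2\ge9$, we have $4k^5\ge36k^3$, so $4k^5-5k^3-33k^2\ge31k^3-33k^2=k^2(31k-33)>0$. The remaining part also satisfies $2k^4+30k-8>0$. Hence $P(k)>0$, which gives the strict inequality \eqref{eq:U_ratio_bound}.
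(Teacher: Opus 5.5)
Your proof is correct, but it takes a different route from the paper.

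\textbf{A slip to fix.} The first factor in your displayed product should be
$U_1(k+1)=\frac{(3k+7)k^3}{(k+1)(k+2)^2(k+3)(k-1)^2}$, not $\frac{(3k+7)k^3}{k(k+2)(k+3)(k-1)^2}$. With the factor as you wrote it, the product equals $\frac{(3k+7)k^3(k+1)^2(k-2)^2}{(3k+4)(k-1)^5(k+3)}$, which is not your right-hand side. Your final simplified ratio is nevertheless the correct one. From there, the reduction to $(3k+7)k^4(k-2)^2<(3k+4)(k+2)(k-1)^5$ is correct, the expansion $P(k)=4k^5+2k^4-5k^3-33k^2+30k-8$ is correct, and the positivity argument via $k^2(31k-33)>0$ is correct.

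\textbf{The paper's argument.} It extends $U_1$ to real $x>2$ and writes out $\frac{d}{dx}\log U_1(x)$. It then applies two termwise bounds: $\frac{3}{3x+4}\le\frac{1}{x+1}$, and $\frac{1}{x-1}\le\frac12\bigl(\frac{1}{x-2}+\frac1x\bigr)$ from the convexity of $1/t$. These give $\frac{d}{dx}\log U_1(x)<-\frac{1}{x+1}-\frac{1}{x+2}$, since the leftover piece $\frac{1}{2x}-\frac{1}{2(x-2)}$ is negative. Integrating over $[k,k+1]$ yields exactly $\log\frac{k+1}{k+3}$.

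\textbf{Comparison.} The paper's route needs no polynomial expansion. It handles the first-order cancellation you observed, where both sides behave like $1-\frac{2}{k}$ and the $k^7$ and $k^6$ coefficients agree, by matching the $-2/x$ behaviour of the log-derivative with a little slack. The same template is reused for $U_2$ in the next lemma. Your route is more mechanical and sensitive to arithmetic slips, as the typo shows. However, it is completely elementary and gives the exact difference polynomial, so it can be checked independently, for example by your $k=3$ evaluation.
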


\begin{proof}
Extend $U_1(x)$ to any real $x>2$. Note that
\[
\frac{d}{dx}\log U_1(x)=
\frac{3}{3x+4}+\frac{3}{x-1}-\frac{1}{x}
-\frac{2}{x+1}-\frac{1}{x+2}-\frac{2}{x-2}.
\]
In addition, we have
\[
\frac{3}{3x+4}\le \frac{1}{x+1},
\qquad
\frac{1}{x-1}\le \frac12\left(\frac{1}{x-2}+\frac{1}{x}\right),
\]
where the second inequality follows from the convexity of $t\mapsto 1/t$.
Substituting these inequalities into the derivative yields
\[
\frac{d}{dx}\log U_1(x)
<-\frac{1}{x+1}-\frac{1}{x+2}.
\]
Integrating from $k$ to $k+1$ gives
\[
\log\left(\frac{U_1(k+1)}{U_1(k)}\right)
<-\int_k^{k+1}\left(\frac{1}{x+1}+\frac{1}{x+2}\right)\,dx
=\log\left(\frac{k+1}{k+3}\right).
\]
Exponentiating the last inequality proves the claim.
\end{proof}

\begin{lemma}\label{lem:eps_control}
For all integers $k \ge 4$, the correction factor $\varepsilon_1(k)$ satisfies
\begin{equation}\label{eq:eps_bound}
\frac{1-\varepsilon_1(k+1)}{1-\varepsilon_1(k)}
\le 1+\frac{2}{k+3}.
\end{equation}
\end{lemma}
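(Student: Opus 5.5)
The bound follows by showing that $\varepsilon_1$ is decreasing on $k\ge 4$ and is at most $1/27$ there. I will also use the fact that $\varepsilon_1(k)\in(0,1)$ for $k\ge 3$, since $(k-1)^{k+2}>k^2-3>0$.

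\textbf{Step 1: monotonicity.} Decreasing $\varepsilon_1$ gives $1-\varepsilon_1(k+1)\ge 1-\varepsilon_1(k)>0$. The ratio on the left-hand side of \eqref{eq:eps_bound} is then at least $1$. So we must bound it from above, and the natural estimate is
\[
\frac{1-\varepsilon_1(k+1)}{1-\varepsilon_1(k)}\le \frac{1}{1-\varepsilon_1(k)} .
\]
To prove monotonicity, compare
\[
\frac{\varepsilon_1(k+1)}{\varepsilon_1(k)}=\frac{(k+1)^2-3}{k^2-3}\cdot\frac{(k-1)^{k+2}}{k^{k+3}} .
\]
The second factor is at most $1/k$, because $(k-1)^{k+2}\le k^{k+2}$. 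For $k\ge 4$ the first factor is at most $22/13<2$, since $\frac{(k+1)^2-3}{k^2-3}=1+\frac{2k+1}{k^2-3}$ decreases in $k$. Hence the ratio is below $2/k\le 1/2$, and $\varepsilon_1$ is strictly decreasing for $k\ge4$.

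\textbf{Step 2: an explicit bound.} From Step 1, $\varepsilon_1(k)\le \varepsilon_1(4)=13/3^6=13/729$ for all $k\ge 4$. It suffices to show
\[
\frac{1}{1-\varepsilon_1(k)}\le 1+\frac{2}{k+3},
\]
which is equivalent to $\varepsilon_1(k)\le \frac{2}{k+5}$. At $k=4$ this reads $13/729\le 2/9$, which is true. For general $k\ge4$, we need $(k^2-3)(k+5)\le 2(k-1)^{k+2}$. This follows from $(k-1)^{k+2}\ge 3^{3}(k-1)^{k-1}\ge 27(k-1)^3$ for $k\ge4$, together with $(k^2-3)(k+5)\le 54(k-1)^3$, which is easily checked for $k\ge4$ because both sides are cubic and the right side dominates already at $k=4$: $13\cdot 9=117\le 54\cdot27$.

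\textbf{Main obstacle.} The only delicate point is the crude comparison of the polynomial prefactor with the exponential term $(k-1)^{k+2}$. I will handle it by using $k\ge4$ to extract a constant factor $3^3$ and a cubic $(k-1)^3$ from $(k-1)^{k+2}$, as done above. This is also why the lemma starts at $k=4$. The case $k=3$ must be treated separately when combining with Lemma~\ref{lem:U_decreasing}: there I would compare $\Phi_C^{\mathrm{SH}}(3)$ and $\Phi_C^{\mathrm{SH}}(4)$ numerically.
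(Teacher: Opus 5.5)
Your proof is correct and follows essentially the paper's route: you bound the ratio by $1/(1-\varepsilon_1(k))$ and then show $\varepsilon_1(k)=\mathcal{O}(1/k)$ using $k-1\ge 3$. The paper linearizes via $1/(1-x)\le 1+2x$ and proves $\varepsilon_1(k)\le 1/(k+3)$ from $3^{k+2}\ge k^2(k+3)$, whereas you use the exactly equivalent condition $\varepsilon_1(k)\le 2/(k+5)$. Two remarks:
\begin{itemize}
\item Your Step~1 is unnecessary. The first inequality needs only $\varepsilon_1(k+1)\ge 0$ and $\varepsilon_1(k)<1$, and the bound $\varepsilon_1(k)\le 13/729$ is never used.
\item The comparison $(k^2-3)(k+5)\le 54(k-1)^3$ needs more justification than its value at $k=4$, since two cubics that are ordered at one point need not stay ordered. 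It follows at once from $(k^2-3)(k+5)\le 3k^3$ and $k-1\ge 3k/4$ for $k\ge4$.
\end{itemize}
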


\begin{proof}
For $k\ge4$, we have
\[
\varepsilon_1(k)=\frac{k^2-3}{(k-1)^{k+2}}
\le \frac{k^2}{3^{k+2}}.
\]
On the other hand, by induction we obtain $3^{k+2}\ge k^2(k+3)$ for $k\ge4$.
Therefore, we get
\[
\varepsilon_1(k)\le \frac{1}{k+3},
\]
in particular, $\varepsilon_1(k)\le 1/7<1/2$. Now, using the inequality 
$1/(1-x)\le 1+2x$ for $0\le x\le 1/2$, we obtain
\[
\frac{1-\varepsilon_1(k+1)}{1-\varepsilon_1(k)}
\le \frac{1}{1-\varepsilon_1(k)}
\le 1+2\varepsilon_1(k)
\le 1+\frac{2}{k+3}.
\]
\end{proof}

\begin{proposition}\label{prop:T1_decreasing}
The sequence $(\Phi_C^{\mathrm{SH}}(k))_k$ is strictly decreasing for all integers
$k\ge3$.
\end{proposition}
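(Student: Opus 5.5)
The plan is to use the factorization \eqref{eq:Tk_factor}, $\Phi_C^{\mathrm{SH}}(k)=2U_1(k)\bigl(1-\varepsilon_1(k)\bigr)$, and control the ratio $\Phi_C^{\mathrm{SH}}(k+1)/\Phi_C^{\mathrm{SH}}(k)$ as a product of the two ratios bounded in Lemma~\ref{lem:U_decreasing} and Lemma~\ref{lem:eps_control}. First I would check positivity: for $k\ge3$ we have $(k-1)^{k+2}\ge 2^{5}>k^2-3$, more generally $\varepsilon_1(k)<1$, and $U_1(k)>0$. Hence $\Phi_C^{\mathrm{SH}}(k)>0$, so comparing consecutive terms via their ratio is legitimate.

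For $k\ge4$, I would multiply the two lemmas:
\[
\frac{\Phi_C^{\mathrm{SH}}(k+1)}{\Phi_C^{\mathrm{SH}}(k)}
=\frac{U_1(k+1)}{U_1(k)}\cdot\frac{1-\varepsilon_1(k+1)}{1-\varepsilon_1(k)}
<\Bigl(1-\frac{2}{k+3}\Bigr)\Bigl(1+\frac{2}{k+3}\Bigr)=1-\frac{4}{(k+3)^2}<1 .
\]
The strict inequality comes from Lemma~\ref{lem:U_decreasing}. The second factor is positive, so multiplying the inequalities is valid. This gives $\Phi_C^{\mathrm{SH}}(k+1)<\Phi_C^{\mathrm{SH}}(k)$ for every $k\ge4$.

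The remaining obstacle is the base step $k=3$, since Lemma~\ref{lem:eps_control} only covers $k\ge4$. I would handle it by direct evaluation of the closed form in \eqref{eq:form_of_phi_SH}.
\begin{itemize}
\item At $k=3$: the numerator is $2\cdot13\,(2^5+3-9)=26\cdot26=676$ and the denominator is $3\cdot16\cdot5\cdot1\cdot2^2=960$, so $\Phi_C^{\mathrm{SH}}(3)=676/960\approx0.704$.
\item At $k=4$: the numerator is $2\cdot16\,(3^6+3-16)=32\cdot716=22912$ and the denominator is $4\cdot25\cdot6\cdot4\cdot27=64800$, so $\Phi_C^{\mathrm{SH}}(4)\approx0.354$.
\end{itemize}
Hence $\Phi_C^{\mathrm{SH}}(4)<\Phi_C^{\mathrm{SH}}(3)$. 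As an alternative route for this step, $\varepsilon_1(3)=6/32$ and $\varepsilon_1(4)=13/729$, so the $\varepsilon$-ratio is about $1.21$, while Lemma~\ref{lem:U_decreasing} gives a $U_1$-ratio below $2/3$; their product is again less than $1$. Combining the base case with the inductive step yields strict monotonicity for all $k\ge3$.
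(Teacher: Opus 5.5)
Your proposal is correct and follows the paper's proof step for step. It uses the same factorization and multiplies Lemmas~\ref{lem:U_decreasing} and~\ref{lem:eps_control} to get the bound $1-4/(k+3)^2$ for $k\ge4$. Your base case is also the paper's: $676/960$ and $22912/64800$ reduce to $169/240$ and $716/2025$. One small slip in your positivity check: the chain $(k-1)^{k+2}\ge 2^5>k^2-3$ fails for $k\ge6$. Positivity still holds, since $\varepsilon_1(3)=3/16$ and $\varepsilon_1(k)\le 1/7$ for $k\ge4$ by the argument in Lemma~\ref{lem:eps_control}.
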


\begin{proof}
From Eq.~\eqref{eq:Tk_factor}, we have
\[
\frac{\Phi_C^{\mathrm{SH}}(k+1)}{\Phi_C^{\mathrm{SH}}(k)}
=\frac{U_1(k+1)}{U_1(k)}
\frac{1-\varepsilon_1(k+1)}{1-\varepsilon_1(k)}.
\]
For $k\ge4$, Lemmas~\ref{lem:U_decreasing} and~\ref{lem:eps_control} give
\[
\frac{\Phi_C^{\mathrm{SH}}(k+1)}{\Phi_C^{\mathrm{SH}}(k)}
<\left(1-\frac{2}{k+3}\right)\left(1+\frac{2}{k+3}\right)
=1-\frac{4}{(k+3)^2}<1,
\]
for $k\geq4$.
The remaining initial comparison is direct:
\[
\Phi_C^{\mathrm{SH}}(3)=\frac{169}{240}
>\Phi_C^{\mathrm{SH}}(4)=\frac{716}{2025}.
\]
Thus $(\Phi_C^{\mathrm{SH}}(k))_{k\geq3}$ is strictly decreasing.
\end{proof}

Now, we will prove the corresponding result for $\Phi_D^{\mathrm{SH}}$. Again,
note the factorization
\begin{equation}\label{eq:S1_factor}
\Phi_D^{\mathrm{SH}}(k)=2\,U_2(k)\,\bigl(1+\varepsilon_2(k)\bigr),
\end{equation}
where
\[
U_2(k)=
\frac{(3k+4)(k^2-k-3)(k-1)^2}
{k(k+1)^2(k+2)(k-2)^2},
\qquad
\varepsilon_2(k)=
\frac{1}{(k^2-k-3)(k-1)^{k+1}}.
\]

\begin{lemma}\label{lem:U2_decreasing}
We have $U_2(k+1)<U_2(k)$ for all integers $k\ge4$.
\end{lemma}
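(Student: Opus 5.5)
We prove the inequality by the same logarithmic-derivative comparison used in Lemma~\ref{lem:U_decreasing}. Extend $U_2$ to real $x>2$ with $x^2-x-3>0$, which holds for $x\ge 3$. We then show that $\frac{d}{dx}\log U_2(x)<0$ on $[4,\infty)$; integrating from $k$ to $k+1$ gives $U_2(k+1)<U_2(k)$. The derivative is
\[
\frac{d}{dx}\log U_2(x)=\frac{3}{3x+4}+\frac{2x-1}{x^2-x-3}+\frac{2}{x-1}-\frac1x-\frac{2}{x+1}-\frac{1}{x+2}-\frac{2}{x-2}.
\]

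To show it is negative, bound the positive terms by negative ones, as in Lemma~\ref{lem:U_decreasing}. First, $\frac{3}{3x+4}\le\frac{1}{x+1}$. Second, convexity of $t\mapsto 1/t$ gives $\frac{2}{x-1}\le \frac{1}{x-2}+\frac1x$. After these two substitutions it remains to show
\[
\frac{2x-1}{x^2-x-3}<\frac{1}{x+1}+\frac{1}{x+2}+\frac{1}{x-2}.
\]
The left side behaves like $2/x$ and the right side like $3/x$, so there is slack. Clear the positive denominators: multiply by $(x^2-x-3)(x+1)(x+2)(x-2)$. The inequality becomes a polynomial inequality $P(x)>0$, where $P$ has degree $4$ and leading coefficient $3-2=1$. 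We would expand $P$ explicitly and check $P(x)>0$ for $x\ge4$. One way is to write it in the variable $y=x-4\ge0$ and observe that all coefficients are nonnegative. Alternatively, verify $P(4)>0$ together with monotonicity of $P$.

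The main obstacle is this polynomial check. Strictly speaking it is only needed over each interval $[k,k+1]$ with $k\ge4$, so over $x\ge 4$. If some coefficient in the shifted variable turns out negative, we would fall back on two options. One is a sharper pairing, for instance bounding $\frac{2x-1}{x^2-x-3}$ by $\frac{2}{x-2}$, which holds when $x^2-x-3\ge x^2-\tfrac52x+1$, i.e.\ when $x\ge 8/3$. The other is a direct numerical check for small $k$ combined with the asymptotic argument for large $k$. The case $k=3$ is excluded from the statement, presumably because the inequality fails or is tight there. That case is then treated separately in the subsequent monotonicity proof for $\Phi_D^{\mathrm{SH}}$ by evaluating the closed form of Eq.~\eqref{eq:form_of_phi_SH} at $k=3,4$.
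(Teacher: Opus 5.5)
Your proposal matches the paper's proof: the same two bounds reduce the claim to $\frac{2x-1}{x^2-x-3}<\frac{1}{x+1}+\frac{1}{x+2}+\frac{1}{x-2}$, whose cleared numerator is $P(x)=x^4-2x^3-6x^2+2x+8$, and your shift works because $P(y+4)=y^4+14y^3+66y^2+114y+48$ has all coefficients positive (the paper instead writes $P(x)=\frac{x^3}{2}(x-4)+\frac{x^2}{2}(x^2-12)+2x+8$). One correction to your side remark: $k=3$ is excluded not because the inequality fails there (in fact $U_2(4)=\frac{27}{50}<\frac{13}{20}=U_2(3)$), but because this bounding chain fails near $x=3$, since $P(3)=-13<0$.
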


\begin{proof}
For real $x\geq4$, differentiating $\log U_2(x)$ gives
\[
\frac{d}{dx}\log U_2(x)=
\frac{3}{3x+4}
+\frac{2x-1}{x^2-x-3}
+\frac{2}{x-1}
-\frac{1}{x}
-\frac{2}{x+1}
-\frac{1}{x+2}
-\frac{2}{x-2}.
\]
Using
\[
\frac{3}{3x+4}\le \frac{1}{x+1},
\qquad
\frac{2}{x-1}\le \frac{1}{x}+\frac{1}{x-2},
\]
we obtain
\[
\frac{d}{dx}\log U_2(x)
\le
\frac{2x-1}{x^2-x-3}
-\frac{1}{x+1}
-\frac{1}{x+2}
-\frac{1}{x-2}.
\]
For $x\ge4$, a common-denominator calculation gives
\[
\frac{1}{x-2}+\frac{1}{x+1}+\frac{1}{x+2}
-\frac{2x-1}{x^2-x-3}
=
\frac{\frac{x^3}{2}(x-4)+\frac{x^2}{2}(x^2-12)+2x+8}
{(x-2)(x+1)(x+2)(x^2-x-3)} >0.
\]
Hence $d(\log U_2(x))/dx<0$ on $[4,\infty)$, and $U_2(k)$ is strictly
decreasing for integers $k\ge4$.
\end{proof}

\begin{lemma}\label{lem:eps2_decreasing}
The sequence $\varepsilon_2(k)$ is strictly decreasing for all integers
$k\ge4$.
\end{lemma}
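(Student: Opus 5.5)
We must show that $\varepsilon_2(k)=\bigl[(k^2-k-3)(k-1)^{k+1}\bigr]^{-1}$ is strictly decreasing for integers $k\ge4$. Because $\varepsilon_2(k)>0$, this is the same as showing that the denominator
\[
D(k)=(k^2-k-3)(k-1)^{k+1}
\]
is strictly increasing. We can then argue factor by factor, since both factors are positive and both increase with $k$.

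First, $k^2-k-3$ is positive for $k\ge4$; at $k=4$ it equals $9$. Its forward difference is
\[
\bigl((k+1)^2-(k+1)-3\bigr)-(k^2-k-3)=2k>0,
\]
so this factor is strictly increasing.

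Second, the base $k-1\ge3>1$ increases and the exponent $k+1$ also increases. Hence
\[
(k-1)^{k+1}<k^{k+1}<k^{k+2}.
\]
So $(k-1)^{k+1}$ is strictly increasing and positive.

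The product of two positive, strictly increasing sequences is strictly increasing. Therefore $D(k+1)>D(k)$, and so $\varepsilon_2(k+1)<\varepsilon_2(k)$ for all $k\ge4$.

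The same argument in fact works already from $k=3$, where $k^2-k-3=3>0$. No step here is a real obstacle. The only point needing care is confirming positivity of $k^2-k-3$ on the stated range, so that monotonicity of the product follows from monotonicity of the factors.

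This lemma then combines with Lemma~\ref{lem:U2_decreasing}, but the combination is less direct than for $\Phi_C^{\mathrm{SH}}$. Since $1+\varepsilon_2(k)$ is decreasing, both factors in \eqref{eq:S1_factor} decrease for $k\ge4$. So $\Phi_D^{\mathrm{SH}}(k)$ is strictly decreasing for $k\ge4$ as a product of positive decreasing sequences. The remaining case $k=3$ would be handled by directly comparing $\Phi_D^{\mathrm{SH}}(3)$ with $\Phi_D^{\mathrm{SH}}(4)$.
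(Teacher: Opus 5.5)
Your proof is correct and is essentially the paper's argument. The paper writes the consecutive ratio $\varepsilon_2(k+1)/\varepsilon_2(k)=\frac{k^2-k-3}{k^2+k-3}\left(\frac{k-1}{k}\right)^{k+1}\frac{1}{k}$ and notes that each factor is below $1$, which is the ratio form of your observation that both positive factors of the denominator strictly increase.
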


\begin{proof}
This is a consequence of 
\[
\frac{\varepsilon_2(k+1)}{\varepsilon_2(k)}
=
\frac{k^2-k-3}{k^2+k-3}
\left(\frac{k-1}{k}\right)^{k+1}\frac{1}{k}<1.
\]
\end{proof}

\begin{proposition}\label{prop:T2_decreasing}
The sequence $\Phi_D^{\mathrm{SH}}(k)$ is strictly decreasing for all integers
$k\ge3$.
\end{proposition}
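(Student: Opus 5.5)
We combine the factorization \eqref{eq:S1_factor} with Lemmas~\ref{lem:U2_decreasing} and~\ref{lem:eps2_decreasing}. This handles every step from $k=4$ onward at once, and the remaining comparison $k=3$ to $k=4$ is checked directly, as in Proposition~\ref{prop:T1_decreasing}.

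\emph{Steps with $k\ge4$.} From \eqref{eq:S1_factor},
\[
\frac{\Phi_D^{\mathrm{SH}}(k+1)}{\Phi_D^{\mathrm{SH}}(k)}=\frac{U_2(k+1)}{U_2(k)}\cdot\frac{1+\varepsilon_2(k+1)}{1+\varepsilon_2(k)}.
\]
Lemma~\ref{lem:U2_decreasing} makes the first factor strictly less than one. The quantity $\varepsilon_2(k)$ is positive because $k^2-k-3>0$ for $k\ge3$, and Lemma~\ref{lem:eps2_decreasing} says it is strictly decreasing. Hence $1+\varepsilon_2(k+1)<1+\varepsilon_2(k)$, and the second factor is also below one. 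The product of two positive factors each less than one is less than one. Therefore $\Phi_D^{\mathrm{SH}}(k+1)<\Phi_D^{\mathrm{SH}}(k)$ for every $k\ge4$.

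This case is simpler than the argument for $\Phi_C^{\mathrm{SH}}$. There the correction factor $1-\varepsilon_1$ increases with $k$ and had to be controlled quantitatively. Here both factors move in the favorable direction.

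\emph{The step from $k=3$ to $k=4$.} Both lemmas are stated only for $k\ge4$, so this comparison must be done by evaluating the closed form of $\Phi_D^{\mathrm{SH}}(k)$ directly.
\begin{itemize}
\item For $k=3$: the numerator is $2\cdot13\cdot(2^4\cdot3+1)=26\cdot49=1274$. The denominator is $3\cdot16\cdot5\cdot1\cdot4=960$. So $\Phi_D^{\mathrm{SH}}(3)=1274/960=637/480\approx1.327$.
\item For $k=4$: the numerator is $2\cdot16\cdot(3^5\cdot9+1)=32\cdot2188=70016$. The denominator is $4\cdot25\cdot6\cdot4\cdot27=64800$. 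So $\Phi_D^{\mathrm{SH}}(4)\approx1.0805$.
\end{itemize}
Thus $\Phi_D^{\mathrm{SH}}(3)>\Phi_D^{\mathrm{SH}}(4)$. Combined with the steps for $k\ge4$, the sequence is strictly decreasing for $k\ge3$.

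\emph{Main obstacle.} The only delicate point is this base comparison, which lies outside the range of the lemmas. The arithmetic is routine but must be done carefully. As an independent check, the values are consistent with the asymptotics $\Phi_D^{\mathrm{SH}}(k)\sim 6/k$.
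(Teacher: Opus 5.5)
Your proof is correct and follows the paper's own argument. You use the factorization $\Phi_D^{\mathrm{SH}}(k)=2U_2(k)\bigl(1+\varepsilon_2(k)\bigr)$ with Lemmas~\ref{lem:U2_decreasing} and~\ref{lem:eps2_decreasing} for $k\ge4$, and the direct base comparison $\Phi_D^{\mathrm{SH}}(3)=637/480>\Phi_D^{\mathrm{SH}}(4)=2188/2025$, whose arithmetic you carried out correctly.
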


\begin{proof}
For $k\ge4$, Lemmas~\ref{lem:U2_decreasing} and~\ref{lem:eps2_decreasing}
show that both $U_2(k)$ and $1+\varepsilon_2(k)$ are positive and strictly
decreasing. Hence their product in Eq.~\eqref{eq:S1_factor} is strictly
decreasing for $k\ge4$. The initial comparison is
\[
\Phi_D^{\mathrm{SH}}(3)=\frac{637}{480}
>\Phi_D^{\mathrm{SH}}(4)=\frac{2188}{2025}.
\]
Therefore $\Phi_D^{\mathrm{SH}}(k)$ is strictly decreasing for all $k\ge3$.
\end{proof}

\begin{proposition}\label{prop:T3_decreasing}
The sequence $\Phi_R^{\mathrm{SH}}(k)$ is strictly decreasing for all integers
$k\ge3$.
\end{proposition}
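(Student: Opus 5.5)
The simplest route uses the identity stated in Section~\ref{sec:staghunt}, namely $\Phi_R^{\mathrm{SH}}(k)=\bigl(\Phi_C^{\mathrm{SH}}(k)+\Phi_D^{\mathrm{SH}}(k)\bigr)/2$. This identity is not merely a coincidence of the closed forms. It follows from the general structure of Appendix~\ref{AppendixA}: $\Phi_C$ comes from $2\int_0^1(1-y)G(y)\,dy$, $\Phi_D$ comes from $2\int_0^1 y\,G(y)\,dy$, and $\Phi_R$ comes from $\int_0^1 G(y)\,dy$, so the average relation holds after normalizing the factors $\delta/4$ and $N\delta/2$. I will nonetheless verify it directly from the closed forms in Eq.~\eqref{eq:form_of_phi_SH}, since that is what the proof will rely on.

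For the direct check, put $\Phi_C^{\mathrm{SH}}+\Phi_D^{\mathrm{SH}}$ over the common denominator $k(k+1)^2(k+2)(k-2)^2(k-1)^{k-1}$. The numerator becomes
\[
2(3k+4)\bigl[(k-1)^{k+1}(k-1+k^2-k-3)+4-k^2\bigr]=2(3k+4)(k^2-4)\bigl[(k-1)^{k+1}-1\bigr].
\]
Since $k^2-4=(k-2)(k+2)$, one factor $(k+2)$ and one factor $(k-2)$ cancel. Halving then gives exactly $\Phi_R^{\mathrm{SH}}(k)$.

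With the identity in hand, the conclusion is immediate. Propositions~\ref{prop:T1_decreasing} and~\ref{prop:T2_decreasing} show that both $\Phi_C^{\mathrm{SH}}(k)$ and $\Phi_D^{\mathrm{SH}}(k)$ are strictly decreasing for $k\ge3$. A half-sum of two strictly decreasing sequences is strictly decreasing, so $\Phi_R^{\mathrm{SH}}(k+1)<\Phi_R^{\mathrm{SH}}(k)$ for all $k\ge3$.

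The only real work is the algebraic verification of the numerator factorization above; it is a short computation and I expect no genuine obstacle.

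If an independent argument is preferred, I would factor $\Phi_R^{\mathrm{SH}}(k)=U_3(k)\bigl(1-(k-1)^{-(k+1)}\bigr)$ with
\[
U_3(k)=\frac{(3k+4)(k-1)^2}{k(k+1)^2(k-2)},
\]
and bound $\frac{d}{dx}\log U_3$ as in Lemma~\ref{lem:U_decreasing}. The correction factor would then be handled as in Lemma~\ref{lem:eps_control}. The averaging identity makes this unnecessary.
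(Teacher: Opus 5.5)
Your proof is correct and follows the paper's own argument: the paper also deduces the monotonicity of $\Phi_R^{\mathrm{SH}}$ from the identity $\Phi_R^{\mathrm{SH}}=(\Phi_C^{\mathrm{SH}}+\Phi_D^{\mathrm{SH}})/2$ together with Propositions~\ref{prop:T1_decreasing} and~\ref{prop:T2_decreasing}. Your explicit check of that identity, via the numerator $2(3k+4)(k^2-4)\bigl[(k-1)^{k+1}-1\bigr]$, supplies algebra that the paper only asserts.
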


\begin{proof}
Using the identity
\[
\Phi_R^{\mathrm{SH}}(k)=
\frac{\Phi_C^{\mathrm{SH}}(k)+\Phi_D^{\mathrm{SH}}(k)}{2},
\]
the result follows immediately from Propositions~\ref{prop:T1_decreasing}
and~\ref{prop:T2_decreasing}.
\end{proof}


\section{\label{AppendixD} Analysis for snowdrift game}

In this section, we consider the snowdrift game.
\subsection{\label{AppendixD-part1}Expression of $G$}
Let $n\sim\mathrm{Bin}(k-1,q_{C|C})$. 
Since $n$ is binomial, its probability generating function gives
\[
\langle t^n\rangle_{C,k-1}=\sum_{n=0}^{k-1}\binom{k-1}{n}q_{C|C}^{n}q_{D|C}^{k-1-n}t^{n}
=(q_{D|C}+q_{C|C}t)^{k-1}.
\]
Using the integral identity
\(
\frac{1}{m}=\int_0^1 t^{m-1}\,dt,
\)
for $m\ge 1$, we have
\begin{equation}\label{eq:Elplus1_inv}
\begin{split}
\left\langle\frac{1}{n+1}\right\rangle_{C,k-1}
=\int_0^1 \langle t^n\rangle_{C,k-1}\,dt
&=\int_0^1 (q_{D|C}+q_{C|C}t)^{k-1}\,dt
=\frac{1-q_{D|C}^{k}}{kq_{C|C}}.
\end{split}
\end{equation}
Similarly, we obtain
\begin{subequations}\label{snow-eq1}
\begin{align}
\langle a_n\rangle_{C,k-1}&=b-\frac{c}{k}\frac{1-q_{D|C}^{k}}{q_{C|C}},\\
\langle a_{n+1} \rangle_{C,k-1}&=b-c\,\frac{(k+1)q_{C|C}-1+q_{D|C}^{k+1}}{k(k+1)q_{C|C}^2},\\
\langle b_{n+1} \rangle_{C,k-1}&=\langle b_{n+1}\rangle_{D,k-1}=b,\\
\langle(k-n)a_n \rangle_{C,k}&=kbq_{D|C}-c\frac{q_{D|C}(1-q_{D|C}^k)}{q_{C|C}},\\
\langle a_n\rangle_{D,k-1}&=b-c\frac{1-q_{D|D}^k}{kq_{C|D}},\\
\langle b_n\rangle_{D,k-1}&=b\Big(1-q_{D|D}^{k-1}\Big),\\
\langle n b_n\rangle_{D,k}&=bkq_{C|D}.
\end{align}
\end{subequations}

Substituting the quasi-steady state identities Eq.~\eqref{identities} into these expected payoffs and then inserting the result into Eq.~\eqref{expression_G} gives
\begin{equation}\label{eq:G_snowdrift_affine}
G(y)=G_b(y)\,b-G_c(y)\,c,
\end{equation}
where the benefit coefficient is
\begin{equation}\label{eq:phi_b_snowdrift}
G_b(y)=\frac{3k+4}{4k(k+1)}\,
\frac{\big(k-1-(k-2)y\big)^{k}}{(k-1)^{\,k-1}},
\end{equation}
and the cost coefficient is
\begin{align}\label{eq:phi_c_snowdrift}
G_c(y)
=&\frac{(3k+4)(k-1)}{4k^2(k+1)^2}\frac{2+(k+1)\beta y+(k-1)a^{k+1}(1-y)^{k+1}}{1+\beta y}+\frac{(k-1)(3k^2-3k-2)}{4k^2(k+1)\beta}\frac{1-(1-a y)^k}{y}\nonumber\\
& +\frac{(k-1)(3k-2-2\beta y)}{4k^2(k+1)(1+\beta y)}
\left(1-a^k(1-y)^k\right)-\frac{(k-1)(3k+2)}{4k^2(k+1)}\big(1-(1-a y)^k\big).
\end{align}
Here we introduce the shorthands $\beta=k-2$, $a=\tfrac{k-2}{k-1}$ and $L=\ln(k-1)$. 

\subsection{\label{appendixD-part2}Fixation probability}
Inserting \eqref{eq:phi_c_snowdrift} in \eqref{eq:rhoC_coeff} and integrating the different factors, we obtain 
\begin{subequations}
\begin{align}
\rho_C
&\approx\frac1N+\frac{\delta}{4}\Bigg[\Delta\alpha
+\;b\,\Phi_C^{\mathrm{SD}}(k)\;-\;c\,\Gamma_{C}^{\mathrm{SD}}(k)\Bigg],\\
\rho_D
&\approx\frac1N-\frac{\delta}{4}\Bigg[\Delta\alpha
+\;b\,\Phi_D^{\mathrm{SD}}(k)\;-\;c\,\Gamma_{D}^{\mathrm{SD}}(k)\Bigg],\\
\frac{\rho_C}{\rho_D}
&\approx1+\frac{N\delta}{2}\Bigg[\Delta\alpha
+\;b\,\Phi_R^{\mathrm{SD}}(k)\;-\;c\,\Gamma_{R}^{\mathrm{SD}}(k)\Bigg],
\end{align}
\end{subequations}
where
\begin{subequations}\label{snowdrift-phi}
\begin{align}
\Phi_C^{\mathrm{SD}}(k)
=&\frac{2(3k+4)((k-1)^{k+1}(k^2-k-3)+1)}{k(k-2)^2(k+1)^{2}(k+2)(k-1)^{k-1}},\\
\Phi_D^{\mathrm{SD}}(k)
=&\frac{2(3k+4)((k-1)^{k+2}+3-k^2)}{k(k-2)^2(k+1)^2(k+2)(k-1)^{k-1}},\\
\Phi_R^{\mathrm{SD}}(k)
=&\frac{(3k+4)((k-1)^{k+1}-1)}{k(k-2)(k+1)^2(k-1)^{k-1}}.
\end{align}
\end{subequations}

For the coefficients of $c$, first we need to introduce some special functions.
Note the integral representation
\begin{equation}\label{eq:beta_hypergeom_identity}
\begin{split}
\quad\int_0^1 \frac{t^{r}(1-t)^{s}}{1+\beta t}\,dt
&=\frac{r!s!}{(r+s+1)!}\,{}_2F_1\!\Big(1,r+1;\,r+s+2;\,-\beta\Big),
\end{split}
\end{equation}
where ${}_2F_1$ is the Gauss hypergeometric function. We define the hypergeometric shorthand $H_{m,n}(k):={}_2F_1\!\Big(1,m;n;2-k\Big)$, noting that $2-k=-\beta$. Next, we define auxiliary quantities to handle the integration of terms containing the polynomial $(1-a y)^k$ as
\begin{align}
I_1(k)
&:=\int_0^1\big(1-(1-a y)^k\big)\,dy=
\frac{ (k-1)^k(k^2-2k-1)+1}{(k-1)^k\,(k-2)(k+1)},
\\
I_2(k)
&:=\int_0^1 y\,\big(1-(1-a y)^k\big)\,dy
=
\frac{ (k-1)^k(k^4-k^3-8k^2+8k+6)+2k^2-6}{2 (k-1)^k (k-2)^2 (k+1)(k+2)},
\\
S_k(a)
&:=\int_0^1\frac{1-(1-a y)^k}{y}\,dy
=
\sum_{j=1}^{k}(-1)^{j+1}\binom{k}{j}\frac{a^j}{j}.
\label{eq:Sk_def}
\end{align}
Finally, by collecting all terms in $G_c$, integrating term by term, and using identity~\eqref{eq:beta_hypergeom_identity}, we obtain
\begin{subequations}\label{eq:I1I2_final}
\begin{align}
\Gamma_C^{\mathrm{SD}}(k)=&\frac{2(3k+4)(k-1)}{k^2(k+1)^2}\left[\frac{k+1}{2}-(k-1)\frac{(k-1)L+2-k}{(k-2)^2}+\frac{(k-1)a^{k+1}}{k+3}H_{1,k+4}(k)\right]\nonumber\\
&+\frac{2(k-1)}{k^2(k+1)}\left[\frac{2a^k}{k+2}-1-3k\!\left(\frac{a^k}{k+2} H_{1,k+3}(k)-\frac{(k-1)L+2-k}{(k-2)^2}\right)\right]\label{eq:I1_final}\\
&+\frac{2(k-1)}{k^2(k-2)(k+1)}\Big[(3k^2-3k-2)\,\big(S_k(a)-I_1(k)\big)+(k-2)(3k+2)\,\big(I_2(k)-I_1(k)\big)\Big],\nonumber\\
\Gamma_D^{\mathrm{SD}}(k)
=&\frac{2(3k+4)(k-1)}{k^2(k+1)^2}\left[\frac{k+1}{2}-(k-1)\!\left(\frac{1}{k-2}-\frac{L}{(k-2)^2}\right)+\frac{(k-1)a^{k+1}}{(k+3)(k+2)}H_{2,k+4}(k)\right]\nonumber\\
&+\frac{2(k-1)}{k^2(k+1)}\left[-1+3k\!\left(\frac{1}{k-2}-\frac{L}{(k-2)^2}\right)-a^k\!\left(\frac{H_{2,k+3}(k)}{k+1}+\frac{2(k-2)}{(k+2)(k+3)}\,H_{2,k+4}(k)\right)\right]\nonumber\\
&+\frac{2(k-1)}{k^2(k-2)(k+1)}\Big[(3k^2-3k-2)\,I_1(k)-(k-2)(3k+2)\,I_2(k)\Big].
\end{align}
\end{subequations}
For the last coefficient, we have
\begin{equation}\label{eq:I3_final}
\Gamma_R^{\mathrm{SD}}(k)=\frac{\Gamma_C^{\mathrm{SD}}(k)+\Gamma_D^{\mathrm{SD}}(k)}{2}.
\end{equation}


\begin{remark}\label{snowdrift_coefficient}
The closed forms in \eqref{snowdrift-phi} satisfy
\(
\Phi_C^{\mathrm{SD}}(k)=\Phi_D^{\mathrm{SH}}(k)\)
and
\(\Phi_D^{\mathrm{SD}}(k)=\Phi_C^{\mathrm{SH}}(k)
\)
exactly. Consequently, the monotonicity of the sequences
$\bigl(\Phi_S^{\mathrm{SD}}(k)\bigr)_{k\ge 3}$ follows immediately from the corresponding
monotonicity results for $\bigl(\Phi_S^{\mathrm{SH}}(k)\bigr)_{k\ge 3}$.
\end{remark}

\subsection{\label{appendixD-part3}Large-$k$ asymptotics of $\Gamma_S^{\mathrm{SD}}(k)$}
The dominant leading-order contribution to the cost gradient as $k\to\infty$ is
\[
G_c(y)\approx \frac{3}{4k}\,\frac{1-(1-y)^k}{y}.
\]
Substituting into Eq.~\eqref{eq:I1_final} gives
\[
\Gamma^{\mathrm{SD}}_C \approx \frac{6}{k}\int_0^1 \frac{1-y}{y}\Bigl(1-(1-y)^k\Bigr)\,dy.
\]
Using the substitution $u=1-y$ and the identity
\(
\frac{1-u^k}{1-u}=\sum_{j=0}^{k-1}u^j,
\)
we obtain
\begin{equation}
\Gamma^{\mathrm{SD}}_C \approx \frac{6}{k}\sum_{j=1}^{k}\frac{1}{j+1} \sim \frac{6\ln k}{k}
\end{equation}
as $k\to\infty$. Similarly, we obtain
\begin{equation}
\Gamma^{\mathrm{SD}}_D(k)\sim\frac{6}{k}.
\end{equation}
Consequently, their midpoint evaluates to
\begin{equation}
\Gamma^{\mathrm{SD}}_R(k)\sim\frac{3\ln k}{k}.
\label{eq:Iavg_asymp}
\end{equation}

\section{\label{app:two_intensity}Two selection intensities}
The main text uses a single weak-selection parameter, \(\delta\), for both parts of the update rule. This means that aspiration responsiveness and imitation bias are assumed to operate on the same scale. Here, we give the first-order calculations for the more general case in which these two effects have different intensities. This shows which conclusions depend on the symmetric intensity assumption and which conclusions remain unchanged when the two strengths are separated.

Let \(\delta_A>0\) denote the intensity of aspiration responsiveness, and let \(\delta_I>0\) denote the intensity of imitation based on payoffs. The revision probability is
\begin{equation}\label{eq:two_delta_revision}
\Pr(i \;\text{revises})=
\frac{1}{1+\exp\{\delta_A(\pi_i-\alpha_i)\}},
\end{equation}
while, conditional on revision, the probability that individual \(i\) copies neighbor \(j\in\mathcal N(i)\) is
\begin{equation}\label{eq:two_delta_imitation}
\Pr(i\text{ imitates }j\mid i\text{ revises })=
\frac{\exp(\delta_I\pi_j)}
{\sum_{\ell\in\mathcal N(i)}\exp(\delta_I\pi_\ell)}.
\end{equation}

Assuming $\delta_A \ll 1$ and $\delta_I \ll 1$, we have
\begin{equation}\label{eq:gate_expansion_two_delta}
\frac{1}{1+\exp\{\delta_A(\pi_i-\alpha_i)\}}=\frac12-\frac{\delta_A}{4}(\pi_i-\alpha_i)+\mathcal{O}(\delta_A^2),
\end{equation}
and
\begin{equation}\label{eq:imit_expansion_two_delta}
\frac{\exp(\delta_I\pi_j)}{\sum_{\ell\in\mathcal N(i)}\exp(\delta_I\pi_\ell)}=\frac1k+\frac{\delta_I}{k}\left(\pi_j-\frac1k\sum_{\ell\in\mathcal N(i)}\pi_\ell\right)+\mathcal{O}(\delta_I^2).
\end{equation}
The pair-approximation payoffs from Appendix~\ref{AppendixA-part1} remain unchanged. The frequency of cooperation increases by $1/N$ with probability
\begin{align*}
\Pr\!\left(\Delta p_C = \frac{1}{N}\right) &= \frac{p_{CD}}{2} + \frac{\delta_I (k-1)p_{CD}q_{D\mid D}}{2k}\left(\pi_{C\mid D}-\pi_{D\mid D}\right) \\
&\quad + \frac{\delta_A p_{CD}}{4}\left[\alpha_D - \frac{(k-1)q_{D\mid D}\pi_D^D+(q_{D\mid D}+kq_{C\mid D})\pi_D^C}{k+1} - \frac{\langle n b_n\rangle_{D,k}}{k(k+1)q_{C\mid D}}\right] \\
&\quad + \mathcal{O}(\delta_A^2, \delta_I^2, \delta_A\delta_I),
\end{align*}
and decreases by $1/N$ with probability
\begin{align}
\Pr\!\left(\Delta p_C = -\frac{1}{N}\right) &= \frac{p_{CD}}{2} + \frac{\delta_I (k-1)p_{CD}q_{C\mid C}}{2k}\left(\pi_{D\mid C}-\pi_{C\mid C}\right) \nonumber \\
&\quad + \frac{\delta_A p_{CD}}{4}\left[\alpha_C - \frac{(k-1)q_{C\mid C}\pi_C^C+(q_{C\mid C}+kq_{D\mid C})\pi_C^D}{k+1} - \frac{\langle (k-n)a_n\rangle_{C,k}}{k(k+1)q_{D\mid C}}\right] \nonumber \\
&\quad + \mathcal{O}(\delta_A^2, \delta_I^2, \delta_A\delta_I).
\end{align}
The resulting drift is
\begin{equation}
\dot{p}_C= \frac{p_{CD}}{N} \left[ \delta_A\left(\frac{\Delta\alpha}{4} + F_A(p_C,q_{C\mid C})\right) + \delta_I\,F_I(p_C,q_{C\mid C}) \right] + \mathcal{O}(\delta_A^2,\delta_I^2,\delta_A\delta_I),
\end{equation}
where
\begin{align*}
F_I(p_C,q_{C\mid C}) &= \frac{k-1}{2k}\left[q_{D\mid D}\bigl(\pi_{C\mid D}-\pi_{D\mid D}\bigr)+q_{C\mid C}\bigl(\pi_{C\mid C}-\pi_{D\mid C}\bigr)\right],\\
F_A(p_C,q_{C\mid C}) &= \frac{1}{4}\Bigg[\frac{\langle (k-n)a_n\rangle_{C,k}}{k(k+1)q_{D\mid C}}-\frac{\langle n b_n\rangle_{D,k}}{k(k+1)q_{C\mid D}} \\
&\quad + \frac{(k-1)q_{C\mid C}\pi_C^C+(q_{C\mid C}+kq_{D\mid C})\pi_C^D-(k-1)q_{D\mid D}\pi_D^D-(q_{D\mid D}+kq_{C\mid D})\pi_D^C}{k+1}\Bigg].
\end{align*}

In the quasi-steady state from \eqref{identities}, the term $\delta\mu(p_C)$ becomes
\begin{equation}
M(p_C)=\frac{p_{CD}}{N}\left[\delta_A\left(\frac{\Delta\alpha}{4}+G_A(p_C)\right)+\delta_I G_I(p_C)\right],
\end{equation}
with
\begin{equation}\label{GA}
\begin{aligned}
G_A(p_C) &= \frac{1+(k-2)p_C}{4(k+1)}\langle a_{n+1}\rangle_{C,k-1} + \frac{k-1-(k-2)p_C}{4(k+1)}\langle a_{n}\rangle_{D,k-1} \\
&\quad -\frac{k-1-(k-2)p_C}{4(k+1)}\langle b_{n}\rangle_{D,k-1} - \frac{1+(k-2)p_C}{4(k+1)}\langle b_{n+1}\rangle_{C,k-1} \\
&\quad +\frac{(k-1)\langle (k-n)a_n\rangle_{C,k}}{4k(k+1)(k-2)(1-p_C)} - \frac{(k-1)\langle n b_n\rangle_{D,k}}{4k(k+1)(k-2)p_C},
\end{aligned}
\end{equation}
and
\begin{equation}\label{GI}
\begin{aligned}
G_I(p_C) &= \frac{(k+2)\bigl(1+(k-2)p_C\bigr)}{2k(k+1)}\langle a_{n+1}\rangle_{C,k-1} + \frac{k^2-k-1-(k^2-k-2)p_C}{2k(k+1)}\langle a_{n}\rangle_{D,k-1} \\
&\quad +\frac{k-1-(k-2)p_C}{2k(k+1)}\langle a_{n}\rangle_{C,k-1} - \frac{1+(k-2)p_C}{2k(k+1)}\langle b_{n+1}\rangle_{D,k-1} \\
&\quad -\frac{(k+2)\bigl(k-1-(k-2)p_C\bigr)}{2k(k+1)}\langle b_{n}\rangle_{D,k-1} - \frac{1+(k^2-k-2)p_C}{2k(k+1)}\langle b_{n+1}\rangle_{C,k-1}.
\end{aligned}
\end{equation}
Following the same logic as in Appendix \ref{AppendixA}, and assuming $N\delta_A \ll 1$ and $N\delta_I \ll 1$, we obtain the following approximations:
\begin{subequations}\label{eq:rho_two_delta}
\begin{align}
\rho_C &\approx \frac{1}{N} + \delta_A\left[\frac{\Delta\alpha}{4}+2\int_0^1 (1-y)G_A(y)\,dy\right] + \delta_I\left[2\int_0^1 (1-y)G_I(y)\,dy\right], \label{eq:rhoC_two_delta}\\
\rho_D &\approx \frac{1}{N} - \delta_A\left[\frac{\Delta\alpha}{4}+2\int_0^1 yG_A(y)\,dy\right] - \delta_I\left[2\int_0^1 yG_I(y)\,dy\right], \label{eq:rhoD_two_delta}\\
\frac{\rho_C}{\rho_D} &\approx 1 + N\left[\delta_A\left(\frac{\Delta\alpha}{2}+2\int_0^1 G_A(y)\,dy\right) + \delta_I\left(2\int_0^1 G_I(y)\,dy\right)\right]. \label{eq:ratio_two_delta}
\end{align}
\end{subequations}

\subsection{\label{appendixE-part1}Public goods game}

For the public goods game, we find
\begin{equation}
G_A = \frac{b}{4(k+1)}-\frac{c}{4}, \qquad G_I = \frac{(k+4)b}{2(k+1)^2}-\frac{c}{2}. \label{eq:PGG_GA_GI}
\end{equation}
Substituting these into Eq.~\eqref{eq:rho_two_delta} and integrating yields
\begin{subequations}
\begin{align}
\rho_C &\approx \frac{1}{N} + \frac{\delta_A}{4}\left(\Delta\alpha+\frac{b}{k+1}-c\right) + \frac{\delta_I}{4}\left(\frac{2(k+4)}{(k+1)^2}b-2c\right), \label{eq:PGG_rhoC_two_delta}\\
\rho_D &\approx \frac{1}{N} - \frac{\delta_A}{4}\left(\Delta\alpha+\frac{b}{k+1}-c\right) - \frac{\delta_I}{4}\left(\frac{2(k+4)}{(k+1)^2}b-2c\right), \label{eq:PGG_rhoD_two_delta}\\
\frac{\rho_C}{\rho_D} &\approx 1 + \frac{N}{2}\left[\delta_A\left(\Delta\alpha+\frac{b}{k+1}-c\right) + 2\delta_I\left(\frac{k+4}{(k+1)^2}b-c\right)\right]. \label{eq:PGG_ratio_two_delta}
\end{align}
\end{subequations}
The success criteria are equivalent and given by
\begin{equation}
\Delta\alpha + \frac{k+1+2\lambda(k+4)}{(k+1)^2}b > (1+2\lambda)c, \label{eq:PGG_condition_two_delta}
\end{equation}
where $\lambda=\delta_I/\delta_A$. If $\delta_A = \delta_I = \delta$, this reduces to the condition 
\[
\Delta\alpha + 3\left[\frac{k+3}{(k+1)^2}b-c\right] > 0.
\]
In the limit $k \to \infty$, the benefit term vanishes and Eq.~\eqref{eq:PGG_condition_two_delta} becomes
\begin{equation}
\Delta\alpha > c\left(1+2\lambda\right). \label{eq:PGG_dense_threshold_two_delta}
\end{equation}
Thus, the threshold $\Delta\alpha>3c$ corresponds to the symmetric case $\lambda=1$.
\subsection{\label{appendixE-part2}Stag-hunt game}
For the stag-hunt game, we obtain
\begin{align*}
G_A(y)&=\frac{(1+(k-2)y)^k}{4(k+1)(k-1)^{k-1}}\,b-\frac{c}{4},\\
G_I(y)&=\frac{(k+2)(1+(k-2)y)^k}{2k(k+1)(k-1)^{k-1}}\,b-\frac{c}{2}.
\end{align*}
Substituting these expressions in \eqref{eq:rho_two_delta}, the fixation probabilities will become
\begin{subequations}
\begin{align}
\rho_C&\approx\frac1N+\frac{\delta_A}{4}\left[\Delta\alpha+b\Phi_{C,A}^{\mathrm{SH}}(k)-c\right]+\frac{\delta_I}{4}\left[b\Phi_{C,I}^{\mathrm{SH}}(k)-2c\right],\label{eq:SH_rhoC_two_delta}\\
\rho_D&\approx\frac1N-\frac{\delta_A}{4}\left[\Delta\alpha+b\Phi_{D,A}^{\mathrm{SH}}(k)-c\right]-\frac{\delta_I}{4}\left[b\Phi_{D,I}^{\mathrm{SH}}(k)-2c\right],\label{eq:SH_rhoD_two_delta}\\
\frac{\rho_C}{\rho_D}&\approx1+\frac{N}{2}\Big\{\delta_A\left[\Delta\alpha+b\Phi_{R,A}^{\mathrm{SH}}(k)-c\right]+\delta_I\left[b\Phi_{R,I}^{\mathrm{SH}}(k)-2c\right]
\Big\},\label{eq:SH_ratio_two_delta}
\end{align}
\end{subequations}
where 
\begin{equation}
\Phi_{S,A}^{\mathrm{SH}}(k)=\frac{k}{3k+4}\Phi_S^{\mathrm{SH}}(k),\qquad\Phi_{S,I}^{\mathrm{SH}}(k)=\frac{2(k+2)}{3k+4}\Phi_S^{\mathrm{SH}}(k).\label{eq:SH_split_phi}
\end{equation}
for each \(S\in\{C,D,R\}\). The corresponding success condition has the form
\begin{equation}
\Delta\alpha+b\frac{k+2\lambda(k+2)}{3k+4}\Phi_S^{\mathrm{SH}}(k)>c(1+2\lambda).\label{eq:SH_condition_two_delta}
\end{equation}
Since all stag-hunt benefit coefficients vanish as \(k\to\infty\), large-degree conditions are equivalent and given by 
\begin{equation}
\Delta\alpha>c(1+2\lambda).\label{eq:SH_dense_threshold_two_delta}
\end{equation}
This again gives \(\Delta\alpha>3c\) when \(\lambda=1\).

If \(\Delta\alpha/c<1+2\lambda\), the critical benefit-to-cost thresholds are
\begin{equation}
\left(\frac{b}{c}\right)_{S}^{*}=\frac{(3k+4)(1+2\lambda-\Delta\alpha/c)}{(k+2\lambda(k+2))\Phi_S^{\mathrm{SH}}(k)},\label{eq:SH_threshold_two_delta}
\end{equation}
for $S\in\{C,D,R\}$.
Using the large-\(k\) asymptotics of the one-parameter coefficients, we obtain
\begin{align}
\left(\frac{b}{c}\right)_{\rho_C>1/N}^{*}&\sim\frac{1+2\lambda-\Delta\alpha/c}{2(1+2\lambda)}k^2,\label{eq:SH_asymp_C_two_delta}\\
\left(\frac{b}{c}\right)_{\rho_D<1/N}^{*}&\sim\frac{1+2\lambda-\Delta\alpha/c}{2(1+2\lambda)}k,\label{eq:SH_asymp_D_two_delta}\\
\left(\frac{b}{c}\right)_{\rho_C>\rho_D}^{*}&\sim\frac{1+2\lambda-\Delta\alpha/c}{1+2\lambda}k.\label{eq:SH_asymp_R_two_delta}
\end{align}
For \(\lambda=1\), these reduce to
\[
(3-\Delta\alpha/c)\frac{k^2}{6},\qquad(3-\Delta\alpha/c)\frac{k}{6},\qquad(3-\Delta\alpha/c)\frac{k}{3},
\]
as in the one-parameter model.

\subsection{\label{appendixE-part3}Snowdrift game}
Now, consider the snowdrift game.  After substituting the expectation identities \eqref{eq:Elplus1_inv} and \eqref{snow-eq1} into \(G_A\) and \(G_I\) given by \eqref{GA} and \eqref{GI}, we obtain the decompositions
\begin{equation}
G_A(y)=G_{b,A}(y)b-G_{c,A}(y)c
\end{equation}
and
\begin{equation}
G_I(y)=G_{b,I}(y)b-G_{c,I}(y)c.
\end{equation}
The benefit coefficients are
\begin{equation*}
\begin{aligned}
G_{b,A}(y)&=\frac{k-1}{4(k+1)}\left(\frac{k-1-(k-2)y}{k-1}\right)^k,\\
G_{b,I}(y)&=\frac{(k+2)(k-1)}{2k(k+1)}\left(\frac{k-1-(k-2)y}{k-1}\right)^k.
\end{aligned}
\end{equation*}
while the cost coefficients are
\begin{equation*}
\begin{aligned}
G_{c,A}(y)=&\frac{(k-1)\left[(k^2-1)+(k-2)(k^2+1)y\right]}{4k(k-2)y(k+1)^2\left[1+(k-2)y\right]}-\frac{(k-1)^2}{4k(k+1)(k-2)y}\left(1-\frac{k-2}{k-1}y\right)^{k+1}\\
&-\frac{(k-1)\left[3+(k-2)y\right]}{4k(k+1)^2\left[1+(k-2)y\right]}\left(\frac{(k-2)(1-y)}{k-1}\right)^k,\\
G_{c,I}(y)=&\frac{(k-1)\left[(k+1)(k^2-k-1)+(k-2)(k-1)(k^2+k-1)y\right]}{2k^2(k-2)y(k+1)^2\left[1+(k-2)y\right]}\\
&-\frac{(k-1)\left[3+(k-2)y\right]}{2k^2(k+1)^2\left[1+(k-2)y\right]}\left(\frac{(k-2)(1-y)}{k-1}\right)^k\\
&-\frac{(k-1)\left[k^2-k-1-(k-2)(k+1)y\right]}{2k^2(k+1)(k-2)y}\left(1-\frac{k-2}{k-1}y\right)^k.
\end{aligned}
\end{equation*}
Substituting these expressions in \eqref{eq:rho_two_delta}, the fixation probabilities will become
\begin{subequations}
\begin{align}
\rho_C&\approx\frac1N+\frac{\delta_A}{4}\left[\Delta\alpha+b\Phi_{C,A}^{\mathrm{SD}}(k)-c\Gamma^{\mathrm{SD}}_{C,A}\right]+\frac{\delta_I}{4}\left[b\Phi_{C,I}^{\mathrm{SD}}(k)-c\Gamma^{\mathrm{SD}}_{C,I}\right],\label{eq:SD_rhoC_two_delta}\\
\rho_D&\approx\frac1N-\frac{\delta_A}{4}\left[\Delta\alpha+b\Phi_{D,A}^{\mathrm{SD}}(k)-c\Gamma^{\mathrm{SD}}_{D,A}\right]-\frac{\delta_I}{4}\left[b\Phi_{D,I}^{\mathrm{SD}}(k)-c\Gamma^{\mathrm{SD}}_{D,I}\right],\label{eq:SD_rhoD_two_delta}\\
\frac{\rho_C}{\rho_D}&\approx1+\frac{N}{2}\Big\{\delta_A\left[\Delta\alpha+b\Phi_{R,A}^{\mathrm{SD}}(k)-c\Gamma^{\mathrm{SD}}_{R,A}\right]+\delta_I\left[b\Phi_{R,I}^{\mathrm{SD}}(k)-c\Gamma^{\mathrm{SD}}_{R,I}\right]
\Big\},\label{eq:SD_ratio_two_delta}
\end{align}
\end{subequations}
where 
\begin{equation}
\Phi_{S,A}^{\mathrm{SD}}(k)=\frac{k}{3k+4}\Phi_S^{\mathrm{SD}}(k),\qquad\Phi_{S,I}^{\mathrm{SD}}(k)=\frac{2(k+2)}{3k+4}\Phi_S^{\mathrm{SD}}(k).\label{eq:SD_split_phi}
\end{equation}
for each \(S\in\{C,D,R\}\).

The split cost coefficients decompose each original snowdrift cost coefficient
\(\Gamma_S^{\mathrm{SD}}(k)\), with \(S\in\{C,D,R\}\), into an aspiration part and an imitation part:
\begin{equation}
\Gamma_S^{\mathrm{SD}}(k)
=
\Gamma_{S,A}^{\mathrm{SD}}(k)
+
\Gamma_{S,I}^{\mathrm{SD}}(k).
\end{equation}
Equivalently, we write
\begin{equation}
\Gamma_{S,A}^{\mathrm{SD}}(k)
=
\theta_{S,A}^{\mathrm{SD}}(k)\Gamma_S^{\mathrm{SD}}(k),
\qquad
\Gamma_{S,I}^{\mathrm{SD}}(k)
=
\theta_{S,I}^{\mathrm{SD}}(k)\Gamma_S^{\mathrm{SD}}(k),
\end{equation}
where
\[
0\le \theta_{S,A}^{\mathrm{SD}}(k),\theta_{S,I}^{\mathrm{SD}}(k)\le 1,
\qquad
\theta_{S,A}^{\mathrm{SD}}(k)+\theta_{S,I}^{\mathrm{SD}}(k)=1.
\]
Unlike the benefit coefficients, these weights depend on \(S\) in general. In particular,
\(\theta_{C,A}^{\mathrm{SD}}(k)\), \(\theta_{D,A}^{\mathrm{SD}}(k)\), and
\(\theta_{R,A}^{\mathrm{SD}}(k)\) need not be equal for finite \(k\).

Under symmetric aspirations, \(\Delta\alpha=0\), the critical benefit-to-cost thresholds become
\begin{align*}
\left(\frac{b}{c}\right)_{\rho_C>1/N}^{*}&=\frac{\Gamma_{C,A}^{\mathrm{SD}}(k)+\lambda\Gamma_{C,I}^{\mathrm{SD}}(k)}{\Phi_{C,A}^{\mathrm{SD}}(k)+\lambda\Phi_{C,I}^{\mathrm{SD}}(k)},\\
\left(\frac{b}{c}\right)_{\rho_D<1/N}^{*}&=\frac{\Gamma_{D,A}^{\mathrm{SD}}(k)+\lambda\Gamma_{D,I}^{\mathrm{SD}}(k)}{\Phi_{D,A}^{\mathrm{SD}}(k)+\lambda\Phi_{D,I}^{\mathrm{SD}}(k)},\\
\left(\frac{b}{c}\right)_{\rho_C>\rho_D}^{*}&=\frac{\Gamma_{R,A}^{\mathrm{SD}}(k)+\lambda\Gamma_{R,I}^{\mathrm{SD}}(k)}{\Phi_{R,A}^{\mathrm{SD}}(k)+\lambda\Phi_{R,I}^{\mathrm{SD}}(k)}.
\end{align*}

The large-degree conclusion is unchanged. In the snowdrift game, both the benefit and cost weights vanish as \(k\to\infty\). Hence,
\begin{equation}
\rho_C^{\mathrm{SD}}\approx\frac1N+\frac{\delta_A\Delta\alpha}{4},
\qquad
\rho_D^{\mathrm{SD}}\approx\frac1N-\frac{\delta_A\Delta\alpha}{4},
\qquad
k\to\infty.
\label{eq:SD_dense_two_delta}
\end{equation}
Thus, for \(\delta_A>0\), the large-degree success condition remains
\begin{equation}
\Delta\alpha>0.
\label{eq:SD_dense_condition_two_delta}
\end{equation}

\subsection{Summary}
In summary, allowing aspiration responsiveness and payoff-biased
imitation to have different intensities changes the finite-degree
success conditions through the ratio
\(\lambda=\delta_I/\delta_A\), but it preserves the qualitative
separation between degree-dependent payoff effects and the
degree-independent aspiration gap. In the public goods and stag-hunt
games, the benefit contributions vanish in the sparse large-\(k\)
limit while residual cost terms remain, leading to the condition
\[
\Delta\alpha>c(1+2\lambda).
\]
In the snowdrift game, both the benefit and cost contributions vanish
in this limit, so the success condition reduces to
\[
\Delta\alpha>0
\]
when \(\delta_A>0\). Thus, the symmetric model
\(\delta_A=\delta_I\) corresponds to the special case \(\lambda=1\),
whereas the dilution mechanism itself does not depend on this symmetry.


\bibliographystyle{unsrt}

\bibliography{bibliography}

\end{document}